\patchcmd\WF@putfigmaybe{\lower\intextsep}{}{}{\fail}%
\pgfplotsset{compat=1.18}
\newcolumntype{L}[1]{>{\raggedright\let\newline\\\arraybackslash\hspace{0pt}}m{#1}}
\newcolumntype{C}[1]{>{\centering\let\newline\\\arraybackslash\hspace{0pt}}m{#1}}
\newcolumntype{R}[1]{>{\raggedleft\let\newline\\\arraybackslash\hspace{0pt}}m{#1}}
\DeclareMathOperator{\rank}{rank}
\DeclareMathOperator{\supp}{supp}
\theoremstyle{plain}
\newtheorem{theorem}{Theorem}
\newtheorem{example}{Example}
\newtheorem{definition}{Definition}
\newtheorem{observation}[theorem]{Observation}
\patchcmd{\SOUL@ulunderline}{\dimen@}{\SOUL@dimen}{}{}
\patchcmd{\SOUL@ulunderline}{\dimen@}{\SOUL@dimen}{}{}
\patchcmd{\SOUL@ulunderline}{\dimen@}{\SOUL@dimen}{}{}
\newdimen\SOUL@dimen
\newcommand{\new}[1]{#1}
\date{\today}
\begin{document}

\title[Candidate Monotonicity and Proportionality for Lotteries and Non-Resolute Rules]{Candidate Monotonicity and Proportionality for Lotteries and Non-Resolute Rules}


\author[1]{\fnm{Jannik} \sur{Peters}}\email{peters@nus.edu.sg}

\affil[1]{\orgdiv{School of Computing}, \orgname{National University of Singapore}, \orgaddress{\country{Singapore}}}


\abstract{We study the problem of designing multiwinner voting rules that are candidate monotone and proportional. We show that the set of committees satisfying the proportionality axiom of proportionality for solid coalitions is candidate monotone. We further show that Phragmén's Ordered Rule can be turned into a candidate monotone probabilistic rule which randomizes over committees satisfying proportionality for solid coalitions.}

\keywords{Computational Social Choice, Multiwinner Voting, Monotonicity}



\maketitle

\section{Proportionality and Monotonicity}
Proportional multiwinner voting rules are among the most well studied in computational social choice. Among them, the \emph{Single Transferable Vote (STV)} is used in several countries, e.g., Australia, Malta, or Scotland, to elect governing bodies. One of the main reasons STV can be considered proportional is its satisfaction of the \emph{proportionality for solid coalitions (PSC)} axiom \citep{Dumm84a, Tide95a}. In essence, for ordinal preferences lists PSC requires that every group of voters listing the same block of candidates (in potentially different permutations) at the start of their preference list, should get proportionally many candidates from this block. For instance, if $5$ candidates were to be elected and $40\%$ of the voters have the same three candidates in their top three, then two out of these three candidates need to be chosen in the end.\footnote{This is slightly simplifying things, the commonly used variants of STV usually work with a ``quota'' that is slightly lower, the so-called Droop quota $\lceil \frac{n}{k + 1}\rceil$ with $n$ being the number of voters and $k$ the number of candidates to be elected. For ease of exposition, we focus on the Hare quota $\frac{n}{k}$ instead. The results also hold for the Droop quota.} One of the major flaws of STV is its failure of \emph{candidate monotonicity}: strengthening a winning candidate can make this candidate stop winning \citep{Wood97a, McGr23a}. Interestingly enough, whether this flaw is a necessity is an open question. Namely, it is unknown whether there is any voting rule satisfying PSC together with candidate monotonicity \citep{Wood97a, AzLe20a} with most known proportional voting rules failing candidate monotonicity already for very small \new{instance sizes} \citep{AzLe20a}. Despite \citeauthor{Wood97a} already writing that ``a major unsolved problem is whether there exist rules that retain the important political features of STV and are also more monotonic'' \citep[page~1]{Wood97a} no such rule has been found or proven to be nonexistent.

\subsubsection*{Our Contribution} In this paper, we tackle this open question from two slightly weaker angles. Firstly, we study \emph{non-resolute} voting rules: rules that do not just output one committee, but a whole set of committees. For this setting, we show that the rule outputting all committees satisfying PSC is indeed candidate monotone. Secondly, we study \emph{probabilistic} voting rules, i.e., rules selecting a lottery over committees. For this, we show that a variant of Phragmén's Ordered Rule \citep{Jans16a} can be adapted into a candidate monotone lottery over committees satisfying PSC. 

\subsubsection*{Related Work}
Our work is situated in the area of proportional multiwinner voting with ranked preferences, in particular in the axiomatic study of multiwinner voting rules. This study dates back to the work of eminent philosopher \citet{Dumm84a} and his development of PSC. \citet{AzLe20a} were the first to give slightly improved results for monotonicity and proportionality, showing that their Expanding Approvals Rule (EAR) satisfies a weaker form of monotonicity together with a generalization of PSC in the setting of weak ranked preferences in which ties are allowed. The same generalization of PSC together with a form of monotonicity they call ``indifference monotonicity'' is also satisfied by the generalization of STV by \citet{DePe24a}. \citet{McGr23a} experimentally studied monotonicity violations in a real-world dataset consisting of Scottish local elections and found that out of 1079 elections 62 possessed some kind of ``monotonicity anomaly''. In particular, in $23$ elections STV failed the version of candidate monotonicity studied by us.

PSC was further investigated by \citet{BrPe23a} who criticized it for being a relatively weak axiom and introduced stronger proportionality axioms still satisfied by EAR. Further, \citet{AzLe21a} introduced PSC generalizations for the context of participatory budgeting \citep{PPS21a} and in \citet{AzLe22a} provided a characterization of PSC in the standard ordinal preference setting. Finally, recently \citet{ALR24a} studied committee monotonicity, i.e., the selection of a ranking of candidates, together with PSC and showed that a rule satisfying both PSC and committee monotonicity exists.

In the setting of approval-based multiwinner voting, monotonicity is significantly easier to obtain \citep{SaFi17a} with most rules being monotone in the sense that voters additionally approving an already winning candidate cannot hurt that candidate.\footnote{In fact, all voting rules studied in the recent book of \citeauthor{LaSk22a} on approval-based multiwinner voting satisfy candidate monotonicity \citep[Table~3.1]{LaSk22a}} However, it can for instance happen that adding voters only approving a single candidate can make this candidate not be chosen anymore (see \citet{BFJ+24a} for examples of this in the context of participatory budgeting). 

Finally, our results are also situated in the area of probabilistic social choice. Here, we are closely related to recent works focusing on \emph{best of both world} guarantees, i.e., the design of lotteries satisfying desirable properties both ex-ante and ex-post. In particular, recent work focusing on the design of randomized apportionment schemes \citep{GPP24a, JGU+24a} and on achieving ex-post and ex-ante proportionality guarantees in approval-based multiwinner voting and participatory budgeting \citep{ALS+23a, ALS+24a, SuVo24a} are close to our section on probabilistic voting rules. 

\section{Preliminaries}

\subsubsection*{Notation}
For $t \in \mathbb{N}$ let $[t]\coloneqq \{1, \dots, t\}$. We are given a set $N = [n]$ of \emph{voters} and a set $C = \{c_1, \dots, c_m\}$ of \emph{candidates}. Each voter $i \in N$ possesses a strict linear ranking $\succ_i \subseteq C \times C$ over all candidates. Further, we are given a \emph{committee size} $k \le m$. Together, the voters, candidates, preferences, and committee size form an instance $\mathcal{I} = (N, C, (\succ_i)_{i \in N}, k)$. A \emph{committee} is a subset $W \subseteq C$ of size $\lvert W\rvert = k$. Let $C_k = \{W \subseteq C \colon \lvert W \rvert = k\}$ be the set of all committees. A \emph{rule} $r$ is a function mapping an instance $\mathcal{I}$ to a non-empty set of committees $r(\mathcal{I}) \subseteq C_k$. If $\lvert r(\mathcal{I})\rvert = 1$ for all instances $\mathcal{I}$ the rule is said to be resolute.

\sloppy For probabilistic rules, we mostly follow the notation of \citet{ALS+23a}. We say that a fractional committee is an assignment of probabilities ${(p_1, \dots, p_m) \in [0,1]^m}$ for each candidate with $\sum_{i \in [m]} p_i = k$ with a fractional committee rule mapping instances to fractional committees. We let $\Delta(C_k)$ be the set of all probability distributions over $C_k$. A probabilistic rule $r$ then maps an instance $\mathcal{I}$ to a probability distribution $r(\mathcal{I}) \in \Delta(C_k)$. Each probabilistic rule also has a natural non-resolute counterpart $\supp(r(\mathcal{I}))$, i.e., the set of committees being selected with non-zero probability (or the support of $r(\mathcal{I})$). Further, each probabilistic rule $r$ implements a fractional committee voting rule, with the probability of each candidate being the probability of this candidate being on the committee selected by $r$. On the other hand, there can be several probabilistic voting rules implementing a fractional committee voting rule. 
\subsubsection*{Proportionality}
 Following the seminal work of \citet{Dumm84a} we define the axiom of \emph{proportionality for solid coalitions (PSC)}. For this, we first need to define the eponymous solid coalitions. 
\begin{definition}[Solid Coalition]
    A group $N' \subseteq N$ of voters forms a \emph{solid coalition} over a group $C' \subseteq C$ of candidates if 
    \[
    c' \succ_i c \text{  for all } i \in N', c' \in C', \text{ and } c \in C\setminus C'.
    \]
\end{definition}
In essence, PSC now requires that every solid coalition is represented proportionally. 
\begin{definition}[Proportionality for Solid Coalitions \citep{Dumm84a}]
    A committee $W$ satisfies \emph{proportionality for solid coalitions (PSC)} if there is no group $N' \subseteq N$ of voters, $C' \subseteq C$ of candidates and $\ell \in [k]$ such that 
    \begin{itemize}
        \item[i.)] $N'$ forms a solid coalition over $C'$
        \item[ii.)] $\lvert N'\rvert \ge \ell \frac{n}{k}$
        \item[iii.)] $\lvert W \cap C'\rvert < \min(\lvert C'\rvert, \ell)$. 
    \end{itemize}
\end{definition}
In other words, if $C'$ forms a prefix of the preferences of all voters in $N'$ then $W$ needs to contain a large enough ``proportional share'' of them to satisfy $N'$. 

There are several rules satisfying PSC, among them the famous \emph{Single Transferable Vote (STV)} \citep{Tide95a} and the \emph{Expanding Approvals Rule (EAR)} \citep{AzLe20a}. 
Note that due to our definition of rules the set of all committees satisfying PSC is also a (non-resolute) rule.

\begin{example}
    Consider the following instance with four voters, four candidates, and $k = 2$:
    \begin{itemize}
        \item $c_1 \succ c_2 \succ c_3 \succ c_4$
        \item $c_2 \succ c_1 \succ c_3 \succ c_4$
        \item $c_3 \succ c_4 \succ c_1 \succ c_2$
        \item $c_4 \succ c_2 \succ c_1 \succ c_3$.
    \end{itemize}
    This instance contains one non-trivial solid coalition large enough to be able to demand a candidate, namely the first and second voter over $\{c_1, c_2\}$ (and technically also over $\{c_1, c_2, c_3\}$). Thus, any committee containing at least one of $c_1$ or $c_2$ satisfies PSC. Now if the third voter would swap candidate $c_4$ up, the third and fourth voter would form a solid coalition over $\{c_4\}$. Therefore, in this new instance the only two PSC committees are $\{c_1, c_4\}$ and $\{c_2, c_4\}$
\end{example}
\vspace{-0.7em}
\subsubsection*{Monotonicity} For our definition of (candidate) monotonicity for non-resolute voting rules, we follow \citet{EFSS17a}.
\begin{definition}[Candidate Monotonicity \citep{EFSS17a}]
    A rule $r$ satisfies \emph{candidate monotonicity} if for every instance $\mathcal{I}$, committee $W \in r(\mathcal{I})$ and $c \in W$ in any instance $\mathcal{I}'$ obtained by shifting $c$ one rank up in the preference list of any voter, there is still some committee $W' \in r(\mathcal{I}')$ with $c \in W'$.
\end{definition}
That is, if the candidate $c$ is in some winning committee and is ``made stronger'' $c$ is still in some winning committee afterward. It is known that neither STV nor EAR are candidate monotone \citep{AzLe20a}. However, Phragm\'en's ordered rule is indeed candidate monotone \citep{Jans16a}. 

To generalize candidate monotonicity to probabilistic rules, we instead focus on the probabilities of individual candidates. 
\begin{definition}[Candidate Monotonicity for Probabilistic Rules]
    A probabilistic rule $r$ satisfies candidate monotonicity if for every instance $\mathcal{I}$ and $c \in C$ in any instance $\mathcal{I}'$ obtained by shifting $c$ one rank up in the preference list of any voter it holds that 
    \[
    \Pr[c \in r(\mathcal{I})] \le \Pr[c \in r(\mathcal{I}') ].
    \]
\end{definition}
In other words, shifting a candidate up in the preference lists of voters does not decrease their chances to be selected. This is indeed in some sense a stronger requirement than candidate monotonicity for non-resolute rules.
\begin{observation}
    If $r$ is a candidate monotone probabilistic rule, then $\supp(r)$ is a candidate monotone non-resolute rule.
\end{observation}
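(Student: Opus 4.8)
The plan is to simply unwind the two definitions and observe that ``non-zero probability of a candidate'' is preserved by the probabilistic monotonicity condition. First I would fix an arbitrary instance $\mathcal{I}$, a committee $W \in \supp(r(\mathcal{I}))$, and a candidate $c \in W$, and then pass to the modified instance $\mathcal{I}'$ obtained by shifting $c$ one rank up in the preference list of a single voter. The goal, dictated by the definition of candidate monotonicity for non-resolute rules, is to exhibit some committee $W' \in \supp(r(\mathcal{I}'))$ with $c \in W'$.

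The key step is to translate membership of $W$ in the support into a strictly positive marginal probability for $c$. Since $W \in \supp(r(\mathcal{I}))$ we have $\Pr[r(\mathcal{I}) = W] > 0$, and since $c \in W$ this yields $\Pr[c \in r(\mathcal{I})] \ge \Pr[r(\mathcal{I}) = W] > 0$. Applying candidate monotonicity of the probabilistic rule $r$ to the shift producing $\mathcal{I}'$ then gives $\Pr[c \in r(\mathcal{I}')] \ge \Pr[c \in r(\mathcal{I})] > 0$. Finally, a strictly positive probability of $c$ being selected forces some committee containing $c$ into the support: if every committee in $\supp(r(\mathcal{I}'))$ avoided $c$, then $\Pr[c \in r(\mathcal{I}')]$ would be a sum of zero-probability terms and hence $0$, a contradiction; picking any such witnessing committee $W'$ completes the argument.

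I do not anticipate a genuine obstacle here, since the statement is essentially definitional. The only point requiring a little care is the bookkeeping between the committee-level requirement (non-resolute candidate monotonicity demands an explicit witnessing committee) and the candidate-level guarantee (probabilistic candidate monotonicity only controls marginal inclusion probabilities); the positivity argument in the last step is exactly what bridges the two, and it is worth noting that it relies only on $C_k$ being finite so that the marginal $\Pr[c \in r(\mathcal{I}')]$ decomposes as a finite sum over the support.
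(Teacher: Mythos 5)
Your proposal is correct, and it is exactly the definitional unpacking the paper intends: the paper states this as an observation without any written proof, and your argument (positive marginal probability for $c$ is preserved under the shift, hence some supported committee must contain $c$) is the standard justification. No gaps; the finiteness remark about $C_k$ is a fine piece of care but not essential to flag further.
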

Similarly, one can also define candidate monotonicity for fractional voting rules by looking at the individual probabilities.

\section{Non-Resoluteness and Monotonicity}
We start off our results by showing that the rule that selects all PSC committees is candidate monotone. 
For this we crucially make use of a characterization of PSC by \citet{AzLe22a} using their \emph{minimal demand rules}. These works as follows:
We start off with an empty committee $W = \emptyset$. For each rank $r = 1, \dots, m$ we check if there is a group of voters forming a solid coalition over a set $C'$ of candidates of size $r$. If, additionally, this solid coalition witnesses a PSC violation, we add a candidate from $C' \setminus W$ to $W$. We repeat this until there is no further PSC violation, afterwards we increment $r$. A committee now satisfies PSC if and only if it can be obtained via this process.
\begin{theorem}[Characterization of PSC \citep{AzLe22a}]
    A committee $W$ satisfies PSC if and only if it can be obtained by a minimal demand rule.
\end{theorem}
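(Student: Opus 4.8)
The plan is to establish both implications by tracking how a PSC violation interacts with the rank-by-rank structure of the minimal demand rule. Two elementary observations do most of the work. First, a \emph{monotonicity} fact: for a fixed instance and rank $r$, whether a (partial) committee $U$ admits a PSC-violating triple $(N', C', \ell)$ with $\lvert C' \rvert = r$ depends on $U$ only through $\lvert U \cap C' \rvert$, and since the offending inequality is $\lvert U \cap C' \rvert < \min(\lvert C'\rvert, \ell)$, every such violation of a superset $U' \supseteq U$ is already a violation of $U$; hence enlarging the committee can only destroy violations, never create them. Second, a set of voters forms a solid coalition over a set $C'$ of size exactly $r$ precisely when all of them share the same top-$r$ candidates (namely $C'$); so ``processing rank $r$'' in the minimal demand rule is exactly the elimination of all violations whose witnessing candidate set has size $r$.

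For the direction ``obtainable $\Rightarrow$ PSC'', I would take $W$ produced by some run of the rule and let $W_r$ be the partial committee immediately after rank $r$ is completed, so that $W_1 \subseteq \dots \subseteq W_m \subseteq W$. By construction $W_r$ admits no violating triple with $\lvert C' \rvert = r$; by the monotonicity fact together with $W_r \subseteq W$, the same holds of $W$. Since every PSC-violating triple has $1 \le \lvert C' \rvert \le m$, the committee $W$ admits no violating triple at all, i.e.\ it satisfies PSC. I would additionally note that the rule is well defined (it is never forced to add more than $k$ candidates) and that any padding needed to reach size $k$ at the end is harmless by the same monotonicity fact, citing \citet{AzLe22a} for the first of these.

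For the direction ``PSC $\Rightarrow$ obtainable'', I would start from a PSC committee $W$ and run the minimal demand rule with a tailored choice rule: whenever it must add a candidate from some $C'$ on account of an unresolved violation $(N', C', \ell)$ relative to the current partial committee $U$, add a candidate from $C' \cap (W \setminus U)$. The crux is that this set is nonempty --- if it were empty then $W \cap C' \subseteq U$, so $\lvert W \cap C' \rvert \le \lvert U \cap C' \rvert < \min(\lvert C' \rvert, \ell)$, and since solidity of $N'$ over $C'$ and $\lvert N' \rvert \ge \ell n / k$ do not involve the committee, $(N', C', \ell)$ would witness a PSC violation of $W$, a contradiction. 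Hence the run can be carried out with the invariant $U \subseteq W$; every addition strictly enlarges $U$, so it halts with some $U \subseteq W$ of size at most $k$, and padding $U$ with the elements of $W \setminus U$ in the final step outputs exactly $W$.

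The step I expect to be the main obstacle is the bookkeeping around the rank ordering: one must check that a violation with witness size $r$ cannot be ``missed'' before rank $r$, cannot be recreated while processing earlier ranks, and cannot be undone by additions at later ranks --- each a consequence of the two observations above, but only if the argument is sequenced in the right order. A secondary, more technical matter is the well-definedness of the rule and the precise handling of the padding step in the first direction, which I would settle by appealing to \citet{AzLe22a} rather than reproving.
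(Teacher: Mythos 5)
The paper does not prove this statement at all: it is imported verbatim from \citet{AzLe22a}, so there is no in-paper argument to compare against. Your proposal is, as far as I can check, a correct self-contained reconstruction of the characterization. The two observations you isolate are sound: a violating triple $(N',C',\ell)$ depends on the committee only through $\lvert U\cap C'\rvert$, so adding candidates can only remove violations, and for strict linear orders a solid coalition over a set of size $r$ is exactly a shared top-$r$ prefix, which is what ties violations of witness size $r$ to the rank-$r$ stage of the minimal demand rule. With these, the direction ``obtainable $\Rightarrow$ PSC'' follows from $W_r\subseteq W$ as you say, and the direction ``PSC $\Rightarrow$ obtainable'' works via your tailored run: the nonemptiness of $C'\cap(W\setminus U)$ is correctly derived from $W\cap C'\subseteq U\cap C'$ and the fact that solidity and the size condition on $N'$ are committee-independent, so the invariant $U\subseteq W$ is maintained and the final padding step can output $W$ exactly. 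The one place where your argument is not self-contained is the well-definedness of the procedure (that the demands resolved across all ranks never force more than $k$ additions); note that in your second direction this is automatic from $U\subseteq W$ and $\lvert W\rvert = k$, and in the first direction a completed run is given by hypothesis, so the deferral to \citet{AzLe22a} is only needed to know the rule always terminates properly on arbitrary instances, not for the equivalence itself. In short: where the paper cites, you prove, and your route is essentially the intended one behind the cited result.
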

Using this characterization, we can now show that the set of PSC committees is indeed candidate monotone.
\begin{theorem}
    The rule selecting all PSC committees is candidate monotone.
\end{theorem}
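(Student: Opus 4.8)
The plan is to lift a run of a minimal demand rule from $\mathcal{I}$ to $\mathcal{I}'$ using the characterization of \citet{AzLe22a} just quoted. Fix an instance $\mathcal{I}$, a PSC committee $W$ with $c \in W$, and let $\mathcal{I}'$ be obtained by moving $c$ from position $p$ to position $p-1$ in the ranking $\succ_j$ of a single voter $j$ (so $p \ge 2$); let $a$ be the candidate with which $c$ swaps, and let $P_{\mathrm{old}}$ and $P_{\mathrm{new}}$ denote the set of the top $p-1$ candidates of $\succ_j$ in $\mathcal{I}$ and in $\mathcal{I}'$ respectively, so $P_{\mathrm{new}} = (P_{\mathrm{old}} \setminus \{a\}) \cup \{c\}$. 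The first step is the elementary observation that, since the only prefix of $\succ_j$ that changes is the one of length $p-1$, the solid coalitions of $\mathcal{I}$ and of $\mathcal{I}'$ coincide except ``through voter $j$ at length $p-1$'': any solid coalition over a candidate set whose size differs from $p-1$ is solid in both instances, as is any voter set not containing $j$; and every set of voters containing $j$ that is solid in $\mathcal{I}'$ over a size-$(p-1)$ set must be solid over $P_{\mathrm{new}}$.

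Next, apply the characterization to $W$: fix a run of a minimal demand rule on $\mathcal{I}$ that outputs $W$, and consider the step in which $c$ is added, say at rank $r_c$, on account of a set of voters $N_c$ solid over a set $C_c$ with $|C_c| = r_c$ and $c \in C_c$, together with a parameter $\ell_c \in [k]$; thus $|N_c| \ge \ell_c\, n/k$, and at that moment the current committee $W^{<}$ satisfies $c \notin W^{<}$ and $|W^{<} \cap C_c| < \min(r_c, \ell_c)$. A small but crucial point is that $N_c$ is still solid over $C_c$ in $\mathcal{I}'$: this is immediate if $j \notin N_c$, and if $j \in N_c$ then $C_c$ is the set of the top $r_c$ candidates of $\succ_j$, so $c \in C_c$ forces $p \le r_c$, and swapping positions $p-1$ and $p$ leaves that top-$r_c$ set — hence the prefix — unchanged.

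With these in hand, I would build a run of a minimal demand rule on $\mathcal{I}'$. On ranks $1, \dots, p-2$ copy the original run move for move; this is legitimate because the two instances agree on all solid coalitions over sets of size at most $p-2$, so the partial committees stay equal. Write $W_r$ for the committee reached by the original run after finishing rank $r$. From rank $p-1$ onwards, process violations while maintaining the invariant that the new run's current committee $W'$ is contained in $W_r$, \emph{unless and until $c$ has entered $W'$}, in which case the invariant is dropped and the run is simply completed in any legitimate way. Concretely, to resolve a PSC violation witnessed by $N''$ solid over $C''$ with $|C''| = r$ and parameter $\ell''$ at a committee $W' \subseteq W_r$: if $N''$ is also solid over $C''$ in $\mathcal{I}$ — which by the first step holds automatically unless $r = p-1$ and $C'' = P_{\mathrm{new}}$ with $j \in N''$ — then, since the original run ended rank $r$ with no violation, $|W_r \cap C''| \ge \min(r, \ell'') > |W' \cap C''|$, so we may add a candidate from $(W_r \cap C'') \setminus W'$ and keep $W' \subseteq W_r$; otherwise $C'' = P_{\mathrm{new}}$, and we add $c \in P_{\mathrm{new}}$ itself. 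If $c$ has still not been added upon reaching rank $r_c$, the invariant gives $W' \subseteq W_{r_c-1} \subseteq W^{<}$, so $|W' \cap C_c| \le |W^{<} \cap C_c| < \min(r_c, \ell_c)$; combined with $|N_c| \ge \ell_c\, n/k$ and the fact that $N_c$ is solid over $C_c$ in $\mathcal{I}'$, the triple $(N_c, C_c, \ell_c)$ witnesses a genuine PSC violation, which we resolve by adding $c$. In every branch $c$ ends up in the committee, and completing the run yields, by the characterization, a PSC committee of $\mathcal{I}'$ that contains $c$, establishing candidate monotonicity.

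The heart of the difficulty is rank $p-1$: it is the unique rank at which the two instances disagree about which solid coalitions are present, so a direct replay of the original run can fail there. The argument circumvents this by never demanding that the two runs make identical moves — only that the new committee stay dominated by the old one, which is exactly what is needed to keep the ``$c$-certifying'' coalition $(N_c, C_c)$ active until rank $r_c$ — and by using that every genuinely new coalition is a coalition over $P_{\mathrm{new}} \ni c$, so any fresh demand it imposes can be discharged by adding $c$ itself. A secondary point to verify is that, once $c$ has been placed in the committee, the minimal demand process can always be run to completion without ever removing a candidate, so the final object is indeed a size-$k$ committee satisfying PSC.
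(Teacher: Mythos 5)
Your proposal is correct and follows essentially the same route as the paper's proof: replay the minimal demand rule on the modified instance, keep the new committee dominated by the original run so that old coalitions can always be satisfied from $W$, and add $c$ whenever a genuinely new coalition (necessarily over the new prefix containing $c$) or the coalition that originally certified $c$ forces an addition. Your version is simply a more detailed and rigorous write-up of the paper's argument (e.g., pinning down exactly which solid coalitions change and why $c$ must be added by rank $r_c$).
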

\begin{proof}
    \new{Let $\mathcal{I}$ be an instance, $W$ a committee satisfying PSC in $\mathcal{I}$, and $c \in W$. Assume that there is a voter swapping $c$ up in their preference list and denote this instance by $\mathcal{I}_c$. We now construct a new committee $W_c$ containing $c$ by running a minimal demand rule on $\mathcal{I}_c$. We start off by initializing $W_c = \emptyset$ and iteratively check for each rank $r = 1, \dots, m$ if there exists a solid coalition consisting of a set $N' \subseteq N$ of voters and $C' \subseteq C$ of candidates witnessing a PSC violation in $\mathcal{I}_c$. If $c$ is already a part of $W_c$ we can include an arbitrary candidate from $C'\setminus W_c$ to $W_c$. Otherwise, we distinguish two cases. 

    \textbf{Case 1: $c \notin C'$.} If this is the case, as $c$ was swapped up, the set $N'$ together with $C'$ already formed a solid coalition in the original instance $\mathcal{I}$. Thus, as $W$ satisfies PSC there must be more candidates in $W \cap C'$ than in $W_c \cap C'$. Hence, there exists a candidate $c' \in (W \cap C') \setminus W_c$ and we can include this candidate in the committee $W_c$.

    \textbf{Case 2: $c \in C'$.} Then we include $c$ in the committee $W_c$.

    This constitutes a valid execution of a minimal demand rule. Thus, the committee $W_c$ satisfies PSC. If we ever entered the second case, we know that $c \in W_c$ has to hold. Otherwise, we always entered the first case. However, then the $k$ candidates selected by $W_c$ are precisely the $k$ candidates in $W$. As $c \in W$ this implies $c \in W_c$.} 
\end{proof}

\section{A Probabilistic Version of Phragmén's Ordered Rule}
While the previous result gives us some hope for obtaining a proportional and candidate monotone rule, \new{the result} is also quite weak. There can be a lot of committees satisfying PSC \citep{BBMP25a} and our monotonicity notion only guarantees us that some committee includes the strengthened candidate. To remedy this, we study probabilistic voting rules. In particular, we use a variant of \emph{Phragmén's Ordered Rule} \citep{Jans16a} and turn it into a lottery over committees satisfying PSC.

\subsubsection*{Phragmén's Ordered Rule} Phragmén's Ordered Rule was originally introduced by a Swedish royal commission (including Phragmén himself) and was used for the intra party seat distribution in the Swedish parliament. \new{For} more historic context see the work by \citet{Jans16a}. 

Phragmén's Ordered Rule works as a continuous process. Each voter starts to ``eat'' their top choice candidate at the same speed. Once a candidate is fully eaten, this candidate is elected onto the committee and the voters move on to their top choice ``uneaten'' candidate. This is done until $k$ candidates are elected.\footnote{This is very similar to the famous probabilistic serial method of \citet{BoMo01a} and the veto by consumption rule of \citet{IaKo21a}. For a connection between these three methods and the problem of designing a voting rule with low metric distortion see the note by \citet{Pete23a}.}

\begin{example}
    Consider the following instance with four voters, four candidates, and $k = 2$:
\begin{itemize}
    \item $c_1 \succ c_4 \succ c_3 \succ c_2$
    \item $c_1 \succ c_3 \succ c_2 \succ c_4$
    \item $c_2 \succ c_3 \succ c_4 \succ c_1$
    \item $c_3 \succ c_4 \succ c_2 \succ c_1$
\end{itemize}

In this instance Phragm\'en's Ordered Rule works as follows:

\begin{enumerate}
    \item At time $t = 0$, all voters start ``eating'' their top choice candidate.

    \item At time $t = 0.5$:
    \begin{itemize}
        \item $c_1$ is fully eaten and elected to the committee.
        \item $c_2$ and $c_3$ are ``half eaten''.
        \item The first two voters move to their next choices $c_4$ and $c_3$.
        \item The current committee is $\{c_1\}$.
    \end{itemize}

    \item At time $t = 0.75$:
    \begin{itemize}
        \item $c_3$ is now fully eaten by the second and fourth voter, \new{$c_2$} is $75\%$ eaten and $c_4$ is $25\%$ eaten. 
        \item $c_3$ is elected to the committee and $\{c_1, c_3\}$ is the outcome.
    \end{itemize}
\end{enumerate}
\end{example}
To see that Phragmén's Ordered Rule does not satisfy PSC consider the following example.
\begin{example}
    Consider the following instance with $15$ voters, $8$ candidates and $k = 3$:
    \begin{itemize}
        \item[$2\times$] $c_1\succ c_2 \succ c_3 \succ c_4 \succ c_5 \succ c_6 \succ c_7$
        \item[$2\times$] $c_2\succ c_3 \succ c_4 \succ c_5 \succ c_1  \succ c_6 \succ c_7$
        \item[$2\times$] $c_3\succ c_4 \succ c_5  \succ c_1 \succ c_2 \succ c_6 \succ c_7$
        \item[$2\times$] $c_4 \succ c_5\succ c_1 \succ c_2 \succ c_3  \succ c_6 \succ c_7$
        \item[$2\times$] $c_5\succ c_1 \succ c_2 \succ c_3  \succ c_4 \succ  c_6 \succ c_7$
        \item[$5 \times$] $c_6 \succ c_7 \succ c_1\succ c_2 \succ c_3 \succ c_4 \succ c_5$
    \end{itemize}
    Here, Phragmén's Ordered Rule would first elect $c_6$. Then at the time $c_7$ is fully eaten, every candidate from $c_1$ to $c_5$ is $80\%$ eaten. Thus, Phragmén's Ordered Rule first elects $c_6$ and $c_7$ and only one of $c_1$ to $c_5$ afterwards. However, the first ten voters form a solid coalition over $c_1$ to $c_5$ demanding at least two candidates while they only receive one. Therefore, Phragmén's Ordered Rule fails PSC in this instance. \label{ex:phrag:nopsc}
\end{example}

We now turn this deterministic rule first into a fractional one and then via dependent rounding into a probabilistic one. Specifically, we first define a version of Phragmén's Ordered Rule giving us probabilities for the individual candidates to be elected and then show how to round these probabilities to guarantee that every committee in the support satisfies PSC.  

We assume that each candidate has a weight of one. Now instead of stopping when $k$ candidates have been selected, we stop once the total eaten weight is $k$. Candidates can now be eaten fractionally, with $\alpha$ of a candidate being eaten corresponding to an $\alpha$ probability of this candidate being selected. We call this rule Phragm\'en's fractional rule (PFR).

\begin{example}
    Consider the following instance with four voters, four candidates, and $k = 2$:
\begin{itemize}
    \item $c_1 \succ c_2 \succ c_3 \succ c_4$
    \item $c_1 \succ c_4 \succ c_3 \succ c_2$
    \item $c_1 \succ c_3 \succ c_2 \succ c_4$
    \item $c_3 \succ c_4 \succ c_2 \succ c_1$
\end{itemize}    
After the total eaten weight is $\frac{4}{3}$, candidate $c_1$ is eaten fully, while candidate $c_3$ is eaten a third. Then the first three voters move on to their second choice. After the total eaten weight is $2$ candidates $c_2$ and $c_4$ both have a weight of $\frac{1}{6}$, while $c_3$ has $\frac{2}{3}$. Thus, candidate $c_1$ is selected with probability $1$ while $c_2$ and $c_4$ are selected with probability $\frac{1}{6}$ and $c_3$ is selected with probability $\frac{2}{3}$. 

In the instance in \Cref{ex:phrag:nopsc}, $c_6$ would get a probability of $1$ while $c_1$ to $c_5$ each get a probability of $\frac{2}{5}$. 
\end{example}
While PFR gives us probabilities, it does not actually give us a distribution over committees yet. For this, we turn to the dependent rounding scheme of \citet{GKPS06a}. 
\begin{theorem}[\citep{GKPS06a}]
    Let $G = (V,E)$ be a bipartite graph with weights $w_e \in [0,1]$ for each $e \in E$. Then there exists a lottery over subsets of edges $(\lambda_i, E_i)_{E_i \subseteq E}$ where the subset of edges $E_i \subseteq E$ is selected with probability $\lambda_i$ such that (i) each edge is selected with probability $w_e$ and (ii) in each $E_i$ a vertex $v$ has at least a degree of $\lfloor \sum_{e \in N(v)} w(e)\rfloor$ and at most $\lceil \sum_{e \in N(v)} w(e)\rceil$.
    \label{thm:dependent}
\end{theorem}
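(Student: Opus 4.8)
The plan is to realize the lottery algorithmically, as a sequence of randomized local moves, each of which (a) preserves every edge's current value in expectation, (b) leaves every vertex's fractional degree inside the target interval, and (c) makes at least one more edge integral. Concretely, start from $x = w$ and, while some edge has $x_e \in (0,1)$, let $H$ be the subgraph on the edges with $x_e \in (0,1)$ and pick either a cycle of $H$ or a maximal path of $H$ (one of these exists whenever $H$ is nonempty). Since $G$ is bipartite, a cycle of $H$ has even length, so in both cases the edges of the chosen structure split into two matchings $M_1, M_2$ that alternate along it. Set $\alpha = \min\bigl(\min_{e\in M_1}(1-x_e),\ \min_{e\in M_2} x_e\bigr) > 0$ and $\beta = \min\bigl(\min_{e\in M_1} x_e,\ \min_{e\in M_2}(1-x_e)\bigr) > 0$. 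With probability $\tfrac{\beta}{\alpha+\beta}$ add $\alpha$ to every edge of $M_1$ and subtract $\alpha$ from every edge of $M_2$; with the complementary probability do the reverse with $\beta$. By the choice of $\alpha,\beta$ at least one edge becomes $0$ or $1$ each round, so after at most $\lvert E\rvert$ rounds $x$ is integral, and the induced distribution over the resulting $0/1$-vectors is the claimed lottery $(\lambda_i,E_i)$.

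First I would dispatch termination and the marginals. Termination is immediate from (c), since $H$ strictly shrinks. For (i), a single round is a martingale step: for $e\in M_1$ the new value has expectation $\tfrac{\beta}{\alpha+\beta}(x_e+\alpha)+\tfrac{\alpha}{\alpha+\beta}(x_e-\beta)=x_e$, symmetrically for $e \in M_2$, and edges outside the chosen structure are untouched; hence $\mathbb{E}[x_e]$ is invariant across rounds, and by the tower rule the probability that $e$ is finally selected equals $w_e$.

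The technical heart is property (ii). Fix a vertex $v$ and its fractional degree $d_v(x)=\sum_{e\in N(v)} x_e$. In a round where $v$ lies on the chosen cycle, or is an interior vertex of the chosen path, exactly two incident edges change, one up and one down by the same amount, so $d_v(x)$ is unchanged; the only rounds that move $d_v(x)$ are those in which $v$ is an \emph{endpoint} of the chosen maximal path, and by maximality such an endpoint has exactly one incident edge in $H$. Hence, before the first round in which $d_v$ moves, $v$ has never been touched, so $d_v(x) = d_v(w)$, while at the start of that round $v$ has exactly one fractional incident edge, say with value $f\in(0,1)$; thus $d_v(w) = I_v + f$ for the integer $I_v$ contributed by the edges already rounded at $v$. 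From that point on $v$ never regains a second fractional incident edge (values only ever become integral), so the integral part of $d_v(x)$ stays equal to $I_v$ unless the last fractional incident edge rounds to $1$, and $d_v(x)$ therefore always lies in $[I_v, I_v+1]$; in particular the final integral degree of $v$ lies in $\{I_v,I_v+1\} = \{\lfloor d_v(w)\rfloor,\lceil d_v(w)\rceil\}$. If instead $v$ never has exactly one fractional incident edge, then $v$ is never a maximal-path endpoint, so $d_v$ never moves and the final integral degree equals $d_v(w)$, which is then necessarily an integer.

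The main obstacle is exactly this last argument: making watertight that a vertex's fractional degree can change \emph{only} through a maximal-path endpoint, and that once it starts changing the vertex is ``locked'' to at most one fractional incident edge. This is where bipartiteness (to two-color the cycles) and the insistence on \emph{maximal} paths are genuinely used, and it is the only place that needs more than bookkeeping; termination, the martingale identity, and the tower-rule argument for the marginals are all routine.
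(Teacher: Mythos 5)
The paper itself does not prove this theorem: it is imported as a black box from \citet{GKPS06a}, so there is no internal proof to compare against. Your sketch reconstructs the standard cycle/path pipage argument behind that result, and most of it is sound: the martingale step gives the marginals, termination is clear, and degree conservation at cycle vertices and path-interior vertices is exactly right.

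There is, however, one genuine gap, and it sits precisely at the point you call the technical heart. The claim ``by maximality such an endpoint has exactly one incident edge in $H$'' is false for the procedure as you stated it: an endpoint of a maximal path may have a second fractional edge whose other end already lies on the path (a chord closing a cycle), which blocks extension of the path without forcing degree one in $H$. With that reading the scheme really can violate (ii). Concretely, take a $4$-cycle $v_1v_2v_3v_4$ with all four weights $\tfrac12$, so every fractional degree is $1$; the path $v_1\text{--}v_2\text{--}v_3\text{--}v_4$ is maximal, and one outcome of rounding it sets $x_{v_1v_2}=x_{v_3v_4}=1$, $x_{v_2v_3}=0$, after which the lone remaining edge $v_1v_4$ rounds to $1$ with probability $\tfrac12$, giving $v_1$ and $v_4$ final degree $2$ although $\lceil 1\rceil =1$. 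The fix is the standard one and costs a single sentence: give cycles priority, i.e., only pick a maximal path when $H$ is acyclic; in a forest a maximal path's endpoints are leaves of $H$, so they have exactly one fractional incident edge, and then your ``locked vertex'' bookkeeping (the integer part $I_v$ never changes, the last fractional edge stays in $[0,1]$, hence the final degree lies in $\{\lfloor d_v(w)\rfloor,\lceil d_v(w)\rceil\}$) goes through as written. With that amendment the argument is a correct proof of the statement, essentially identical to the one in the cited source.
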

That is given a bipartite graph with probabilities on edges, we can round these edges such that the original probabilities are respected and the degrees in the selected graph are the rounded degrees in the original weighted graph.

Before we turn to the rounding scheme, we additionally introduce another property satisfied by PFR. Namely, it satisfies an ex-ante version of PSC, which requires that any group of at least $\frac{\ell n}{k}$ voters (even for non-integer $\ell$) gets a probability mass of at least $\ell$. 
\begin{definition}[Ex-ante PSC]
    A fractional committee rule $p$ satisfies ex-ante PSC if there is no group $N' \subseteq N$ of voters, $C' \subseteq C$ of candidates and $\ell \le k$ such that 
    \begin{itemize}
        \item[i.)] $N'$ forms a solid coalition over $C'$
        \item[ii.)] $\lvert N'\rvert \ge \ell \frac{n}{k}$
        \item[iii.)] $\sum_{c \in C'} p_c < \min(\lvert C'\rvert, \ell)$. 
    \end{itemize}
\end{definition}
\new{To better contexualize ex-ante PSC, we note that a (deterministic) committee satisfying ex-post PSC does not necessarily need to satisfy ex-ante PSC.\footnote{As a simple example consider an instance with two voters, one ranking $c_1 \succ c_2$ and the other ranking $c_2 \succ c_1$. Here, for $k = 1$, either $\{c_1\}$ or $\{c_2\}$ satisfies ex-post PSC. However, to satisfy ex-ante PSC, both $c_1$ and $c_2$ need to be selected with probability $\frac{1}{2}$.}  }

With this definition we now come to the main result of this section and show that PFR is indeed candidate monotone and satisfies this ex-ante PSC notion. Further, we use the dependent rounding scheme of \citet{GKPS06a} to show that the probabilities given by PFR can be decomposed into a probabilistic rule only randomizing over committees satisfying PSC.
\begin{theorem}
    PFR is candidate monotone, satisfies ex-ante PSC, and can be decomposed into a distribution over PSC committees. \label{thm:main}
\end{theorem}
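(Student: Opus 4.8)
The plan is to prove the three assertions in turn and then assemble them. I work throughout with the continuous eating formulation of PFR, using the fact that all $n$ voters eat at unit rate until every candidate is fully eaten, so the process reaches total eaten weight $k$ at real time $t^\star=k/n$; write $x_e(t)$ for the amount of $e$ eaten at time $t$, so $p_e=x_e(t^\star)$. \emph{Ex-ante PSC} is the easy part: given a solid coalition $N'$ over $C'$ with $|N'|\ge \ell\,n/k$, every voter of $N'$ eats exclusively inside $C'$ for as long as $C'$ still contains a not-fully-eaten candidate, because a voter always consumes its top not-fully-eaten candidate and all of $C'$ precedes $C\setminus C'$ in the rankings of $N'$. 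Hence either all of $C'$ is fully eaten by $t^\star$, giving $\sum_{c\in C'}p_c=|C'|\ge\min(|C'|,\ell)$, or some member of $C'$ is still partial at $t^\star$, in which case the voters of $N'$ spent all of $[0,t^\star]$ eating inside $C'$, so $\sum_{c\in C'}p_c\ge |N'|\,t^\star\ge \ell$; either way the ex-ante PSC inequality holds.

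\emph{Candidate monotonicity} is the technical heart and the step I expect to be the main obstacle. The shift in the definition swaps $c$ with the candidate $d$ sitting immediately above it in a single voter $i$'s ranking, so I would run the eating processes for the two instances side by side, writing $x_e,x_e'$ for the eaten amounts, and prove the invariant that $x_e'(t)\ge x_e(t)$ for every candidate $e\neq d$ and every time $t$ — applied to $e=c$ this is exactly $p_c'\ge p_c$ — together with a companion bound of the form $x_c'(t)+x_d'(t)\ge x_c(t)+x_d(t)$ that keeps the one candidate, $d$, that may legitimately lag under control. The two executions agree until voter $i$ first reaches the $\{c,d\}$ block, and one argues no violation can occur first: if some voter's ``finger'' causes $e\neq d$ to fall behind in the perturbed run, that voter is either $i$, who now reaches $c$ no later than before and so cannot hurt $c$, or a voter whose only blocker is the delayed completion of $d$, and in the latter case the companion bound turns the slack gained on $c$ into enough progress on $d$ to forbid the drop. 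The delicate point is closing this induction despite the feedback by which deprioritising $d$ for voter $i$ delays $d$'s election and thereby the election of further candidates; this is the continuous analogue of \citeauthor{Jans16a}'s argument for the deterministic ordered rule, which I would follow.

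For the \emph{decomposition}, partition the candidates at $t^\star$ into the fully eaten set $F$ (with $|F|=a$), the partially eaten set $P$ (each $p_c\in(0,1)$), and the untouched rest, and note $\sum_{c\in P}p_c=k-a=:q\in\mathbb{N}$. Every output committee will be $F$ together with $q$ members of $P$, so PSC constraints from solid coalitions whose candidate set lies entirely in $F$ are met automatically, and it remains to round inside $P$. I would build a bipartite graph between $P$ and a set of $q$ ``seats'', giving edge $(c,j)$ a weight so that $c$ has total weight $p_c$ and each seat has total weight $1$, and apply \Cref{thm:dependent}: then each $c\in P$ is kept with probability exactly $p_c$ (so the rule implements the fractional committee $p$) and each seat is filled by exactly one member of $P$ (so committees have size $k$). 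The weights must protect every residual demand: for a solid coalition $N'$ over $C'$ not all of whose members are fully eaten, the ex-ante argument already gives $\sum_{c\in C'\cap P}p_c\ge \min(|C'|,\ell)-|C'\cap F|=:\ell'$, and I want these $\ell'$ units bundled into $\ell'$ seats whose whole weight lies on $C'\cap P$; since every candidate has total weight at most $1$, those seats are then filled by $\ell'$ distinct members of $C'\cap P$, yielding $|W\cap C'|\ge\min(|C'|,\ell)$, so by the minimal-demand characterisation of \citet{AzLe22a} every committee in the support satisfies PSC. Since solid coalitions overlap, one must choose all these bundles simultaneously; I would cut the eaten mass of $P$ into $q$ unit pieces following the order in which the candidates would be elected, so that nested coalitions inherit nested seat-blocks, and verify the resulting Hall-type compatibility condition (this is the other place where real work is needed).

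Assembling: the probabilistic rule ``output $F$ together with the rounded members of $P$'' randomises only over PSC committees by the previous paragraph and implements $p$, so $\Pr[c\in r(\mathcal{I})]=p_c$, and candidate monotonicity of this probabilistic rule is then exactly the inequality $p_c(\mathcal{I})\le p_c(\mathcal{I}')$ proved in the second step. This gives the candidate monotone lottery over PSC committees claimed in \Cref{thm:main}.
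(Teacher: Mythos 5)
Your ex-ante PSC argument is correct and is essentially the paper's, and your overall architecture (eating-process monotonicity plus GKPS dependent rounding, \Cref{thm:dependent}) matches the paper's. But there are two genuine gaps. First, the invariant you propose for candidate monotonicity, $x'_e(t)\ge x_e(t)$ for \emph{every} candidate $e\neq d$, is false: delaying $d$'s completion can delay third parties. For instance, let voter $1$ (the swapper) originally rank $d\succ c\succ\cdots$, voter $2$ rank $d\succ e\succ\cdots$, voter $3$ rank $c\succ\cdots$. Originally $d$ is finished at time $\tfrac12$ by voters $1,2$, and voter $2$ starts on $e$ at time $\tfrac12$; after the swap, voter $1$ eats $c$ first, $d$ is finished only at time $\tfrac34$, and voter $2$ reaches $e$ later, so $x'_e(t)<x_e(t)$ on an interval. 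The correct statement (and the one the paper uses, citing Theorem~14.3 of \citet{Jans16a}) concerns only the swapped-up candidate $c$ itself: at every time $c$ is at least as much eaten as before the swap. Your companion bound on $x_c+x_d$ does not repair the all-$e$ invariant, so the induction you describe would not close as stated; you would have to reproduce Jansson's actual argument rather than this stronger coupling.

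Second, in the decomposition step the load-bearing claim --- that the fractional mass on the partially eaten candidates can be cut into $q$ unit-weight ``seats'' so that \emph{simultaneously} every solid coalition's residual demand $\ell'$ is covered by $\ell'$ seats lying entirely inside $C'\cap P$ --- is exactly the hard part, and you leave it unverified. Your proposed recipe (cut in ``election order'' so that nested coalitions get nested seat blocks) is problematic: candidates in $P$ are unfinished when the process stops, so the election order is not well defined for them, and more importantly the relevant coalitions need not be nested --- disjoint voter groups can be solid over overlapping, non-nested candidate sets, so a laminar seat structure does not suffice and some Hall-type argument would still be needed. The paper avoids this entirely by keeping voter-level bookkeeping: it records $p_c(i)$, the amount of $c$ eaten by voter $i$, handles $p_c=1$ candidates separately, and then, following the minimal demand rules of \citet{AzLe22a}, sweeps ranks $r=1,\dots,m$, creating for each solid coalition a group-voter node of integral weight $\ell=\lfloor\sum_{i\in N'}\sum_{c\in C'}p_c(i)\rfloor$ connected only to $C'$ and \emph{deducting} the used amounts $p'_c(i)$. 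Coalitions sharing voters have nested prefixes and are handled incrementally by the deduction; coalitions with disjoint voters draw on disjoint mass; the final rank-$m$ coalition absorbs the leftover so the marginals are respected; and the floor-degree guarantee of \Cref{thm:dependent} then yields the $\min(\lvert C'\rvert,\ell)$ representatives ex post. Either adopt that construction or supply the missing existence proof for your simultaneous seat-cutting.
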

\begin{proof}
    The monotonicity of PFR follows from the proof of \citet[Theorem~14.3]{Jans16a}, as he shows that at any time, after a candidate gets swapped up, this candidate is always ``at least as much eaten'' as before swapping this candidate up. In particular this holds for time step $k$ and thus the monotonicity follows.  

    For the ex-ante PSC guarantee, consider any group $N' \subseteq N$ of voters of  size at least $\ell \frac{n}{k}$ forming a solid coalition over a set $C'\subseteq C$ of candidates. After a total weight of $k$ has been eaten, the weight of the candidates eaten by the group $N'$ is exactly $\ell$. As $C'$ forms a prefix of the preferences of these voters, it must hold that either $C'$ is eaten in total or that at least $\ell$ probability mass is spent on candidates in $C'$ as voters in $N'$ would first start eating candidates outside of $C'$ once everything in $C'$ is eaten. Thus, the outcome satisfies ex-ante PSC.

    For the ex-post guarantee, we make use of the dependent rounding scheme of \Cref{thm:dependent} while following the general structure of the minimal demand rules of \citet{AzLe22a}. 
    
    Let $p$ be the fractional committee selected by PFR. For a candidate $c \in C$ and voter $i \in N$ let $p_c(i) \in [0,1]$ be the amount of $c$ eaten by $i$ during the execution of PFR.

    We start off with the bipartite graph $G = (\emptyset \cup C, E)$ \new{which we will later use to round the outcome}. As a first step, for each candidate $c \in C$ with $p_c = 1$ we add a voter $v_c$ to $G$ with an edge to $c$ of weight $1$ and set $p_c(i) = 0$ for all $i \in N$. \new{Note that if we round $G$ according to the dependent rounding scheme, this forces the edge from $v_c$ to $c$ to always be selected.}  Now, we go from rank $r = 1$ to $m$. If there is a group $N' \subseteq N$ of voters forming a solid coalition with $C'$ of size $\lvert C'\rvert = r$, let $\ell = \left\lfloor \sum_{i \in N'} \sum_{c \in C'} p_c(i) \right\rfloor$ \new{be the rounded ``budget'' left by this group of voters}. If $\ell = 0$ we continue. Otherwise, for each $c \in C'$ and $i \in N'$ \new{we select a value} $p'_c(i) \in [0,1]$  such that $p'_c(i) \le p_c(i)$ and $\sum_{c \in C', i \in N'} p'_c(i) = \ell$. We add a group voter $i'$ to the left side of the bipartite graph and connect it to each candidate in $c \in C'$ with an edge of weight $\sum_{i \in N'} p'_c(i)$. Further, we deduct each $p_c(i)$ by $p'_c(i)$. 
 If there is no group left, we increase $r$.

 The dependent rounding scheme now selects committees according to \new{the bipartite graph} $G$. \new{That is, we round the edges and select every candidate adjacent to an edge in the rounded outcome. As the weight adjacent to each group voter is integral, and as the total sum of weights is $k$ this guarantees that exactly $k$ candidates are selected. As the weight adjacent to each candidate is at most $1$, we also do not select any candidate twice.}

 Besides this, we need to show two things, (i) this rounding scheme respects the probabilities and (ii) the committees output by the rounding scheme satisfy PSC. 

 \new{For the first point, consider any candidate $c \in C$ and the vertices of $G$
 connected to $c$. These vertices correspond to solid coalitions selected throughout the construction of $G$. By construction, for each voter in these solid coalitions, the probability that $c$ is selected is increased by at most $p_c(i)$. Hence, as the dependent rounding scheme respects the probabilities on the edges, $c$ is selected with a probability of at most $\sum_{i \in N} p_c(i) = p_c$.
 
 Further, we know that for $r = m$ all voters form a solid coalition over all candidates. If at this step, there is any voter $i \in N$ left with $p_c(i) > 0$, this probability will be part of the last vertex added to $G$. Thus, we know that $c$ is selected with a probability of exactly $p_c$.} 

 For (ii) consider a group $N' \subseteq N$ of voters forming a solid coalition over a set of candidates $C' \subseteq C$ such that $\lvert N'\rvert \ge \frac{\ell n}{k}$ for some $\ell \in [k]$. \new{Our goal is to show that at least $\min(\ell, \lvert C'\rvert)$ candidates are selected from $C'$.} Consider the step in PFR in which $r = \lvert C' \rvert$. First, we notice that any candidate from $C'$ who has a probability of $1$ of getting selected is already included in the committee. If this includes every candidate in $C'$ we are already done, as in this case, every candidate from $C'$ is selected independent of the rounding. 
 
 \new{Otherwise, let $C''$ be the set of candidates from $C'$ with a probability of less than $1$ of being selected. As the voters in $N'$ rank all candidates in $C'$ before all candidates in $C \setminus C'$ we know that no voter in $N'$ can eat any candidate in $C \setminus C'$ before all candidates in $C'$ are assigned a probability of $1$. As we ran PFR until the total eaten weight was $k$, we get that the voters in $N'$ consumed at least $\lvert N'\rvert \frac{k}{n} \ge \ell$ many fractional candidates from $C'$. For each rank in-between $1$ and $\lvert C'\rvert$ voters in $N'$ could have only taken part in solid coalitions containing exclusively candidates from $C'$. Now consider $p' \coloneqq \sum_{i \in N'}\sum_{c \in C'} p_c(i)$: the ``budget'' left at iteration $\lvert C'\rvert$. As PFR only deducts integer values, we know that at the beginning of this iteration there must exist vertices in $G$ adjacent to a weight of at least $\ell - \lfloor p'\rfloor$ in $G$ exclusively connected to candidates in $C'$. In iteration $\lvert C'\rvert$ PFR adds a vertex connected to candidates in $C'$ with a total edge weight of $\lfloor p'\rfloor$. As we round according to $G$ we know that each of these vertices will have a degree equal to the weight of their outgoing edges, with each of these vertices being included in the committee. Hence, we will select at least $\ell - \lfloor\sum_{i \in N'}\sum_{c \in C'} p_c(i)\rfloor +\lfloor \sum_{i \in N'}\sum_{c \in C'} p_c(i)\rfloor = \ell$ candidates from $C'$, and therefore, the solid coalition $N'$ is satisfies according to PSC.
 }  
\end{proof}
Adding on to this, we however also show that PFR cannot be used to obtain the \new{recently introduced} stronger guarantee of rank-PJR \citep{BrPe23a}. \new{To define rank-PJR, for any $i \in N$ and $c \in C$ we call $\rank(i,c) \coloneqq \lvert \{c' \in C \colon c' \succ_i c\}\rvert + 1$ to be the rank given to $c$ by voter $i$ (i.e., voter $i$ gives candidate $c$ a rank of $r$ if this candidate appears in the $r$-th place of the voters rankings). Using this, we can define the property of rank-PJR.}
\begin{definition}[rank-PJR]
    A committee $W$ satisfies \emph{rank-proportional justified representation (rank-PJR)}\citep{BrPe23a} if there is no group $N'\subseteq N$ of voters, $C' \subseteq C$ of candidates, $\ell \in [k]$, and rank $r\in [m]$ such that 
    \begin{itemize}
        \item[i.)] $\rank(i,c) \le r$ for all $i \in N'$ and $c \in C'$
        \item[ii.)] $\lvert N'\rvert \ge \ell \frac{n}{k}$
        \item[iii.)] $\lvert \{c \in W \colon \rank(i,c) \le r \text{ for some } i \in N'\}\rvert < \ell.$
    \end{itemize}
\end{definition}
It is easy to see that any committee satisfying rank-PJR also satisfies PSC \citep{BrPe23a}. Further, using an example of \citeauthor{BrPe23a} we can show that PFR cannot be decomposed to satisfy ex-post rank-PJR.
\begin{example}
    Consider Example~5 of \citet{BrPe23a} with $4$ voters, $6$ candidates, and $k = 2$:
    \begin{itemize}
    \item $c_1 \succ c_2 \succ c_3 \succ c_4 \succ c_5 \succ c_6$ 
    \item $c_5 \succ c_2 \succ c_3 \succ c_4 \succ c_6 \succ c_1$ 
    \item $c_4 \succ c_3 \succ c_2 \succ c_6 \succ c_1 \succ c_5$ 
    \item $c_6 \succ c_3 \succ c_2 \succ c_5 \succ c_1 \succ c_4$.  \end{itemize}
    Here, PFR would place a probability mass of $\frac{1}{2}$ on each of $c_1, c_4, c_5, c_6$. This, however, is not compatible with rank-PJR. \new{Independent of which two voters get their first choice candidate in the deterministic committee, the other two voters will rank candidate $c_2$ at a rank of at least $3$ while receiving no candidate at a rank of $3$, thus constituting a rank-PJR violation.} \new{In fact, as the PFR outcome is the only ex-ante PSC outcome in this instance,} this also shows that ex-ante PSC and ex-post rank-PJR are incompatible.
\end{example}

\subsection{Consequences for Party List Elections}
A special case of multiwinner voting are so-called party list elections \citep{BGP+24a} in which a candidate can not only be elected once but multiple times. Of special interest is the apportionment problem. In it, voters give their vote for a \emph{single} party, which then is aggregated into seats for the parties. That is, there are $m$ parties, with party $i$ receiving a $p_i$ share of the votes and $h$ seats to be distributed among the parties. Just as for multiwinner voting, there are also ``proportionality'' axioms for apportionment. Most prominently, the \emph{quota} axiom states that party $i$ should receive either $\lfloor p_i h \rfloor$ or $\lceil p_i h \rceil$ seats, i.e., the number of seats for each party should be correctly rounded. Since allocation rules satisfying quota and other prominent desirable axioms do not exist \citep{BaYo01a}, recent work has studied the probabilistic allocation of seats \citep{Grim04a, GPP24a, JGU+24a}: the vote shares of the parties get probabilistically rounded in such a way that each party receives their exact quota in expectation and always their rounded quota ex-post. Here, we shortly note that we can apply our result on PFR to show that if one additionally knew the complete rankings of each voter over all parties, one can get a randomized apportionment scheme satisfying the desirable quota properties in addition to PSC (when interpreting this apportionment instance as a multiwinner voting instance). 
\begin{theorem}
    There is a randomized apportionment method satisfying ex-ante exact quota, ex-post quota and PSC. 
\end{theorem}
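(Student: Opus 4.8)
The plan is to reduce the apportionment problem to the multiwinner setting and then invoke \Cref{thm:main}. Given an apportionment instance with parties $1,\dots,m$, vote shares $p_1,\dots,p_m$, a complete ranking of every voter over the parties, and $h$ seats, I would build a multiwinner instance with candidate set $C=\{c_i^j : i\in[m],\, j\in[h]\}$ consisting of $h$ copies of each party, committee size $k=h$, and, for each voter, the linear order obtained from their ranking over parties by replacing every party $i$ with the block $c_i^1\succ c_i^2\succ\cdots\succ c_i^h$, using one fixed global ordering of the copies. A committee of this instance assigns party $i$ the number of its copies it contains, which we read off as party $i$'s seat count. Running PFR on this instance and taking the decomposition into a distribution over PSC committees from \Cref{thm:main} gives the claimed randomized apportionment method; it then remains to check the three properties.

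First I would analyse the run of PFR on the constructed instance. Since every voter's list has $mh$ candidates and the process stops once total eaten weight equals $h$, it runs for time $h/n$ and every voter eats exactly $h/n$ units. The $n_i=p_in$ voters whose top party is $i$ all start on $c_i^1$ and, because the $h$ copies of party $i$ form a common top block and $p_ih\le h$, they never leave this block; collectively they eat $n_i\cdot h/n=p_ih$ units, spread over $c_i^1,\dots,c_i^h$ in order. Moreover no other voter ever reaches a copy of party $i$, since each such voter only ever consumes a prefix (of weight $\le h$) of the block of their own top party. Hence PFR produces $p_{c_i^j}=1$ for $j\le\lfloor p_ih\rfloor$, $p_{c_i^{\lfloor p_ih\rfloor+1}}=p_ih-\lfloor p_ih\rfloor$ when this fractional part is positive, and $p_{c_i^j}=0$ otherwise. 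In particular each party has at most one ``fractional'' copy, and $\sum_j p_{c_i^j}=p_ih$; this already yields ex-ante exact quota, because in any decomposition respecting the marginals the expected number of selected copies of party $i$ equals $\sum_j p_{c_i^j}=p_ih$.

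For PSC there is nothing further to do: \Cref{thm:main} guarantees that every committee in the support of the decomposition satisfies PSC of the constructed multiwinner instance, which is exactly the multiwinner interpretation of the PSC requirement. For ex-post quota, note that every committee $W$ in the support has $|W|=h$ and contains each copy at most once, so party $i$ receives $|W\cap\{c_i^1,\dots,c_i^h\}|$ seats. Since the dependent rounding of \Cref{thm:dependent} respects marginals, every copy with probability $1$ lies in $W$ and every copy with probability $0$ does not, so the only freedom is whether the single fractional copy of party $i$ (if any) is in $W$; therefore party $i$'s seat count is either $\lfloor p_ih\rfloor$ or $\lfloor p_ih\rfloor+1=\lceil p_ih\rceil$, which is precisely ex-post quota.

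The point that needs care is the reduction itself: checking that the block construction makes PFR behave ``per party'', so that a party's copies receive probabilities $1,\dots,1,\text{fraction},0,\dots,0$ concentrated on an initial segment, and that the PSC guarantee of \Cref{thm:main} transfers verbatim to the apportionment reading. Everything else is bookkeeping — once PFR is known to place mass exactly $p_ih$ on an initial segment of party $i$'s copies, both ex-ante exact quota and ex-post quota fall out of the marginal-preservation property of the rounding alone, with no additional appeal to the degree-rounding property (ii) of \Cref{thm:dependent}. (One should also note the harmless degenerate case $n=1$, where the single voter exhausts exactly one party's block.)
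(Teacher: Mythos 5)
Your proposal is correct and follows essentially the same route as the paper: reduce the apportionment instance to a multiwinner instance, observe that PFR places probability mass exactly $p_ih$ on (an initial segment of copies of) each party because a party's candidates never run out, and then invoke \Cref{thm:main} for ex-ante exact quota and PSC, with the rounding yielding ex-post quota. Your write-up simply makes explicit what the paper leaves terse — the copy/block construction and the observation that each party has at most one fractionally weighted copy, so marginal preservation alone gives ex-post quota where the paper loosely appeals to how the dependent rounding works.
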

\begin{proof}
    For this, we note that PFR when run on a party-list instance, leads to probabilities exactly equal to the vote shares (as the candidates of a party never run out). Thus, \Cref{thm:main} immediately gives us that ex-ante exact quota and PSC are satisfied. Further, due to the dependent rounding works, we also get that no party would be either over or underrepresented thus also giving us ex-post quota.
\end{proof}
\begin{figure}
\centering
\begin{tikzpicture}[every plot/.append style={line width=2.2pt}, scale = 0.6]

\definecolor{crimson2143940}{RGB}{214,39,40}
\definecolor{darkgray176}{RGB}{176,176,176}
\definecolor{darkorange25512714}{RGB}{51,117,56}
\definecolor{forestgreen4416044}{RGB}{148,203,236}
\definecolor{mediumpurple148103189}{RGB}{194,106,119}
\definecolor{orchid227119194}{RGB}{227,119,194}
\definecolor{sienna1408675}{RGB}{220,205,125}
\definecolor{steelblue31119180}{RGB}{31,119,180}

\begin{axis}[
legend columns=2, 
legend cell align={left},
legend style={
  fill opacity=0.8,
  draw opacity=1,
  draw=none,
  text opacity=1,
  at={(0.5,1.35)},
  line width=3pt,
  anchor=north,
   /tikz/column 2/.style={
  	column sep=10pt,
  }, font=\LARGE
},
legend entries={$k = 3$, $k = 4$},
tick align=outside,
tick pos=left,
x grid style={darkgray176},
xlabel={Number of candidates},
xmin=3.7, xmax=14.3,
xtick style={color=black},
y grid style={darkgray176},
ylabel={Number of instances},
ymin=-0.749999973283545, ymax=163,
ytick style={color=black},every tick label/.append style={font=\LARGE}, 
label style={font=\LARGE},
xticklabels={$2$,$4$,$6$,$8$,$10$,$12$,$14$},
xtick={2,4,6,8,10,12,14}
]
\addlegendimage{darkorange25512714, mark = x}
\addlegendimage{sienna1408675, mark = o}

\addplot [semithick, darkorange25512714, mark = x, mark size=3, mark options={solid}]
table {%
4 36
5 77
6 148
7 159
8 90
9 36
10 10
11 3
13 1
};
\addplot [semithick, sienna1408675, mark = o, mark size=3, mark options={solid}]
table {%
5 35
6 58
7 126
8 116
9 83
10 61
11 22
12 7
13 5
14 1
};
\end{axis}

\end{tikzpicture}
\caption{Number of instances with a given number of candidates for $k = 3$ and $k = 4$.}
\label{fig:numcands}
\end{figure}

\section{Experiments}
We conduct some small experiments to measure how many committees our algorithm in \Cref{thm:main} would actually randomize over. For this, we follow \citet{McGr23a} and study their dataset of Scottish local government elections. This dataset consists of 1049 multiwinner voting instances with ranked preferences with on average 4800 voters, between 4 and 14 candidates and typically a $k$ of either $3$ or $4$.

See \Cref{fig:numcands} for an overview over the distribution of the number of instances for the different values of $k$ and $m$. As the original preferences are not complete, we use the dataset of completed preferences constructed by \citet{BBMP25a}. In their experimental work studying proportionality in this dataset, \citeauthor{BBMP25a} already noted that in most instances, PSC is satisfied by almost all committees. In our experiments, we implement the dependent rounding scheme of \citet{GKPS06a} and sample $50.000$ committees according to it for each instance. We are not aware of any computationally efficient way to extract these probabilities from the bipartite graph itself.
\begin{figure*}[h!]
    \begin{subfigure}{\textwidth}
 \begin{subfigure}{.5\textwidth}
 \centering
\begin{tikzpicture}[every plot/.append style={line width=2.2pt},scale = 0.5]
\definecolor{crimson2143940}{RGB}{46,37,133}
\definecolor{darkgray176}{RGB}{176,176,176}
\definecolor{darkorange25512714}{RGB}{51,117,56}
\definecolor{forestgreen4416044}{RGB}{148,203,236}
\definecolor{mediumpurple148103189}{RGB}{194,106,119}
\definecolor{orchid227119194}{RGB}{227,119,194}
\definecolor{sienna1408675}{RGB}{220,205,125}
\definecolor{steelblue31119180}{RGB}{31,119,180}
\begin{axis}[
legend columns=2, 
legend cell align={left},
legend style={
  fill opacity=0.8,
  draw opacity=1,
  draw=none,
  text opacity=1,
  at={(0.65,1.45)},
  line width=3pt,
  anchor=north,
   /tikz/column 2/.style={
  	column sep=10pt,
  }, font=\LARGE
},
legend entries={$100\%$, $(100\%-70\%]$, $(70\%-40\%]$,$(40\%-10\%]$, $(10\%-0\%]$},
tick align=outside,
tick pos=left,
x grid style={darkgray176},
xlabel={Number of candidates in instance},
xmin=4, xmax=10,
xtick style={color=black},
y grid style={darkgray176},
ylabel={Ratio of candidates with probability},
ymin=0, ymax=1,
ytick style={color=black},every tick label/.append style={font=\LARGE}, 
label style={font=\LARGE},
xticklabels={$4$, $5$, $6$, $7$, $8$, $9$, $10$},
xtick={4,5,6,7,8,9,10},
area style,
]
\addlegendimage{darkorange25512714}
\addlegendimage{forestgreen4416044}
\addlegendimage{crimson2143940}
\addlegendimage{mediumpurple148103189}
\addlegendimage{sienna1408675}

\addplot [fill=sienna1408675] coordinates
{(4,0.0)(5,0.023376623376623377)(6,0.07207207207207207)(7,0.14375561545372867)(8,0.22916666666666666)(9,0.2592592592592593)(10,0.31)
(10,0) (9,0) (8,0) (7,0) (6,0) (5,0) (4,0)};

\addplot [fill=mediumpurple148103189] coordinates
{(4,0.13888888888888884)(5,0.2987012987012987)(6,0.40990990990990994)(7,0.49955076370170715)(8,0.5819444444444444)(9,0.6203703703703705)(10,0.7)
(10,0.31)(9,0.2592592592592593)(8,0.22916666666666666)(7,0.14375561545372867)(6,0.07207207207207207)(5,0.023376623376623377)(4,0.0)};

\addplot [fill=crimson2143940] coordinates
{(4,0.375)(5,0.6103896103896104)(6,0.7195945945945945)(7,0.788858939802336)(8,0.825)(9,0.8703703703703703)(10,0.88)
(10,0.7)(9,0.6203703703703705)(8,0.5819444444444444)(7,0.49955076370170715)(6,0.40990990990990994)(5,0.2987012987012987)(4,0.13888888888888884)};

\addplot [fill=forestgreen4416044] coordinates
{(4,0.6666666666666667)(5,0.8545454545454545)(6,0.9346846846846847)(7,0.958670260557053)(8,0.9694444444444444)(9,0.9753086419753086)(10,0.97)
(10,0.88)(9,0.8703703703703703)(8,0.825)(7,0.788858939802336)(6,0.7195945945945945)(5,0.6103896103896104)(4,0.375)};
 
\addplot [fill=darkorange25512714] coordinates
{(4,0.6666666666666667)(5,0.8545454545454545)(6,0.9346846846846847)(7,0.958670260557053)(8,0.9694444444444444)(9,0.9753086419753086)(10,0.97)
(10,1) (9,1) (8,1) (7,1) (6,1) (5,1) (4,1)};
\end{axis}
\end{tikzpicture}
\end{subfigure}%
\hfil
\begin{subfigure}{.5\textwidth}
\centering
\begin{tikzpicture}[every plot/.append style={line width=2.2pt},scale = 0.5]
\definecolor{crimson2143940}{RGB}{46,37,133}
\definecolor{darkgray176}{RGB}{176,176,176}
\definecolor{darkorange25512714}{RGB}{51,117,56}
\definecolor{forestgreen4416044}{RGB}{148,203,236}
\definecolor{mediumpurple148103189}{RGB}{194,106,119}
\definecolor{orchid227119194}{RGB}{227,119,194}
\definecolor{sienna1408675}{RGB}{220,205,125}
\definecolor{steelblue31119180}{RGB}{31,119,180}
\begin{axis}[
legend columns=2, 
legend cell align={left},
legend style={
  fill opacity=0.8,
  draw opacity=1,
  draw=none,
  text opacity=1,
  at={(0.65,1.45)},
  line width=3pt,
  anchor=north,
   /tikz/column 2/.style={
  	column sep=10pt,
  }, font=\LARGE
},
legend entries={$100\%$, $(100\%-70\%]$, $(70\%-40\%]$,$(40\%-10\%]$, $(10\%-0\%]$},
tick align=outside,
tick pos=left,
x grid style={darkgray176},
xlabel={Number of candidates in instance},
xmin=4, xmax=10,
xtick style={color=black},
y grid style={darkgray176},
ylabel={Ratio of candidates with probability},
ymin=0, ymax=1,
ytick style={color=black},every tick label/.append style={font=\LARGE}, 
label style={font=\LARGE},
xticklabels={$5$, $6$, $7$, $8$, $9$, $10$, $11$},
xtick={4,5,6,7,8,9,10},
area style,
]
\addlegendimage{darkorange25512714}
\addlegendimage{forestgreen4416044}
\addlegendimage{crimson2143940}
\addlegendimage{mediumpurple148103189}
\addlegendimage{sienna1408675}

\addplot [fill=sienna1408675] coordinates
{(4,0.0) (5,0.017241) (6,0.066893) (7,0.117457) (8,0.182062) (9,0.216393) (10,0.280992)
(10,0) (9,0) (8,0) (7,0) (6,0) (5,0) (4,0)};

\addplot [fill=mediumpurple148103189] coordinates
{(4,0.0) (5,0.017241) (6,0.066893) (7,0.117457) (8,0.182062) (9,0.216393) (10,0.280992)
(10,0.6446280991735538)(9,0.5475409836065575)(8,0.49665327978580986)(7,0.4224137931034483)(6,0.3310657596371882)(5,0.1954022988505747)(4,0.07428571428571418)};

\addplot [fill=crimson2143940] coordinates
{(4,0.07428571428571418)(5,0.1954022988505747)(6,0.3310657596371882)(7,0.4224137931034483)(8,0.49665327978580986)(9,0.5475409836065575)(10,0.6446280991735538)
(10,0.8099173553719008)(9,0.8065573770491803)(8,0.7563587684069611)(7,0.697198275862069)(6,0.6179138321995465)(5,0.4885057471264368)(4,0.3028571428571428)};

\addplot [fill=forestgreen4416044] coordinates
{(4,0.3028571428571428)(5,0.4885057471264368)(6,0.6179138321995465)(7,0.697198275862069)(8,0.7563587684069611)(9,0.8065573770491803)(10,0.8099173553719008)
(10,0.9256198347107438)(9,0.9377049180327869)(8,0.9116465863453815)(7,0.9170258620689655)(6,0.8673469387755102)(5,0.8333333333333334)(4,0.6171428571428571)};

\addplot [fill=darkorange25512714] coordinates
{(4,0.6171428571428571)(5,0.8333333333333334)(6,0.8673469387755102)(7,0.9170258620689655)(8,0.9116465863453815)(9,0.9377049180327869)(10,0.9256198347107438)
(10,1) (9,1) (8,1) (7,1) (6,1) (5,1) (4,1)};
\end{axis}
\end{tikzpicture}
  \label{fig:candprobs}
  
\end{subfigure}%
\end{subfigure}
\caption{Both plots depict the fraction of candidates who are assigned are assigned a probability in a certain range  as a stackplot (for $k = 3$ on the left and $k = 4$ on the right). }
\label{fig:distr}
\end{figure*}
First, for the individual probabilities given by PFR, we see in ~\Cref{fig:distr} that typically --- especially for larger values of $m$ --- about $20\%$ to $40 \%$ of the candidates have a probability of at least $70\%$ to get chosen by PFR. \new{On the other hand, the ratio of candidates with a probability of less than $40\%$, is increasing with $m$ and even reaches $50\%$ for larger values of $m$. Thus, several candidates, typically have quite a low, yet non neglible, probability of getting elected.} 

Second, we measure the total number of committees with a positive probability and then the number of committees in the top $75\%$ (i.e., how many committees are needed to get a probability mass of at least $75\%$, if there were four committees sampled, each with probability $25\%$ this number would be $3$), $90\%$, $95\%$, and $99\%$ and compare these to the total number of committees satisfying PSC. We display these in \Cref{fig:sub1} for all values of $k$ and $m$ with more than $10$ instances. In it, we can see the effect already noticed by \citet{BBMP25a}, most committees, especially for smaller values of $m$ satisfy PSC. Indeed, also most of these committees satisfying PSC were also sampled by us after fifty thousand samples. In fact, as already observed by \citet{BBMP25a} in several instances, barely any non-trivial solid coalitions exist. Thus, there are very few constraints on the output of PFR, which in turn leads to most committees being able to be selected.  For the number of committees sampled with higher probabilities, one can for instance see that on average for $k = 3$ and $m = 7$ one of $7$ committees will be sampled with probability at least $75\%$ while on average with $99\%$ probability it will be one of $19$, with on average there being about $25$ out of the $35$ total committees satisfying PSC. \new{We further note that the committee with the largest probability of being selected usually has a probability between $15\%$ and $35\%$.}
\wrapstuffclear
\begin{figure*}[t]
  \begin{subfigure}{.5\textwidth}
  \centering
\begin{tikzpicture}[every plot/.append style={line width=2.2pt}, scale = 0.6]

\definecolor{crimson2143940}{RGB}{46,37,133}
\definecolor{darkgray176}{RGB}{176,176,176}
\definecolor{darkorange25512714}{RGB}{51,117,56}
\definecolor{forestgreen4416044}{RGB}{148,203,236}
\definecolor{mediumpurple148103189}{RGB}{194,106,119}
\definecolor{orchid227119194}{RGB}{227,119,194}
\definecolor{sienna1408675}{RGB}{220,205,125}
\definecolor{steelblue31119180}{RGB}{31,119,180}

\begin{axis}[
legend columns=3, 
legend cell align={left},
legend style={
  fill opacity=0.8,
  draw opacity=1,
  draw=none,
  text opacity=1,
  at={(0.5,1.35)},
  line width=3pt,
  anchor=north,
   /tikz/column 2/.style={
  	column sep=10pt,
  }, font=\LARGE
},
legend entries={PSC,
	Positive,
	$99\%$, $95\%$,
	$90\%$, $75\%$},
tick align=outside,
tick pos=left,
x grid style={darkgray176},
xlabel={Number of candidates},
xmin=3.7, xmax=10.3,
xtick style={color=black},
y grid style={darkgray176},
ylabel={Average number of committees},
ymin=-0.749999973283545, ymax=80,
ytick style={color=black},every tick label/.append style={font=\LARGE}, 
label style={font=\LARGE},
xticklabels={$4$,$5$,$6$,$7$,$8$,$9$, $10$,$11$,$14$},
xtick={4,5,6,7,8,9,10,11}
]
\addlegendimage{darkorange25512714, mark = x, mark size=5}
\addlegendimage{forestgreen4416044, mark = o, mark size=4}
\addlegendimage{orchid227119194, mark = diamond, mark size=4}
\addlegendimage{sienna1408675, mark = square, mark size=4}
\addlegendimage{mediumpurple148103189, mark = oplus, mark size=4}
\addlegendimage{crimson2143940, mark = asterisk, mark size=5}

\addplot [semithick, darkorange25512714, mark = x, mark size=3, mark options={solid}]
table {%
4 2.86111111111111
5 6.83116883116883
6 15.0067567567568
7 25.7295597484277
8 40.7333333333333
9 63.0833333333333
10 78
};
\addplot [semithick, forestgreen4416044, , mark = o, mark size=3, mark options={solid}]
table {%
4 2.66666666666667
5 6.72727272727273
6 14.3040540540541
7 25.0188679245283
8 39.4
9 60
10 75.7
};
\addplot [semithick, crimson2143940, , mark = asterisk, mark size=4, mark options={solid}]
table {%
4 1.86111111111111
5 3.11688311688312
6 4.5
7 6.57232704402516
8 8.44444444444444
9 10.8055555555556
10 15.5
};
\addplot [semithick, mediumpurple148103189, , mark = oplus, mark size=3, mark options={solid}]
table {%
4 2.38888888888889
5 4.54545454545455
6 7.2027027027027
7 10.9811320754717
8 14.9333333333333
9 20.25
10 29.1
};
\addplot [semithick, sienna1408675, , mark = square, mark size=3, mark options={solid}]
table {%
4 2.5
5 5.22077922077922
6 8.95945945945946
7 14.0251572327044
8 19.6666666666667
9 26.9166666666667
10 39
};
\addplot [semithick, orchid227119194, , mark = diamond, mark size=3, mark options={solid}]
table {%
4 2.66666666666667
5 6.31168831168831
6 11.7432432432432
7 19.5031446540881
8 28.5666666666667
9 40.4444444444444
10 55.9
};

\addplot [line width=1.5pt, black,dashed]
table{
4 4
5 10
6 20
7 35
8 56
9 84
10 120
};
\end{axis}

\end{tikzpicture}
\end{subfigure}%
 \begin{subfigure}{.5\textwidth}
  \centering
\begin{tikzpicture}[every plot/.append style={line width=2.2pt},scale = 0.6]

\definecolor{crimson2143940}{RGB}{46,37,133}
\definecolor{darkgray176}{RGB}{176,176,176}
\definecolor{darkorange25512714}{RGB}{51,117,56}
\definecolor{forestgreen4416044}{RGB}{148,203,236}
\definecolor{mediumpurple148103189}{RGB}{194,106,119}
\definecolor{orchid227119194}{RGB}{227,119,194}
\definecolor{sienna1408675}{RGB}{220,205,125}
\definecolor{steelblue31119180}{RGB}{31,119,180}

\begin{axis}[
legend columns=3, 
legend cell align={left},
legend style={
  fill opacity=0.8,
  draw opacity=1,
  draw=none,
  text opacity=1,
  at={(0.5,1.35)},
  line width=3pt,
  anchor=north,
   /tikz/column 2/.style={
  	column sep=10pt,
  }, font=\LARGE
},
legend entries={PSC,
	Positive,
	$99\%$, $95\%$,
	$90\%$, $75\%$},
tick align=outside,
tick pos=left,
x grid style={darkgray176},
xlabel={Number of candidates},
xmin=4.7, xmax=11.3,
xtick style={color=black},
y grid style={darkgray176},
ylabel={Average number of committees},
ymin=-0.749999973283545, ymax=150,
ytick style={color=black},every tick label/.append style={font=\LARGE}, 
label style={font=\LARGE},
xticklabels={$5$, $6$, $7$, $8$, $9$, $10$, $11$},
xtick={5,6,7,8,9,10,11}
]
\addlegendimage{darkorange25512714, mark = x, mark size=5}
\addlegendimage{forestgreen4416044, mark = o, mark size=4}
\addlegendimage{orchid227119194, mark = diamond, mark size=4}
\addlegendimage{sienna1408675, mark = square, mark size=4}
\addlegendimage{mediumpurple148103189, mark = oplus, mark size=4}
\addlegendimage{crimson2143940, mark = asterisk, mark size=5}

\addplot [semithick, darkorange25512714, mark = x, mark size=3, mark options={solid}]
table {%
5 3.51428571428571
6 10.1034482758621
7 20.031746031746
8 40.5862068965517
9 62.3734939759036
10 113.72131147541
11 126.363636363636
};
\addplot [semithick, forestgreen4416044, mark = o, mark size=3, mark options={solid}]
table {%
5 2.94285714285714
6 9.17241379310345
7 19.0555555555556
8 38.3706896551724
9 59.433734939759
10 105.049180327869
11 115.363636363636
};
\addplot [semithick, crimson2143940, mark = asterisk, mark size=4, mark options={solid}]
table {%
5 2.08571428571429
6 3.72413793103448
7 5.54761904761905
8 8.17241379310345
9 11.1084337349398
10 17.1639344262295
11 20.6818181818182
};
\addplot [semithick, mediumpurple148103189, mark = oplus, mark size=3, mark options={solid}]
table {%
5 2.51428571428571
6 5.55172413793103
7 9.03174603174603
8 14.2931034482759
9 20.3012048192771
10 32.5737704918033
11 39.8636363636364
};
\addplot [semithick, sienna1408675, mark = square, mark size=3, mark options={solid}]
table {%
5 2.77142857142857
6 6.74137931034483
7 11.2936507936508
8 18.7068965517241
9 27.3012048192771
10 44.3606557377049
11 54.2727272727273
};
\addplot [semithick, orchid227119194, mark = diamond, mark size=3, mark options={solid}]
table {%
5 2.91428571428571
6 8.46551724137931
7 15.2619047619048
8 27.4310344827586
9 41.0361445783133
10 69.0655737704918
11 81.0909090909091
};
\addplot [line width=1.5pt, black,dashed]
table{
5 5
6 15
7 35
8 70
9 126
10 210
11 330
};
\end{axis}

\end{tikzpicture}
\end{subfigure}%
\caption{Average number of committees satisfying PSC, having a positive probability of being sampled after 50 thousand samples, and how many committees are needed to get a probability mass of at least $75\%$, $90\%$, $95\%$, and $99\%$, for $k = 3$ on the left and $k = 4$ on the right. The black dashed line indicates the total number of committees.}
\label{fig:sub1}
\end{figure*}

\new{Thus, while PFR offers an appealing theoretical solution, it also inherits several key weaknesses of PSC as an axiom. Namely, due to its dependence on solid coalitions, the relative lack of meaningful solid coalitions} \citep{BBMP25a} \new{in-turn leads to a lack of constraints for PFR, which still enables a large fraction of all PSC committees to be chosen by PFR. Hence, if one would want near-certainty on the outcome of the lottery, PFR is most likely not a good choice as an election rule. If, however, one is content with its use as a lottery, PFR might provide an interesting monotone alternative to classical deterministic voting rules.}  
\section{Conclusion}
We studied the problem of finding multiwinner voting rules that are both candidate monotone and satisfy the proportionality axiom of \emph{proportionality for solid coalitions}. First, we show that the \emph{non-resolute} rule selecting all committees satisfying PSC is indeed candidate monotone. Secondly, we design a \emph{probabilistic} voting rule based on \emph{Phragmén's Ordered Rule} and show that this probabilistic voting rule is both candidate monotone and can be implemented to randomize only over committees satisfying PSC. 

There are several open questions and directions for future work. Most importantly, whether there is a \emph{resolute} voting rule satisfying PSC and candidate monotonicity is still open. Further, it is open whether the results presented here can be extended to stronger proportionality axioms (such as rank-PJR of \citet{BrPe23a}) or to the setting with weak preferences and the axiom of generalized PSC. For both, a characterization as provided by \citet{AzLe22a} of the respective proportionality axioms is missing as well.

\section{Acknowledgments}
Thank you to Dominik Peters for various discussions on this topic. This work was partially supported by the Singapore Ministry of Education under grant number MOE-T2EP20221-0001.

\section{Compliance with Ethical Standards}
This work was partially supported by the Singapore Ministry of Education under grant number MOE-T2EP20221-0001. No human participants or animals were used for this paper.
\bibliographystyle{abbrvnat}
\bibliography{dist, abb, algo}

@STRING{proc = {Proceedings of the }}

@STRING{ijcai = { International Joint Conference on Artificial Intelligence (IJCAI)}}

@STRING{aamas = { International Conference on Autonomous Agents and Multiagent Systems (AAMAS)}}

@STRING{aaai = { AAAI Conference on Artificial Intelligence (AAAI)}}

@STRING{acmec = { ACM Conference on Economics and Computation (ACM-EC)}}

@STRING{wine = { International Conference on Web and Internet Economics (WINE)}}

@STRING{neurips = { Conference on Neural Information Processing Systems (NeurIPS)}}

@STRING{springer = {Springer-Verlag}}

@STRING{acm = {ACM Press}}

@inproceedings{BBMP25a,
	author = {T. Bardal and M. Brill and D. McCune and J. Peters},
	booktitle = proc # {39th} # aaai,
	date-added = {2024-12-10 11:32:23 +0800},
	date-modified = {2024-12-17 14:01:52 +0800},
	keywords = {ALGO_PR},
	note = {Forthcoming.},
	title = {Proportional Representation in Practice: Quantifying Proportionality in Ordinal Elections},
	year = {2025}}

@article{GPP24a,
	author = {P. G{\"o}lz and D. Peters and A. D. Procaccia},
	date-added = {2024-10-10 14:17:10 +0200},
	date-modified = {2024-12-17 14:03:00 +0800},
	journal = {Operations Research},
	keywords = {apportionment},
	note = {Forthcoming.},
	title = {In This Apportionment Lottery, the House Always Wins.},
	year = {2024}}

@inproceedings{JGU+24a,
	author = {J. Correa and P. G{\"o}lz and U. Schmidt-Kraepelin and J. Tucker-Foltz and V. Verdugo},
	booktitle = proc # {25th} # acmec,
	date-added = {2024-10-10 14:15:50 +0200},
	date-modified = {2025-01-07 11:55:41 +0800},
	pages = {71},
	publisher = {ACM Press},
	title = {Monotone Randomized Apportionment},
	year = {2024}}

@techreport{ALR24a,
	author = {H. Aziz and P. Lederer and D. Peters and J. Peters and A. Ritossa},
	date-added = {2024-10-10 12:34:49 +0200},
	date-modified = {2025-04-10 11:33:08 +0800},
	institution = {arXiv:2406.19689 [cs.GT]},
	title = {Committee Monotonicity and Proportional Representation for Ranked Preferences},
	year = {2025},
	bdsk-file-1 = {YnBsaXN0MDDSAQIDBFxyZWxhdGl2ZVBhdGhYYm9va21hcmtfEG8uLi9saWJyYXJ5L0F6aXogZXQgYWwuIENvbW1pdHRlZSBNb25vdG9uaWNpdHkgYW5kIFByb3BvcnRpb25hbCBSZXByZXNlbnRhdGlvbiBmb3IgUmFua2VkIFByZWZlcmVuY2VzICgyMDI1KS5wZGZPEQZAYm9va0AGAAAAAAQQMAAAAAAAAAAAAAAAAAAAAAAAAAAAAAAAAAAAAAAAAAAAAAAAMAUAAAUAAAABAQAAVXNlcnMAAAAMAAAAAQEAAGphbm5pa3BldGVycwcAAAABAQAATGlicmFyeQAMAAAAAQEAAENsb3VkU3RvcmFnZSYAAAABAQAAR29vZ2xlRHJpdmUtamFubmlrcGV0ZXJzMjUxMkBnbWFpbC5jb20AABcAAAABAQAALnNob3J0Y3V0LXRhcmdldHMtYnktaWQAIQAAAAEBAAAxMmZ3ZF9Ucjc2TTZiOVlMRGt4NUtlVTk0NXVJWnNiYmoAAAAIAAAAAQEAAHJlc2VhcmNoBwAAAAEBAABsaWJyYXJ5AGQAAAABAQAAQXppeiBldCBhbC4gQ29tbWl0dGVlIE1vbm90b25pY2l0eSBhbmQgUHJvcG9ydGlvbmFsIFJlcHJlc2VudGF0aW9uIGZvciBSYW5rZWQgUHJlZmVyZW5jZXMgKDIwMjUpLnBkZigAAAABBgAABAAAABQAAAAoAAAAOAAAAEwAAAB8AAAAnAAAAMgAAADYAAAA6AAAAAgAAAAEAwAAFV0AAAAAAAAIAAAABAMAAODxAwAAAAAACAAAAAQDAADn8QMAAAAAAAgAAAAEAwAAN61dAgAAAAAIAAAABAMAADytXQIAAAAACAAAAAQDAACdrV0CAAAAAAgAAAAEAwAATOFdAgAAAAAIAAAABAMAAFbhXQIAAAAACAAAAAQDAABj4V0CAAAAAAgAAAAEAwAAaPrsAwAAAAAoAAAAAQYAAIQBAACUAQAApAEAALQBAADEAQAA1AEAAOQBAAD0AQAABAIAABQCAAAIAAAAAAQAAEHG07nFr4VJGAAAAAECAAABAAAAAAAAAA8AAAAAAAAAAAAAAAAAAAAIAAAABAMAAAgAAAAAAAAABAAAAAMDAAD1AQAACAAAAAEJAABmaWxlOi8vLwwAAAABAQAATWFjaW50b3NoIEhECAAAAAQDAAAAUKEbcwAAAAgAAAAABAAAQcRAB+AAAAAkAAAAAQEAAERGOEVERjQzLUE3MUEtNEE5My1CRTY2LUI5Rjg0MjAwM0U0NxgAAAABAgAAgQAAAAEAAADvEwAAAQAAAAAAAAAAAAAAAQAAAAEBAAAvAAAAAAAAAAEFAAAaAAAAAQEAAE5TVVJMRG9jdW1lbnRJZGVudGlmaWVyS2V5AAAEAAAAAwMAANOpAQCxAQAAAQIAADEwYjcxYTQwMDUxYTM5ZmYxYTRkZDNlYjU4MzFiMDVmY2MzNzVkYjZjNzNmMmE0ZGEzODA4YzNkOGNlMWNiZWQ7MDA7MDAwMDAwMDA7MDAwMDAwMDA7MDAwMDAwMDA7MDAwMDAwMDAwMDAwMDAyMDtjb20uYXBwbGUuYXBwLXNhbmRib3gucmVhZC13cml0ZTswMTswMTAwMDAxMjswMDAwMDAwMDAzZWNmYTY4OzRjOy91c2Vycy9qYW5uaWtwZXRlcnMvbGlicmFyeS9jbG91ZHN0b3JhZ2UvZ29vZ2xlZHJpdmUtamFubmlrcGV0ZXJzMjUxMkBnbWFpbC5jb20vLnNob3J0Y3V0LXRhcmdldHMtYnktaWQvMTJmd2RfdHI3Nm02Yjl5bGRreDVrZXU5NDV1aXpzYmJqL3Jlc2VhcmNoL2xpYnJhcnkvYXppeiBldCBhbC4gY29tbWl0dGVlIG1vbm90b25pY2l0eSBhbmQgcHJvcG9ydGlvbmFsIHJlcHJlc2VudGF0aW9uIGZvciByYW5rZWQgcHJlZmVyZW5jZXMgKDIwMjUpLnBkZgAAAADYAAAA/v///wEAAAAAAAAAEQAAAAQQAABUAQAAAAAAAAUQAAAkAgAAAAAAABAQAABkAgAAAAAAAEAQAABUAgAAAAAAAAIgAAAwAwAAAAAAAAUgAACgAgAAAAAAABAgAACwAgAAAAAAABEgAADkAgAAAAAAABIgAADEAgAAAAAAABMgAADUAgAAAAAAACAgAAAQAwAAAAAAADAgAAA8AwAAAAAAAAHAAACEAgAAAAAAABHAAAAUAAAAAAAAABLAAACUAgAAAAAAAIDwAAB0AwAAAAAAAEQDAIBoAwAAAAAAAAAIAA0AGgAjAJUAAAAAAAACAQAAAAAAAAAFAAAAAAAAAAAAAAAAAAAG2Q==}}

@article{GKPS06a,
	author = {R. Gandhi and S. Khuller and S. Parthasarathy and A. Srinivasan},
	date-added = {2024-10-09 16:51:59 +0200},
	date-modified = {2024-10-09 16:53:02 +0200},
	journal = {Journal of the ACM},
	number = {3},
	pages = {324--360},
	title = {Dependent rounding and its applications to approximation algorithms},
	volume = {53},
	year = {2006}}

@inproceedings{SuVo24a,
	author = {M. Suzuki and J. Vollen},
	booktitle = proc # {25th} # acmec,
	date-added = {2024-03-28 10:26:00 +0100},
	date-modified = {2025-01-07 11:55:37 +0800},
	pages = {964--983},
	publisher = {ACM Press},
	title = {Maximum Flow is Fair: A Network Flow Approach to Committee Voting},
	year = {2024},
	bdsk-file-1 = {YnBsaXN0MDDSAQIDBFxyZWxhdGl2ZVBhdGhYYm9va21hcmtfEGQuLi9saWJyYXJ5L1N1enVraSBldCBhbC4gTWF4aW11bSBGbG93IGlzIEZhaXIgQSBOZXR3b3JrIEZsb3cgQXBwcm9hY2ggdG8gQ29tbWl0dGVlIFZvdGluZyAoMjAyNCkucGRmTxEGLGJvb2ssBgAAAAAEEDAAAAAAAAAAAAAAAAAAAAAAAAAAAAAAAAAAAAAAAAAAAAAAABwFAAAFAAAAAQEAAFVzZXJzAAAADAAAAAEBAABqYW5uaWtwZXRlcnMHAAAAAQEAAExpYnJhcnkADAAAAAEBAABDbG91ZFN0b3JhZ2UmAAAAAQEAAEdvb2dsZURyaXZlLWphbm5pa3BldGVyczI1MTJAZ21haWwuY29tAAAXAAAAAQEAAC5zaG9ydGN1dC10YXJnZXRzLWJ5LWlkACEAAAABAQAAMTJmd2RfVHI3Nk02YjlZTERreDVLZVU5NDV1SVpzYmJqAAAACAAAAAEBAAByZXNlYXJjaAcAAAABAQAAbGlicmFyeQBZAAAAAQEAAFN1enVraSBldCBhbC4gTWF4aW11bSBGbG93IGlzIEZhaXIgQSBOZXR3b3JrIEZsb3cgQXBwcm9hY2ggdG8gQ29tbWl0dGVlIFZvdGluZyAoMjAyNCkucGRmAAAAKAAAAAEGAAAEAAAAFAAAACgAAAA4AAAATAAAAHwAAACcAAAAyAAAANgAAADoAAAACAAAAAQDAAAVXQAAAAAAAAgAAAAEAwAA4PEDAAAAAAAIAAAABAMAAOfxAwAAAAAACAAAAAQDAAA3rV0CAAAAAAgAAAAEAwAAPK1dAgAAAAAIAAAABAMAAJ2tXQIAAAAACAAAAAQDAABM4V0CAAAAAAgAAAAEAwAAVuFdAgAAAAAIAAAABAMAAGPhXQIAAAAACAAAAAQDAADw410CAAAAACgAAAABBgAAfAEAAIwBAACcAQAArAEAALwBAADMAQAA3AEAAOwBAAD8AQAADAIAAAgAAAAABAAAQcXat6wAAAAYAAAAAQIAAAEAAAAAAAAADwAAAAAAAAAAAAAAAAAAAAgAAAAEAwAACAAAAAAAAAAEAAAAAwMAAPUBAAAIAAAAAQkAAGZpbGU6Ly8vDAAAAAEBAABNYWNpbnRvc2ggSEQIAAAABAMAAABQoRtzAAAACAAAAAAEAABBxEAH4AAAACQAAAABAQAAREY4RURGNDMtQTcxQS00QTkzLUJFNjYtQjlGODQyMDAzRTQ3GAAAAAECAACBAAAAAQAAAO8TAAABAAAAAAAAAAAAAAABAAAAAQEAAC8AAAAAAAAAAQUAABoAAAABAQAATlNVUkxEb2N1bWVudElkZW50aWZpZXJLZXkAAAQAAAADAwAAIokBAKYBAAABAgAAZDAwZDg0YTU3NDQxMjFkYmQ1NGJlYzZlMmMwZDA5YmNlNDQxYzkxOTA5NmM0NTMwOTVjNjY4YzgxYzc0NGUzMDswMDswMDAwMDAwMDswMDAwMDAwMDswMDAwMDAwMDswMDAwMDAwMDAwMDAwMDIwO2NvbS5hcHBsZS5hcHAtc2FuZGJveC5yZWFkLXdyaXRlOzAxOzAxMDAwMDEyOzAwMDAwMDAwMDI1ZGUzZjA7NGM7L3VzZXJzL2phbm5pa3BldGVycy9saWJyYXJ5L2Nsb3Vkc3RvcmFnZS9nb29nbGVkcml2ZS1qYW5uaWtwZXRlcnMyNTEyQGdtYWlsLmNvbS8uc2hvcnRjdXQtdGFyZ2V0cy1ieS1pZC8xMmZ3ZF90cjc2bTZiOXlsZGt4NWtldTk0NXVpenNiYmovcmVzZWFyY2gvbGlicmFyeS9zdXp1a2kgZXQgYWwuIG1heGltdW0gZmxvdyBpcyBmYWlyIGEgbmV0d29yayBmbG93IGFwcHJvYWNoIHRvIGNvbW1pdHRlZSB2b3RpbmcgKDIwMjQpLnBkZgAAANgAAAD+////AQAAAAAAAAARAAAABBAAAEwBAAAAAAAABRAAABwCAAAAAAAAEBAAAFwCAAAAAAAAQBAAAEwCAAAAAAAAAiAAACgDAAAAAAAABSAAAJgCAAAAAAAAECAAAKgCAAAAAAAAESAAANwCAAAAAAAAEiAAALwCAAAAAAAAEyAAAMwCAAAAAAAAICAAAAgDAAAAAAAAMCAAADQDAAAAAAAAAcAAAHwCAAAAAAAAEcAAABQAAAAAAAAAEsAAAIwCAAAAAAAAgPAAAGwDAAAAAAAAPAMAgGADAAAAAAAAAAgADQAaACMAigAAAAAAAAIBAAAAAAAAAAUAAAAAAAAAAAAAAAAAAAa6}}

@inproceedings{DePe24a,
	author = {T. Delemazure and D. Peters},
	booktitle = proc # {25th} # acmec,
	date-added = {2024-03-15 10:20:19 +0100},
	date-modified = {2024-12-20 11:09:52 +0800},
	pages = {50},
	publisher = acm,
	title = {Generalizing Instant Runoff Voting to Allow Indifferences},
	year = {2024},
	bdsk-file-1 = {YnBsaXN0MDDSAQIDBFxyZWxhdGl2ZVBhdGhYYm9va21hcmtfEGEuLi9saWJyYXJ5L0RlbGVtYXp1cmUgZXQgYWwuIEdlbmVyYWxpemluZyBJbnN0YW50IFJ1bm9mZiBWb3RpbmcgdG8gQWxsb3cgSW5kaWZmZXJlbmNlcyAoMjAyNCkucGRmTxEGJGJvb2skBgAAAAAEEDAAAAAAAAAAAAAAAAAAAAAAAAAAAAAAAAAAAAAAAAAAAAAAABQFAAAFAAAAAQEAAFVzZXJzAAAADAAAAAEBAABqYW5uaWtwZXRlcnMHAAAAAQEAAExpYnJhcnkADAAAAAEBAABDbG91ZFN0b3JhZ2UmAAAAAQEAAEdvb2dsZURyaXZlLWphbm5pa3BldGVyczI1MTJAZ21haWwuY29tAAAXAAAAAQEAAC5zaG9ydGN1dC10YXJnZXRzLWJ5LWlkACEAAAABAQAAMTJmd2RfVHI3Nk02YjlZTERreDVLZVU5NDV1SVpzYmJqAAAACAAAAAEBAAByZXNlYXJjaAcAAAABAQAAbGlicmFyeQBWAAAAAQEAAERlbGVtYXp1cmUgZXQgYWwuIEdlbmVyYWxpemluZyBJbnN0YW50IFJ1bm9mZiBWb3RpbmcgdG8gQWxsb3cgSW5kaWZmZXJlbmNlcyAoMjAyNCkucGRmAAAoAAAAAQYAAAQAAAAUAAAAKAAAADgAAABMAAAAfAAAAJwAAADIAAAA2AAAAOgAAAAIAAAABAMAABVdAAAAAAAACAAAAAQDAADg8QMAAAAAAAgAAAAEAwAA5/EDAAAAAAAIAAAABAMAADetXQIAAAAACAAAAAQDAAA8rV0CAAAAAAgAAAAEAwAAna1dAgAAAAAIAAAABAMAAEzhXQIAAAAACAAAAAQDAABW4V0CAAAAAAgAAAAEAwAAY+FdAgAAAAAIAAAABAMAAJHkXQIAAAAAKAAAAAEGAAB4AQAAiAEAAJgBAACoAQAAuAEAAMgBAADYAQAA6AEAAPgBAAAIAgAACAAAAAAEAABBxdIlSgAAABgAAAABAgAAAQAAAAAAAAAPAAAAAAAAAAAAAAAAAAAACAAAAAQDAAAIAAAAAAAAAAQAAAADAwAA9QEAAAgAAAABCQAAZmlsZTovLy8MAAAAAQEAAE1hY2ludG9zaCBIRAgAAAAEAwAAAFChG3MAAAAIAAAAAAQAAEHEQAfgAAAAJAAAAAEBAABERjhFREY0My1BNzFBLTRBOTMtQkU2Ni1COUY4NDIwMDNFNDcYAAAAAQIAAIEAAAABAAAA7xMAAAEAAAAAAAAAAAAAAAEAAAABAQAALwAAAAAAAAABBQAAGgAAAAEBAABOU1VSTERvY3VtZW50SWRlbnRpZmllcktleQAABAAAAAMDAADDiQEAowEAAAECAABjOWVlZmYzY2M4ZmNiZGZjNzZiMGZhMzMyNmM3YzAyMTg3ZWZkYjIxZTliNGQ1Y2E4NTczZTkxZGU0YzliMWRjOzAwOzAwMDAwMDAwOzAwMDAwMDAwOzAwMDAwMDAwOzAwMDAwMDAwMDAwMDAwMjA7Y29tLmFwcGxlLmFwcC1zYW5kYm94LnJlYWQtd3JpdGU7MDE7MDEwMDAwMTI7MDAwMDAwMDAwMjVkZTQ5MTs0YzsvdXNlcnMvamFubmlrcGV0ZXJzL2xpYnJhcnkvY2xvdWRzdG9yYWdlL2dvb2dsZWRyaXZlLWphbm5pa3BldGVyczI1MTJAZ21haWwuY29tLy5zaG9ydGN1dC10YXJnZXRzLWJ5LWlkLzEyZndkX3RyNzZtNmI5eWxka3g1a2V1OTQ1dWl6c2Jiai9yZXNlYXJjaC9saWJyYXJ5L2RlbGVtYXp1cmUgZXQgYWwuIGdlbmVyYWxpemluZyBpbnN0YW50IHJ1bm9mZiB2b3RpbmcgdG8gYWxsb3cgaW5kaWZmZXJlbmNlcyAoMjAyNCkucGRmAADYAAAA/v///wEAAAAAAAAAEQAAAAQQAABIAQAAAAAAAAUQAAAYAgAAAAAAABAQAABYAgAAAAAAAEAQAABIAgAAAAAAAAIgAAAkAwAAAAAAAAUgAACUAgAAAAAAABAgAACkAgAAAAAAABEgAADYAgAAAAAAABIgAAC4AgAAAAAAABMgAADIAgAAAAAAACAgAAAEAwAAAAAAADAgAAAwAwAAAAAAAAHAAAB4AgAAAAAAABHAAAAUAAAAAAAAABLAAACIAgAAAAAAAIDwAABoAwAAAAAAADgDAIBcAwAAAAAAAAAIAA0AGgAjAIcAAAAAAAACAQAAAAAAAAAFAAAAAAAAAAAAAAAAAAAGrw==}}

@article{McGr23a,
	author = {D. McCune and A. Graham-Squire},
	date-added = {2024-03-14 16:33:28 +0100},
	date-modified = {2024-10-07 16:34:19 +0200},
	journal = {Social Choice and Welfare},
	number = {1},
	pages = {69--101},
	title = {Monotonicity Anomalies in Scottish Local Government Elections},
	volume = {63},
	year = {2024},
	bdsk-file-1 = {YnBsaXN0MDDSAQIDBFxyZWxhdGl2ZVBhdGhYYm9va21hcmtfEGEuLi9saWJyYXJ5L01jQ3VuZSBldCBhbC4gTW9ub3RvbmljaXR5IEFub21hbGllcyBpbiBTY290dGlzaCBMb2NhbCBHb3Zlcm5tZW50IEVsZWN0aW9ucyAoMjAyNCkucGRmTxEGJGJvb2skBgAAAAAEEDAAAAAAAAAAAAAAAAAAAAAAAAAAAAAAAAAAAAAAAAAAAAAAABQFAAAFAAAAAQEAAFVzZXJzAAAADAAAAAEBAABqYW5uaWtwZXRlcnMHAAAAAQEAAExpYnJhcnkADAAAAAEBAABDbG91ZFN0b3JhZ2UmAAAAAQEAAEdvb2dsZURyaXZlLWphbm5pa3BldGVyczI1MTJAZ21haWwuY29tAAAXAAAAAQEAAC5zaG9ydGN1dC10YXJnZXRzLWJ5LWlkACEAAAABAQAAMTJmd2RfVHI3Nk02YjlZTERreDVLZVU5NDV1SVpzYmJqAAAACAAAAAEBAAByZXNlYXJjaAcAAAABAQAAbGlicmFyeQBWAAAAAQEAAE1jQ3VuZSBldCBhbC4gTW9ub3RvbmljaXR5IEFub21hbGllcyBpbiBTY290dGlzaCBMb2NhbCBHb3Zlcm5tZW50IEVsZWN0aW9ucyAoMjAyNCkucGRmAAAoAAAAAQYAAAQAAAAUAAAAKAAAADgAAABMAAAAfAAAAJwAAADIAAAA2AAAAOgAAAAIAAAABAMAABVdAAAAAAAACAAAAAQDAADg8QMAAAAAAAgAAAAEAwAA5/EDAAAAAAAIAAAABAMAADetXQIAAAAACAAAAAQDAAA8rV0CAAAAAAgAAAAEAwAAna1dAgAAAAAIAAAABAMAAEzhXQIAAAAACAAAAAQDAABW4V0CAAAAAAgAAAAEAwAAY+FdAgAAAAAIAAAABAMAAN29XAMAAAAAKAAAAAEGAAB4AQAAiAEAAJgBAACoAQAAuAEAAMgBAADYAQAA6AEAAPgBAAAIAgAACAAAAAAEAABBxjH5hJFHrhgAAAABAgAAAQAAAAAAAAAPAAAAAAAAAAAAAAAAAAAACAAAAAQDAAAIAAAAAAAAAAQAAAADAwAA9QEAAAgAAAABCQAAZmlsZTovLy8MAAAAAQEAAE1hY2ludG9zaCBIRAgAAAAEAwAAAFChG3MAAAAIAAAAAAQAAEHEQAfgAAAAJAAAAAEBAABERjhFREY0My1BNzFBLTRBOTMtQkU2Ni1COUY4NDIwMDNFNDcYAAAAAQIAAIEAAAABAAAA7xMAAAEAAAAAAAAAAAAAAAEAAAABAQAALwAAAAAAAAABBQAAGgAAAAEBAABOU1VSTERvY3VtZW50SWRlbnRpZmllcktleQAABAAAAAMDAAC2qAEAowEAAAECAAAwYWE4M2YyOWM2YmJmZjBkZjU3MDc0OTJhNzQ3NDNmODEwOTg5NmFkMDY4MjY3MDEyNGRiZGRjMDdkNTg2YmQxOzAwOzAwMDAwMDAwOzAwMDAwMDAwOzAwMDAwMDAwOzAwMDAwMDAwMDAwMDAwMjA7Y29tLmFwcGxlLmFwcC1zYW5kYm94LnJlYWQtd3JpdGU7MDE7MDEwMDAwMTI7MDAwMDAwMDAwMzVjYmRkZDs0YzsvdXNlcnMvamFubmlrcGV0ZXJzL2xpYnJhcnkvY2xvdWRzdG9yYWdlL2dvb2dsZWRyaXZlLWphbm5pa3BldGVyczI1MTJAZ21haWwuY29tLy5zaG9ydGN1dC10YXJnZXRzLWJ5LWlkLzEyZndkX3RyNzZtNmI5eWxka3g1a2V1OTQ1dWl6c2Jiai9yZXNlYXJjaC9saWJyYXJ5L21jY3VuZSBldCBhbC4gbW9ub3RvbmljaXR5IGFub21hbGllcyBpbiBzY290dGlzaCBsb2NhbCBnb3Zlcm5tZW50IGVsZWN0aW9ucyAoMjAyNCkucGRmAADYAAAA/v///wEAAAAAAAAAEQAAAAQQAABIAQAAAAAAAAUQAAAYAgAAAAAAABAQAABYAgAAAAAAAEAQAABIAgAAAAAAAAIgAAAkAwAAAAAAAAUgAACUAgAAAAAAABAgAACkAgAAAAAAABEgAADYAgAAAAAAABIgAAC4AgAAAAAAABMgAADIAgAAAAAAACAgAAAEAwAAAAAAADAgAAAwAwAAAAAAAAHAAAB4AgAAAAAAABHAAAAUAAAAAAAAABLAAACIAgAAAAAAAIDwAABoAwAAAAAAADgDAIBcAwAAAAAAAAAIAA0AGgAjAIcAAAAAAAACAQAAAAAAAAAFAAAAAAAAAAAAAAAAAAAGrw==}}

@inproceedings{ALS+24a,
	author = {H. Aziz and X. Lu and M. Suzuki and J. Vollen and T. Walsh},
	booktitle = proc # {38th} # aaai,
	date-added = {2024-03-12 13:05:56 +0100},
	date-modified = {2024-04-02 14:27:27 +0200},
	pages = {9469--9476},
	title = {Fair Lotteries for Participatory Budgeting},
	year = {2024}}

@techreport{Pete23a,
	author = {J. Peters},
	date-added = {2024-03-07 14:29:52 +0100},
	date-modified = {2024-03-07 14:30:43 +0100},
	institution = {arxiv:2305.08667 [cs.GT]},
	title = {A Note on Rules Achieving Optimal Metric Distortion},
	year = {2023}}

@inproceedings{BFJ+24a,
	author = {N. Boehmer and P. Faliszewski and {\L}. Janeczko and A. Kaczmarczyk and G. Lisowski and G. Pierczy{\'n}ski and S. Rey and D. Stolicki and S. Szufa and T. W{\k a}s},
	booktitle = proc # {33rd} # ijcai,
	date-added = {2024-03-07 09:11:13 +0100},
	date-modified = {2024-08-03 14:56:40 +0200},
	pages = {7962--7970},
	title = {Guide to Numerical Experiments on Elections in Computational Social Choice},
	year = {2024}}

@article{BGP+24a,
	author = {M. Brill and P. G{\"o}lz and D. Peters and U. Schmidt-Kraepelin and K. Wilker},
	date-added = {2024-02-11 16:38:37 +0000},
	date-modified = {2024-10-15 14:03:44 +0200},
	journal = {Mathematical Programming},
	keywords = {ALGO_PR},
	number = {1--2},
	pages = {77--105},
	title = {Approval-Based Apportionment},
	volume = {203},
	year = {2024},
	bdsk-file-1 = {YnBsaXN0MDDSAQIDBFxyZWxhdGl2ZVBhdGhYYm9va21hcmtfED8uLi9saWJyYXJ5L0JyaWxsIGV0IGFsLiBBcHByb3ZhbC1CYXNlZCBBcHBvcnRpb25tZW50ICgyMDI0KS5wZGZPEQXgYm9va+AFAAAAAAQQMAAAAAAAAAAAAAAAAAAAAAAAAAAAAAAAAAAAAAAAAAAAAAAA0AQAAAUAAAABAQAAVXNlcnMAAAAMAAAAAQEAAGphbm5pa3BldGVycwcAAAABAQAATGlicmFyeQAMAAAAAQEAAENsb3VkU3RvcmFnZSYAAAABAQAAR29vZ2xlRHJpdmUtamFubmlrcGV0ZXJzMjUxMkBnbWFpbC5jb20AABcAAAABAQAALnNob3J0Y3V0LXRhcmdldHMtYnktaWQAIQAAAAEBAAAxMmZ3ZF9Ucjc2TTZiOVlMRGt4NUtlVTk0NXVJWnNiYmoAAAAIAAAAAQEAAHJlc2VhcmNoBwAAAAEBAABsaWJyYXJ5ADQAAAABAQAAQnJpbGwgZXQgYWwuIEFwcHJvdmFsLUJhc2VkIEFwcG9ydGlvbm1lbnQgKDIwMjQpLnBkZigAAAABBgAABAAAABQAAAAoAAAAOAAAAEwAAAB8AAAAnAAAAMgAAADYAAAA6AAAAAgAAAAEAwAAFV0AAAAAAAAIAAAABAMAAODxAwAAAAAACAAAAAQDAADn8QMAAAAAAAgAAAAEAwAAN61dAgAAAAAIAAAABAMAADytXQIAAAAACAAAAAQDAACdrV0CAAAAAAgAAAAEAwAATOFdAgAAAAAIAAAABAMAAFbhXQIAAAAACAAAAAQDAABj4V0CAAAAAAgAAAAEAwAA1uhdAgAAAAAoAAAAAQYAAFQBAABkAQAAdAEAAIQBAACUAQAApAEAALQBAADEAQAA1AEAAOQBAAAIAAAAAAQAAEHFym/b7Gp+GAAAAAECAAABAAAAAAAAAA8AAAAAAAAAAAAAAAAAAAAIAAAABAMAAAgAAAAAAAAABAAAAAMDAAD1AQAACAAAAAEJAABmaWxlOi8vLwwAAAABAQAATWFjaW50b3NoIEhECAAAAAQDAAAAUKEbcwAAAAgAAAAABAAAQcRAB+AAAAAkAAAAAQEAAERGOEVERjQzLUE3MUEtNEE5My1CRTY2LUI5Rjg0MjAwM0U0NxgAAAABAgAAgQAAAAEAAADvEwAAAQAAAAAAAAAAAAAAAQAAAAEBAAAvAAAAAAAAAAEFAAAaAAAAAQEAAE5TVVJMRG9jdW1lbnRJZGVudGlmaWVyS2V5AAAEAAAAAwMAAMONAQCBAQAAAQIAADdiMWU1YmI5MTY1MGI3MzA4N2QwYTcyZmEyY2VkNDg4M2M2OTI2NjA2MWE3ZjQyOGI2NDdjZTlkYTI3YTQxMTc7MDA7MDAwMDAwMDA7MDAwMDAwMDA7MDAwMDAwMDA7MDAwMDAwMDAwMDAwMDAyMDtjb20uYXBwbGUuYXBwLXNhbmRib3gucmVhZC13cml0ZTswMTswMTAwMDAxMjswMDAwMDAwMDAyNWRlOGQ2OzRjOy91c2Vycy9qYW5uaWtwZXRlcnMvbGlicmFyeS9jbG91ZHN0b3JhZ2UvZ29vZ2xlZHJpdmUtamFubmlrcGV0ZXJzMjUxMkBnbWFpbC5jb20vLnNob3J0Y3V0LXRhcmdldHMtYnktaWQvMTJmd2RfdHI3Nm02Yjl5bGRreDVrZXU5NDV1aXpzYmJqL3Jlc2VhcmNoL2xpYnJhcnkvYnJpbGwgZXQgYWwuIGFwcHJvdmFsLWJhc2VkIGFwcG9ydGlvbm1lbnQgKDIwMjQpLnBkZgAAAADYAAAA/v///wEAAAAAAAAAEQAAAAQQAAAkAQAAAAAAAAUQAAD0AQAAAAAAABAQAAA0AgAAAAAAAEAQAAAkAgAAAAAAAAIgAAAAAwAAAAAAAAUgAABwAgAAAAAAABAgAACAAgAAAAAAABEgAAC0AgAAAAAAABIgAACUAgAAAAAAABMgAACkAgAAAAAAACAgAADgAgAAAAAAADAgAAAMAwAAAAAAAAHAAABUAgAAAAAAABHAAAAUAAAAAAAAABLAAABkAgAAAAAAAIDwAABEAwAAAAAAABQDAIA4AwAAAAAAAAAIAA0AGgAjAGUAAAAAAAACAQAAAAAAAAAFAAAAAAAAAAAAAAAAAAAGSQ==}}

@inproceedings{ALS+23a,
	author = {H. Aziz and X. Lu and M. Suzuki and J. Vollen and T. Walsh},
	booktitle = proc # {19th} # wine,
	date-added = {2023-11-02 16:42:53 +0100},
	date-modified = {2024-10-07 16:26:20 +0200},
	note = {Full version arXiv:2303.03642 [cs.GT]},
	pages = {676},
	title = {Best-of-Both-Worlds Fairness in Committee Voting},
	year = {2023},
	bdsk-file-1 = {YnBsaXN0MDDSAQIDBFxyZWxhdGl2ZVBhdGhYYm9va21hcmtfEFIuLi9saWJyYXJ5L0F6aXogZXQgYWwuIEJlc3Qtb2YtQm90aC1Xb3JsZHMgRmFpcm5lc3MgaW4gQ29tbWl0dGVlIFZvdGluZyAoMjAyMykucGRmTxEGBGJvb2sEBgAAAAAEEDAAAAAAAAAAAAAAAAAAAAAAAAAAAAAAAAAAAAAAAAAAAAAAAPQEAAAFAAAAAQEAAFVzZXJzAAAADAAAAAEBAABqYW5uaWtwZXRlcnMHAAAAAQEAAExpYnJhcnkADAAAAAEBAABDbG91ZFN0b3JhZ2UmAAAAAQEAAEdvb2dsZURyaXZlLWphbm5pa3BldGVyczI1MTJAZ21haWwuY29tAAAXAAAAAQEAAC5zaG9ydGN1dC10YXJnZXRzLWJ5LWlkACEAAAABAQAAMTJmd2RfVHI3Nk02YjlZTERreDVLZVU5NDV1SVpzYmJqAAAACAAAAAEBAAByZXNlYXJjaAcAAAABAQAAbGlicmFyeQBHAAAAAQEAAEF6aXogZXQgYWwuIEJlc3Qtb2YtQm90aC1Xb3JsZHMgRmFpcm5lc3MgaW4gQ29tbWl0dGVlIFZvdGluZyAoMjAyMykucGRmACgAAAABBgAABAAAABQAAAAoAAAAOAAAAEwAAAB8AAAAnAAAAMgAAADYAAAA6AAAAAgAAAAEAwAAFV0AAAAAAAAIAAAABAMAAODxAwAAAAAACAAAAAQDAADn8QMAAAAAAAgAAAAEAwAAN61dAgAAAAAIAAAABAMAADytXQIAAAAACAAAAAQDAACdrV0CAAAAAAgAAAAEAwAATOFdAgAAAAAIAAAABAMAAFbhXQIAAAAACAAAAAQDAABj4V0CAAAAAAgAAAAEAwAAJuNdAgAAAAAoAAAAAQYAAGgBAAB4AQAAiAEAAJgBAACoAQAAuAEAAMgBAADYAQAA6AEAAPgBAAAIAAAAAAQAAEHFef3KAAAAGAAAAAECAAABAAAAAAAAAA8AAAAAAAAAAAAAAAAAAAAIAAAABAMAAAgAAAAAAAAABAAAAAMDAAD1AQAACAAAAAEJAABmaWxlOi8vLwwAAAABAQAATWFjaW50b3NoIEhECAAAAAQDAAAAUKEbcwAAAAgAAAAABAAAQcRAB+AAAAAkAAAAAQEAAERGOEVERjQzLUE3MUEtNEE5My1CRTY2LUI5Rjg0MjAwM0U0NxgAAAABAgAAgQAAAAEAAADvEwAAAQAAAAAAAAAAAAAAAQAAAAEBAAAvAAAAAAAAAAEFAAAaAAAAAQEAAE5TVVJMRG9jdW1lbnRJZGVudGlmaWVyS2V5AAAEAAAAAwMAAF6IAQCUAQAAAQIAAGUzNThlYWIwZjNhMmMwMTUxYTFjODNlMGIyMTkwMTkzZmQ3Njk5ZDE1NjZiMWM2YWI5NzBlZWQ2NTJkMTZkMjc7MDA7MDAwMDAwMDA7MDAwMDAwMDA7MDAwMDAwMDA7MDAwMDAwMDAwMDAwMDAyMDtjb20uYXBwbGUuYXBwLXNhbmRib3gucmVhZC13cml0ZTswMTswMTAwMDAxMjswMDAwMDAwMDAyNWRlMzI2OzRjOy91c2Vycy9qYW5uaWtwZXRlcnMvbGlicmFyeS9jbG91ZHN0b3JhZ2UvZ29vZ2xlZHJpdmUtamFubmlrcGV0ZXJzMjUxMkBnbWFpbC5jb20vLnNob3J0Y3V0LXRhcmdldHMtYnktaWQvMTJmd2RfdHI3Nm02Yjl5bGRreDVrZXU5NDV1aXpzYmJqL3Jlc2VhcmNoL2xpYnJhcnkvYXppeiBldCBhbC4gYmVzdC1vZi1ib3RoLXdvcmxkcyBmYWlybmVzcyBpbiBjb21taXR0ZWUgdm90aW5nICgyMDIzKS5wZGYA2AAAAP7///8BAAAAAAAAABEAAAAEEAAAOAEAAAAAAAAFEAAACAIAAAAAAAAQEAAASAIAAAAAAABAEAAAOAIAAAAAAAACIAAAFAMAAAAAAAAFIAAAhAIAAAAAAAAQIAAAlAIAAAAAAAARIAAAyAIAAAAAAAASIAAAqAIAAAAAAAATIAAAuAIAAAAAAAAgIAAA9AIAAAAAAAAwIAAAIAMAAAAAAAABwAAAaAIAAAAAAAARwAAAFAAAAAAAAAASwAAAeAIAAAAAAACA8AAAWAMAAAAAAAAoAwCATAMAAAAAAAAACAANABoAIwB4AAAAAAAAAgEAAAAAAAAABQAAAAAAAAAAAAAAAAAABoA=}}

@inproceedings{BrPe23a,
	author = {M. Brill and J. Peters},
	booktitle = proc # {24th} # acmec,
	date-added = {2023-06-23 12:24:26 +0200},
	date-modified = {2024-11-19 13:50:23 +0800},
	note = {Full version arXiv:2302.01989 [cs.GT]},
	pages = {301},
	publisher = acm,
	title = {Robust and Verifiable Proportionality Axioms for Multiwinner Voting},
	year = {2023},
	bdsk-file-1 = {YnBsaXN0MDDSAQIDBFxyZWxhdGl2ZVBhdGhYYm9va21hcmtfEGYuLi9saWJyYXJ5L0JyaWxsIGV0IGFsLiBSb2J1c3QgYW5kIFZlcmlmaWFibGUgUHJvcG9ydGlvbmFsaXR5IEF4aW9tcyBmb3IgTXVsdGl3aW5uZXIgVm90aW5nICgyMDIzKS5wZGZPEQYsYm9vaywGAAAAAAQQMAAAAAAAAAAAAAAAAAAAAAAAAAAAAAAAAAAAAAAAAAAAAAAAHAUAAAUAAAABAQAAVXNlcnMAAAAMAAAAAQEAAGphbm5pa3BldGVycwcAAAABAQAATGlicmFyeQAMAAAAAQEAAENsb3VkU3RvcmFnZSYAAAABAQAAR29vZ2xlRHJpdmUtamFubmlrcGV0ZXJzMjUxMkBnbWFpbC5jb20AABcAAAABAQAALnNob3J0Y3V0LXRhcmdldHMtYnktaWQAIQAAAAEBAAAxMmZ3ZF9Ucjc2TTZiOVlMRGt4NUtlVTk0NXVJWnNiYmoAAAAIAAAAAQEAAHJlc2VhcmNoBwAAAAEBAABsaWJyYXJ5AFsAAAABAQAAQnJpbGwgZXQgYWwuIFJvYnVzdCBhbmQgVmVyaWZpYWJsZSBQcm9wb3J0aW9uYWxpdHkgQXhpb21zIGZvciBNdWx0aXdpbm5lciBWb3RpbmcgKDIwMjMpLnBkZgAoAAAAAQYAAAQAAAAUAAAAKAAAADgAAABMAAAAfAAAAJwAAADIAAAA2AAAAOgAAAAIAAAABAMAABVdAAAAAAAACAAAAAQDAADg8QMAAAAAAAgAAAAEAwAA5/EDAAAAAAAIAAAABAMAADetXQIAAAAACAAAAAQDAAA8rV0CAAAAAAgAAAAEAwAAna1dAgAAAAAIAAAABAMAAEzhXQIAAAAACAAAAAQDAABW4V0CAAAAAAgAAAAEAwAAY+FdAgAAAAAIAAAABAMAADzpXQIAAAAAKAAAAAEGAAB8AQAAjAEAAJwBAACsAQAAvAEAAMwBAADcAQAA7AEAAPwBAAAMAgAACAAAAAAEAABBxSLVSgAAABgAAAABAgAAAQAAAAAAAAAPAAAAAAAAAAAAAAAAAAAACAAAAAQDAAAIAAAAAAAAAAQAAAADAwAA9QEAAAgAAAABCQAAZmlsZTovLy8MAAAAAQEAAE1hY2ludG9zaCBIRAgAAAAEAwAAAFChG3MAAAAIAAAAAAQAAEHEQAfgAAAAJAAAAAEBAABERjhFREY0My1BNzFBLTRBOTMtQkU2Ni1COUY4NDIwMDNFNDcYAAAAAQIAAIEAAAABAAAA7xMAAAEAAAAAAAAAAAAAAAEAAAABAQAALwAAAAAAAAABBQAAGgAAAAEBAABOU1VSTERvY3VtZW50SWRlbnRpZmllcktleQAABAAAAAMDAAApjgEAqAEAAAECAAA1ZDg3MmY5MGM4Y2U2YTdkZTgyMzk3MTlkODI3ZmU2OTZjZGI5MjM4MTA4MTg1ZDE4M2IwNTcyNTUwNTU5NTJlOzAwOzAwMDAwMDAwOzAwMDAwMDAwOzAwMDAwMDAwOzAwMDAwMDAwMDAwMDAwMjA7Y29tLmFwcGxlLmFwcC1zYW5kYm94LnJlYWQtd3JpdGU7MDE7MDEwMDAwMTI7MDAwMDAwMDAwMjVkZTkzYzs0YzsvdXNlcnMvamFubmlrcGV0ZXJzL2xpYnJhcnkvY2xvdWRzdG9yYWdlL2dvb2dsZWRyaXZlLWphbm5pa3BldGVyczI1MTJAZ21haWwuY29tLy5zaG9ydGN1dC10YXJnZXRzLWJ5LWlkLzEyZndkX3RyNzZtNmI5eWxka3g1a2V1OTQ1dWl6c2Jiai9yZXNlYXJjaC9saWJyYXJ5L2JyaWxsIGV0IGFsLiByb2J1c3QgYW5kIHZlcmlmaWFibGUgcHJvcG9ydGlvbmFsaXR5IGF4aW9tcyBmb3IgbXVsdGl3aW5uZXIgdm90aW5nICgyMDIzKS5wZGYA2AAAAP7///8BAAAAAAAAABEAAAAEEAAATAEAAAAAAAAFEAAAHAIAAAAAAAAQEAAAXAIAAAAAAABAEAAATAIAAAAAAAACIAAAKAMAAAAAAAAFIAAAmAIAAAAAAAAQIAAAqAIAAAAAAAARIAAA3AIAAAAAAAASIAAAvAIAAAAAAAATIAAAzAIAAAAAAAAgIAAACAMAAAAAAAAwIAAANAMAAAAAAAABwAAAfAIAAAAAAAARwAAAFAAAAAAAAAASwAAAjAIAAAAAAACA8AAAbAMAAAAAAAA8AwCAYAMAAAAAAAAACAANABoAIwCMAAAAAAAAAgEAAAAAAAAABQAAAAAAAAAAAAAAAAAABrw=}}

@inproceedings{IaKo21a,
	author = {E. Ianovski and A. Kondratev},
	booktitle = proc # {35th} # aaai,
	date-added = {2023-02-20 14:55:08 +0100},
	date-modified = {2023-02-20 14:56:15 +0100},
	pages = {5489 -- 5496},
	title = {Computing the Proportional Veto Core},
	year = {2021},
	bdsk-file-1 = {YnBsaXN0MDDSAQIDBFxyZWxhdGl2ZVBhdGhYYm9va21hcmtfEEouLi9saWJyYXJ5L0lhbm92c2tpIGV0IGFsLiBDb21wdXRpbmcgdGhlIFByb3BvcnRpb25hbCBWZXRvIENvcmUgKDIwMjEpLnBkZk8RBfRib29r9AUAAAAABBAwAAAAAAAAAAAAAAAAAAAAAAAAAAAAAAAAAAAAAAAAAAAAAADkBAAABQAAAAEBAABVc2VycwAAAAwAAAABAQAAamFubmlrcGV0ZXJzBwAAAAEBAABMaWJyYXJ5AAwAAAABAQAAQ2xvdWRTdG9yYWdlJgAAAAEBAABHb29nbGVEcml2ZS1qYW5uaWtwZXRlcnMyNTEyQGdtYWlsLmNvbQAAFwAAAAEBAAAuc2hvcnRjdXQtdGFyZ2V0cy1ieS1pZAAhAAAAAQEAADEyZndkX1RyNzZNNmI5WUxEa3g1S2VVOTQ1dUlac2JiagAAAAgAAAABAQAAcmVzZWFyY2gHAAAAAQEAAGxpYnJhcnkAPwAAAAEBAABJYW5vdnNraSBldCBhbC4gQ29tcHV0aW5nIHRoZSBQcm9wb3J0aW9uYWwgVmV0byBDb3JlICgyMDIxKS5wZGYAKAAAAAEGAAAEAAAAFAAAACgAAAA4AAAATAAAAHwAAACcAAAAyAAAANgAAADoAAAACAAAAAQDAAAVXQAAAAAAAAgAAAAEAwAA4PEDAAAAAAAIAAAABAMAAOfxAwAAAAAACAAAAAQDAAA3rV0CAAAAAAgAAAAEAwAAPK1dAgAAAAAIAAAABAMAAJ2tXQIAAAAACAAAAAQDAABM4V0CAAAAAAgAAAAEAwAAVuFdAgAAAAAIAAAABAMAAGPhXQIAAAAACAAAAAQDAAA8410CAAAAACgAAAABBgAAYAEAAHABAACAAQAAkAEAAKABAACwAQAAwAEAANABAADgAQAA8AEAAAgAAAAABAAAQcTR2cIAAAAYAAAAAQIAAAEAAAAAAAAADwAAAAAAAAAAAAAAAAAAAAgAAAAEAwAACAAAAAAAAAAEAAAAAwMAAPUBAAAIAAAAAQkAAGZpbGU6Ly8vDAAAAAEBAABNYWNpbnRvc2ggSEQIAAAABAMAAABQoRtzAAAACAAAAAAEAABBxEAH4AAAACQAAAABAQAAREY4RURGNDMtQTcxQS00QTkzLUJFNjYtQjlGODQyMDAzRTQ3GAAAAAECAACBAAAAAQAAAO8TAAABAAAAAAAAAAAAAAABAAAAAQEAAC8AAAAAAAAAAQUAABoAAAABAQAATlNVUkxEb2N1bWVudElkZW50aWZpZXJLZXkAAAQAAAADAwAAdIgBAIwBAAABAgAAZWRiOTk2OTJkOGFlODBjYTBjODViMjJjMGVkZWVhZDgxNDNjYTA4NzE0YjZlYThjMmEyMWM5NDhiMGJkMTBmMzswMDswMDAwMDAwMDswMDAwMDAwMDswMDAwMDAwMDswMDAwMDAwMDAwMDAwMDIwO2NvbS5hcHBsZS5hcHAtc2FuZGJveC5yZWFkLXdyaXRlOzAxOzAxMDAwMDEyOzAwMDAwMDAwMDI1ZGUzM2M7NGM7L3VzZXJzL2phbm5pa3BldGVycy9saWJyYXJ5L2Nsb3Vkc3RvcmFnZS9nb29nbGVkcml2ZS1qYW5uaWtwZXRlcnMyNTEyQGdtYWlsLmNvbS8uc2hvcnRjdXQtdGFyZ2V0cy1ieS1pZC8xMmZ3ZF90cjc2bTZiOXlsZGt4NWtldTk0NXVpenNiYmovcmVzZWFyY2gvbGlicmFyeS9pYW5vdnNraSBldCBhbC4gY29tcHV0aW5nIHRoZSBwcm9wb3J0aW9uYWwgdmV0byBjb3JlICgyMDIxKS5wZGYA2AAAAP7///8BAAAAAAAAABEAAAAEEAAAMAEAAAAAAAAFEAAAAAIAAAAAAAAQEAAAQAIAAAAAAABAEAAAMAIAAAAAAAACIAAADAMAAAAAAAAFIAAAfAIAAAAAAAAQIAAAjAIAAAAAAAARIAAAwAIAAAAAAAASIAAAoAIAAAAAAAATIAAAsAIAAAAAAAAgIAAA7AIAAAAAAAAwIAAAGAMAAAAAAAABwAAAYAIAAAAAAAARwAAAFAAAAAAAAAASwAAAcAIAAAAAAACA8AAAUAMAAAAAAAAgAwCARAMAAAAAAAAACAANABoAIwBwAAAAAAAAAgEAAAAAAAAABQAAAAAAAAAAAAAAAAAABmg=}}

@article{AzLe22a,
	author = {H. Aziz and B. E. Lee},
	date-added = {2022-12-20 18:22:36 +0000},
	date-modified = {2022-12-20 18:23:36 +0000},
	journal = {Games and Economic Behavior},
	pages = {248--255},
	title = {A characterization of proportionally representative committees},
	volume = {133},
	year = {2022},
	bdsk-file-1 = {YnBsaXN0MDDSAQIDBFxyZWxhdGl2ZVBhdGhYYm9va21hcmtfEGEuLi9saWJyYXJ5L0F6aXogZXQgYWwuIEEgY2hhcmFjdGVyaXphdGlvbiBvZiBwcm9wb3J0aW9uYWxseSByZXByZXNlbnRhdGl2ZSBjb21taXR0ZWVzICgyMDIyKTEucGRmTxEGJGJvb2skBgAAAAAEEDAAAAAAAAAAAAAAAAAAAAAAAAAAAAAAAAAAAAAAAAAAAAAAABQFAAAFAAAAAQEAAFVzZXJzAAAADAAAAAEBAABqYW5uaWtwZXRlcnMHAAAAAQEAAExpYnJhcnkADAAAAAEBAABDbG91ZFN0b3JhZ2UmAAAAAQEAAEdvb2dsZURyaXZlLWphbm5pa3BldGVyczI1MTJAZ21haWwuY29tAAAXAAAAAQEAAC5zaG9ydGN1dC10YXJnZXRzLWJ5LWlkACEAAAABAQAAMTJmd2RfVHI3Nk02YjlZTERreDVLZVU5NDV1SVpzYmJqAAAACAAAAAEBAAByZXNlYXJjaAcAAAABAQAAbGlicmFyeQBWAAAAAQEAAEF6aXogZXQgYWwuIEEgY2hhcmFjdGVyaXphdGlvbiBvZiBwcm9wb3J0aW9uYWxseSByZXByZXNlbnRhdGl2ZSBjb21taXR0ZWVzICgyMDIyKTEucGRmAAAoAAAAAQYAAAQAAAAUAAAAKAAAADgAAABMAAAAfAAAAJwAAADIAAAA2AAAAOgAAAAIAAAABAMAABVdAAAAAAAACAAAAAQDAADg8QMAAAAAAAgAAAAEAwAA5/EDAAAAAAAIAAAABAMAADetXQIAAAAACAAAAAQDAAA8rV0CAAAAAAgAAAAEAwAAna1dAgAAAAAIAAAABAMAAEzhXQIAAAAACAAAAAQDAABW4V0CAAAAAAgAAAAEAwAAY+FdAgAAAAAIAAAABAMAAA7oXQIAAAAAKAAAAAEGAAB4AQAAiAEAAJgBAACoAQAAuAEAAMgBAADYAQAA6AEAAPgBAAAIAgAACAAAAAAEAABBxL2Ar80OVhgAAAABAgAAAQAAAAAAAAAPAAAAAAAAAAAAAAAAAAAACAAAAAQDAAAIAAAAAAAAAAQAAAADAwAA9QEAAAgAAAABCQAAZmlsZTovLy8MAAAAAQEAAE1hY2ludG9zaCBIRAgAAAAEAwAAAFChG3MAAAAIAAAAAAQAAEHEQAfgAAAAJAAAAAEBAABERjhFREY0My1BNzFBLTRBOTMtQkU2Ni1COUY4NDIwMDNFNDcYAAAAAQIAAIEAAAABAAAA7xMAAAEAAAAAAAAAAAAAAAEAAAABAQAALwAAAAAAAAABBQAAGgAAAAEBAABOU1VSTERvY3VtZW50SWRlbnRpZmllcktleQAABAAAAAMDAAAcjQEAowEAAAECAAAwZGJmMjI3NjZkZjRjNTczZWY5MDc2YzI2MTFhZTYwNTgwYzgzNTNiZGNjMjNiYWYwYTAyZDI0YjdhOTg4YTlmOzAwOzAwMDAwMDAwOzAwMDAwMDAwOzAwMDAwMDAwOzAwMDAwMDAwMDAwMDAwMjA7Y29tLmFwcGxlLmFwcC1zYW5kYm94LnJlYWQtd3JpdGU7MDE7MDEwMDAwMTI7MDAwMDAwMDAwMjVkZTgwZTs0YzsvdXNlcnMvamFubmlrcGV0ZXJzL2xpYnJhcnkvY2xvdWRzdG9yYWdlL2dvb2dsZWRyaXZlLWphbm5pa3BldGVyczI1MTJAZ21haWwuY29tLy5zaG9ydGN1dC10YXJnZXRzLWJ5LWlkLzEyZndkX3RyNzZtNmI5eWxka3g1a2V1OTQ1dWl6c2Jiai9yZXNlYXJjaC9saWJyYXJ5L2F6aXogZXQgYWwuIGEgY2hhcmFjdGVyaXphdGlvbiBvZiBwcm9wb3J0aW9uYWxseSByZXByZXNlbnRhdGl2ZSBjb21taXR0ZWVzICgyMDIyKTEucGRmAADYAAAA/v///wEAAAAAAAAAEQAAAAQQAABIAQAAAAAAAAUQAAAYAgAAAAAAABAQAABYAgAAAAAAAEAQAABIAgAAAAAAAAIgAAAkAwAAAAAAAAUgAACUAgAAAAAAABAgAACkAgAAAAAAABEgAADYAgAAAAAAABIgAAC4AgAAAAAAABMgAADIAgAAAAAAACAgAAAEAwAAAAAAADAgAAAwAwAAAAAAAAHAAAB4AgAAAAAAABHAAAAUAAAAAAAAABLAAACIAgAAAAAAAIDwAABoAwAAAAAAADgDAIBcAwAAAAAAAAAIAA0AGgAjAIcAAAAAAAACAQAAAAAAAAAFAAAAAAAAAAAAAAAAAAAGrw==}}

@book{LaSk22a,
	author = {M. Lackner and P. Skowron},
	date-added = {2022-11-25 15:48:19 +0000},
	date-modified = {2022-11-25 16:06:14 +0000},
	publisher = {Springer},
	title = {Multi-Winner Voting with Approval Preferences},
	year = {2022},
	bdsk-file-1 = {YnBsaXN0MDDSAQIDBFxyZWxhdGl2ZVBhdGhYYm9va21hcmtfEFIuLi9saWJyYXJ5L0xhY2tuZXIgZXQgYWwuIE11bHRpLVdpbm5lciBWb3Rpbmcgd2l0aCBBcHByb3ZhbCBQcmVmZXJlbmNlcyAoMjAyMikucGRmTxEGBGJvb2sEBgAAAAAEEDAAAAAAAAAAAAAAAAAAAAAAAAAAAAAAAAAAAAAAAAAAAAAAAPQEAAAFAAAAAQEAAFVzZXJzAAAADAAAAAEBAABqYW5uaWtwZXRlcnMHAAAAAQEAAExpYnJhcnkADAAAAAEBAABDbG91ZFN0b3JhZ2UmAAAAAQEAAEdvb2dsZURyaXZlLWphbm5pa3BldGVyczI1MTJAZ21haWwuY29tAAAXAAAAAQEAAC5zaG9ydGN1dC10YXJnZXRzLWJ5LWlkACEAAAABAQAAMTJmd2RfVHI3Nk02YjlZTERreDVLZVU5NDV1SVpzYmJqAAAACAAAAAEBAAByZXNlYXJjaAcAAAABAQAAbGlicmFyeQBHAAAAAQEAAExhY2tuZXIgZXQgYWwuIE11bHRpLVdpbm5lciBWb3Rpbmcgd2l0aCBBcHByb3ZhbCBQcmVmZXJlbmNlcyAoMjAyMikucGRmACgAAAABBgAABAAAABQAAAAoAAAAOAAAAEwAAAB8AAAAnAAAAMgAAADYAAAA6AAAAAgAAAAEAwAAFV0AAAAAAAAIAAAABAMAAODxAwAAAAAACAAAAAQDAADn8QMAAAAAAAgAAAAEAwAAN61dAgAAAAAIAAAABAMAADytXQIAAAAACAAAAAQDAACdrV0CAAAAAAgAAAAEAwAATOFdAgAAAAAIAAAABAMAAFbhXQIAAAAACAAAAAQDAABj4V0CAAAAAAgAAAAEAwAAIuhdAgAAAAAoAAAAAQYAAGgBAAB4AQAAiAEAAJgBAACoAQAAuAEAAMgBAADYAQAA6AEAAPgBAAAIAAAAAAQAAEHEmI15e2RaGAAAAAECAAABAAAAAAAAAA8AAAAAAAAAAAAAAAAAAAAIAAAABAMAAAgAAAAAAAAABAAAAAMDAAD1AQAACAAAAAEJAABmaWxlOi8vLwwAAAABAQAATWFjaW50b3NoIEhECAAAAAQDAAAAUKEbcwAAAAgAAAAABAAAQcRAB+AAAAAkAAAAAQEAAERGOEVERjQzLUE3MUEtNEE5My1CRTY2LUI5Rjg0MjAwM0U0NxgAAAABAgAAgQAAAAEAAADvEwAAAQAAAAAAAAAAAAAAAQAAAAEBAAAvAAAAAAAAAAEFAAAaAAAAAQEAAE5TVVJMRG9jdW1lbnRJZGVudGlmaWVyS2V5AAAEAAAAAwMAADCNAQCUAQAAAQIAADg3OGVlN2Q3MDQ3YWZkNTVkMzczODc3NDE2YjNhMGMyYzgwY2U1Nzk2NWFkY2EzMGJlNTRhZTAyNDgyYzc4Yzg7MDA7MDAwMDAwMDA7MDAwMDAwMDA7MDAwMDAwMDA7MDAwMDAwMDAwMDAwMDAyMDtjb20uYXBwbGUuYXBwLXNhbmRib3gucmVhZC13cml0ZTswMTswMTAwMDAxMjswMDAwMDAwMDAyNWRlODIyOzRjOy91c2Vycy9qYW5uaWtwZXRlcnMvbGlicmFyeS9jbG91ZHN0b3JhZ2UvZ29vZ2xlZHJpdmUtamFubmlrcGV0ZXJzMjUxMkBnbWFpbC5jb20vLnNob3J0Y3V0LXRhcmdldHMtYnktaWQvMTJmd2RfdHI3Nm02Yjl5bGRreDVrZXU5NDV1aXpzYmJqL3Jlc2VhcmNoL2xpYnJhcnkvbGFja25lciBldCBhbC4gbXVsdGktd2lubmVyIHZvdGluZyB3aXRoIGFwcHJvdmFsIHByZWZlcmVuY2VzICgyMDIyKS5wZGYA2AAAAP7///8BAAAAAAAAABEAAAAEEAAAOAEAAAAAAAAFEAAACAIAAAAAAAAQEAAASAIAAAAAAABAEAAAOAIAAAAAAAACIAAAFAMAAAAAAAAFIAAAhAIAAAAAAAAQIAAAlAIAAAAAAAARIAAAyAIAAAAAAAASIAAAqAIAAAAAAAATIAAAuAIAAAAAAAAgIAAA9AIAAAAAAAAwIAAAIAMAAAAAAAABwAAAaAIAAAAAAAARwAAAFAAAAAAAAAASwAAAeAIAAAAAAACA8AAAWAMAAAAAAAAoAwCATAMAAAAAAAAACAANABoAIwB4AAAAAAAAAgEAAAAAAAAABQAAAAAAAAAAAAAAAAAABoA=}}

@inproceedings{PPS21a,
	author = {D. Peters and G. Pierczy{\'n}ski and P. Skowron},
	booktitle = {Advances in Neural Information Processing Systems (NeurIPS)},
	date-added = {2021-10-18 16:18:56 +0200},
	date-modified = {2024-07-26 10:48:38 +0100},
	pages = {12726--12737},
	title = {Proportional Participatory Budgeting with Additive Utilities},
	volume = {34},
	year = {2021},
	bdsk-file-1 = {YnBsaXN0MDDSAQIDBFxyZWxhdGl2ZVBhdGhYYm9va21hcmtfEGEuLi9saWJyYXJ5L1BldGVycyBldCBhbC4gUHJvcG9ydGlvbmFsIFBhcnRpY2lwYXRvcnkgQnVkZ2V0aW5nIHdpdGggQWRkaXRpdmUgVXRpbGl0aWVzICgyMDIxKTEucGRmTxEGJGJvb2skBgAAAAAEEDAAAAAAAAAAAAAAAAAAAAAAAAAAAAAAAAAAAAAAAAAAAAAAABQFAAAFAAAAAQEAAFVzZXJzAAAADAAAAAEBAABqYW5uaWtwZXRlcnMHAAAAAQEAAExpYnJhcnkADAAAAAEBAABDbG91ZFN0b3JhZ2UmAAAAAQEAAEdvb2dsZURyaXZlLWphbm5pa3BldGVyczI1MTJAZ21haWwuY29tAAAXAAAAAQEAAC5zaG9ydGN1dC10YXJnZXRzLWJ5LWlkACEAAAABAQAAMTJmd2RfVHI3Nk02YjlZTERreDVLZVU5NDV1SVpzYmJqAAAACAAAAAEBAAByZXNlYXJjaAcAAAABAQAAbGlicmFyeQBWAAAAAQEAAFBldGVycyBldCBhbC4gUHJvcG9ydGlvbmFsIFBhcnRpY2lwYXRvcnkgQnVkZ2V0aW5nIHdpdGggQWRkaXRpdmUgVXRpbGl0aWVzICgyMDIxKTEucGRmAAAoAAAAAQYAAAQAAAAUAAAAKAAAADgAAABMAAAAfAAAAJwAAADIAAAA2AAAAOgAAAAIAAAABAMAABVdAAAAAAAACAAAAAQDAADg8QMAAAAAAAgAAAAEAwAA5/EDAAAAAAAIAAAABAMAADetXQIAAAAACAAAAAQDAAA8rV0CAAAAAAgAAAAEAwAAna1dAgAAAAAIAAAABAMAAEzhXQIAAAAACAAAAAQDAABW4V0CAAAAAAgAAAAEAwAAY+FdAgAAAAAIAAAABAMAAJfkXQIAAAAAKAAAAAEGAAB4AQAAiAEAAJgBAACoAQAAuAEAAMgBAADYAQAA6AEAAPgBAAAIAgAACAAAAAAEAABBw8ujVzhysBgAAAABAgAAAQAAAAAAAAAPAAAAAAAAAAAAAAAAAAAACAAAAAQDAAAIAAAAAAAAAAQAAAADAwAA9QEAAAgAAAABCQAAZmlsZTovLy8MAAAAAQEAAE1hY2ludG9zaCBIRAgAAAAEAwAAAFChG3MAAAAIAAAAAAQAAEHEQAfgAAAAJAAAAAEBAABERjhFREY0My1BNzFBLTRBOTMtQkU2Ni1COUY4NDIwMDNFNDcYAAAAAQIAAIEAAAABAAAA7xMAAAEAAAAAAAAAAAAAAAEAAAABAQAALwAAAAAAAAABBQAAGgAAAAEBAABOU1VSTERvY3VtZW50SWRlbnRpZmllcktleQAABAAAAAMDAADJiQEAowEAAAECAAA5NTE1ODA3ZGIwNTg0MDljY2IxZTEzMDcyMmZhNjUwZDY1Yzk1MjI5NmU5ZGE3ZWFjYzRlOGY4NzYyYjg4YTBhOzAwOzAwMDAwMDAwOzAwMDAwMDAwOzAwMDAwMDAwOzAwMDAwMDAwMDAwMDAwMjA7Y29tLmFwcGxlLmFwcC1zYW5kYm94LnJlYWQtd3JpdGU7MDE7MDEwMDAwMTI7MDAwMDAwMDAwMjVkZTQ5Nzs0YzsvdXNlcnMvamFubmlrcGV0ZXJzL2xpYnJhcnkvY2xvdWRzdG9yYWdlL2dvb2dsZWRyaXZlLWphbm5pa3BldGVyczI1MTJAZ21haWwuY29tLy5zaG9ydGN1dC10YXJnZXRzLWJ5LWlkLzEyZndkX3RyNzZtNmI5eWxka3g1a2V1OTQ1dWl6c2Jiai9yZXNlYXJjaC9saWJyYXJ5L3BldGVycyBldCBhbC4gcHJvcG9ydGlvbmFsIHBhcnRpY2lwYXRvcnkgYnVkZ2V0aW5nIHdpdGggYWRkaXRpdmUgdXRpbGl0aWVzICgyMDIxKTEucGRmAADYAAAA/v///wEAAAAAAAAAEQAAAAQQAABIAQAAAAAAAAUQAAAYAgAAAAAAABAQAABYAgAAAAAAAEAQAABIAgAAAAAAAAIgAAAkAwAAAAAAAAUgAACUAgAAAAAAABAgAACkAgAAAAAAABEgAADYAgAAAAAAABIgAAC4AgAAAAAAABMgAADIAgAAAAAAACAgAAAEAwAAAAAAADAgAAAwAwAAAAAAAAHAAAB4AgAAAAAAABHAAAAUAAAAAAAAABLAAACIAgAAAAAAAIDwAABoAwAAAAAAADgDAIBcAwAAAAAAAAAIAA0AGgAjAIcAAAAAAAACAQAAAAAAAAAFAAAAAAAAAAAAAAAAAAAGrw==},
	bdsk-file-2 = {YnBsaXN0MDDSAQIDBFxyZWxhdGl2ZVBhdGhYYm9va21hcmtfEGEuLi9saWJyYXJ5L1BldGVycyBldCBhbC4gUHJvcG9ydGlvbmFsIFBhcnRpY2lwYXRvcnkgQnVkZ2V0aW5nIHdpdGggQWRkaXRpdmUgVXRpbGl0aWVzICgyMDIxKTIucGRmTxEGJGJvb2skBgAAAAAEEDAAAAAAAAAAAAAAAAAAAAAAAAAAAAAAAAAAAAAAAAAAAAAAABQFAAAFAAAAAQEAAFVzZXJzAAAADAAAAAEBAABqYW5uaWtwZXRlcnMHAAAAAQEAAExpYnJhcnkADAAAAAEBAABDbG91ZFN0b3JhZ2UmAAAAAQEAAEdvb2dsZURyaXZlLWphbm5pa3BldGVyczI1MTJAZ21haWwuY29tAAAXAAAAAQEAAC5zaG9ydGN1dC10YXJnZXRzLWJ5LWlkACEAAAABAQAAMTJmd2RfVHI3Nk02YjlZTERreDVLZVU5NDV1SVpzYmJqAAAACAAAAAEBAAByZXNlYXJjaAcAAAABAQAAbGlicmFyeQBWAAAAAQEAAFBldGVycyBldCBhbC4gUHJvcG9ydGlvbmFsIFBhcnRpY2lwYXRvcnkgQnVkZ2V0aW5nIHdpdGggQWRkaXRpdmUgVXRpbGl0aWVzICgyMDIxKTIucGRmAAAoAAAAAQYAAAQAAAAUAAAAKAAAADgAAABMAAAAfAAAAJwAAADIAAAA2AAAAOgAAAAIAAAABAMAABVdAAAAAAAACAAAAAQDAADg8QMAAAAAAAgAAAAEAwAA5/EDAAAAAAAIAAAABAMAADetXQIAAAAACAAAAAQDAAA8rV0CAAAAAAgAAAAEAwAAna1dAgAAAAAIAAAABAMAAEzhXQIAAAAACAAAAAQDAABW4V0CAAAAAAgAAAAEAwAAY+FdAgAAAAAIAAAABAMAADTmXQIAAAAAKAAAAAEGAAB4AQAAiAEAAJgBAACoAQAAuAEAAMgBAADYAQAA6AEAAPgBAAAIAgAACAAAAAAEAABBxJsVDEjU/hgAAAABAgAAAQAAAAAAAAAPAAAAAAAAAAAAAAAAAAAACAAAAAQDAAAIAAAAAAAAAAQAAAADAwAA9QEAAAgAAAABCQAAZmlsZTovLy8MAAAAAQEAAE1hY2ludG9zaCBIRAgAAAAEAwAAAFChG3MAAAAIAAAAAAQAAEHEQAfgAAAAJAAAAAEBAABERjhFREY0My1BNzFBLTRBOTMtQkU2Ni1COUY4NDIwMDNFNDcYAAAAAQIAAIEAAAABAAAA7xMAAAEAAAAAAAAAAAAAAAEAAAABAQAALwAAAAAAAAABBQAAGgAAAAEBAABOU1VSTERvY3VtZW50SWRlbnRpZmllcktleQAABAAAAAMDAABCiwEAowEAAAECAAA5MDEwY2YyOGIxN2IwMjg3Njk1MmU3MmNhNmI3Mzc1NTI3MDI1MmE3NTMyYTg3ZGIxY2ZlYmUzYTI5MzU5ZGQ1OzAwOzAwMDAwMDAwOzAwMDAwMDAwOzAwMDAwMDAwOzAwMDAwMDAwMDAwMDAwMjA7Y29tLmFwcGxlLmFwcC1zYW5kYm94LnJlYWQtd3JpdGU7MDE7MDEwMDAwMTI7MDAwMDAwMDAwMjVkZTYzNDs0YzsvdXNlcnMvamFubmlrcGV0ZXJzL2xpYnJhcnkvY2xvdWRzdG9yYWdlL2dvb2dsZWRyaXZlLWphbm5pa3BldGVyczI1MTJAZ21haWwuY29tLy5zaG9ydGN1dC10YXJnZXRzLWJ5LWlkLzEyZndkX3RyNzZtNmI5eWxka3g1a2V1OTQ1dWl6c2Jiai9yZXNlYXJjaC9saWJyYXJ5L3BldGVycyBldCBhbC4gcHJvcG9ydGlvbmFsIHBhcnRpY2lwYXRvcnkgYnVkZ2V0aW5nIHdpdGggYWRkaXRpdmUgdXRpbGl0aWVzICgyMDIxKTIucGRmAADYAAAA/v///wEAAAAAAAAAEQAAAAQQAABIAQAAAAAAAAUQAAAYAgAAAAAAABAQAABYAgAAAAAAAEAQAABIAgAAAAAAAAIgAAAkAwAAAAAAAAUgAACUAgAAAAAAABAgAACkAgAAAAAAABEgAADYAgAAAAAAABIgAAC4AgAAAAAAABMgAADIAgAAAAAAACAgAAAEAwAAAAAAADAgAAAwAwAAAAAAAAHAAAB4AgAAAAAAABHAAAAUAAAAAAAAABLAAACIAgAAAAAAAIDwAABoAwAAAAAAADgDAIBcAwAAAAAAAAAIAA0AGgAjAIcAAAAAAAACAQAAAAAAAAAFAAAAAAAAAAAAAAAAAAAGrw==}}

@inproceedings{AzLe21a,
	author = {H. Aziz and B. E. Lee},
	booktitle = proc # {35th} # aaai,
	date-added = {2021-02-09 15:05:55 +0100},
	date-modified = {2023-01-25 17:24:46 +0100},
	pages = {5110--5118},
	publisher = {AAAI Press},
	title = {Proportionally Representative Participatory Budgeting with Ordinal Preferences},
	year = {2021},
	bdsk-file-1 = {YnBsaXN0MDDSAQIDBFxyZWxhdGl2ZVBhdGhYYm9va21hcmtfEHAuLi9saWJyYXJ5L0F6aXogZXQgYWwuIFByb3BvcnRpb25hbGx5IFJlcHJlc2VudGF0aXZlIFBhcnRpY2lwYXRvcnkgQnVkZ2V0aW5nIHdpdGggT3JkaW5hbCBQcmVmZXJlbmNlcyAoMjAyMSkucGRmTxEGRGJvb2tEBgAAAAAEEDAAAAAAAAAAAAAAAAAAAAAAAAAAAAAAAAAAAAAAAAAAAAAAADQFAAAFAAAAAQEAAFVzZXJzAAAADAAAAAEBAABqYW5uaWtwZXRlcnMHAAAAAQEAAExpYnJhcnkADAAAAAEBAABDbG91ZFN0b3JhZ2UmAAAAAQEAAEdvb2dsZURyaXZlLWphbm5pa3BldGVyczI1MTJAZ21haWwuY29tAAAXAAAAAQEAAC5zaG9ydGN1dC10YXJnZXRzLWJ5LWlkACEAAAABAQAAMTJmd2RfVHI3Nk02YjlZTERreDVLZVU5NDV1SVpzYmJqAAAACAAAAAEBAAByZXNlYXJjaAcAAAABAQAAbGlicmFyeQBlAAAAAQEAAEF6aXogZXQgYWwuIFByb3BvcnRpb25hbGx5IFJlcHJlc2VudGF0aXZlIFBhcnRpY2lwYXRvcnkgQnVkZ2V0aW5nIHdpdGggT3JkaW5hbCBQcmVmZXJlbmNlcyAoMjAyMSkucGRmAAAAKAAAAAEGAAAEAAAAFAAAACgAAAA4AAAATAAAAHwAAACcAAAAyAAAANgAAADoAAAACAAAAAQDAAAVXQAAAAAAAAgAAAAEAwAA4PEDAAAAAAAIAAAABAMAAOfxAwAAAAAACAAAAAQDAAA3rV0CAAAAAAgAAAAEAwAAPK1dAgAAAAAIAAAABAMAAJ2tXQIAAAAACAAAAAQDAABM4V0CAAAAAAgAAAAEAwAAVuFdAgAAAAAIAAAABAMAAGPhXQIAAAAACAAAAAQDAAA95V0CAAAAACgAAAABBgAAiAEAAJgBAACoAQAAuAEAAMgBAADYAQAA6AEAAPgBAAAIAgAAGAIAAAgAAAAABAAAQcLpZ0PeVgQYAAAAAQIAAAEAAAAAAAAADwAAAAAAAAAAAAAAAAAAAAgAAAAEAwAACAAAAAAAAAAEAAAAAwMAAPUBAAAIAAAAAQkAAGZpbGU6Ly8vDAAAAAEBAABNYWNpbnRvc2ggSEQIAAAABAMAAABQoRtzAAAACAAAAAAEAABBxEAH4AAAACQAAAABAQAAREY4RURGNDMtQTcxQS00QTkzLUJFNjYtQjlGODQyMDAzRTQ3GAAAAAECAACBAAAAAQAAAO8TAAABAAAAAAAAAAAAAAABAAAAAQEAAC8AAAAAAAAAAQUAABoAAAABAQAATlNVUkxEb2N1bWVudElkZW50aWZpZXJLZXkAAAQAAAADAwAAb4oBALIBAAABAgAANDk0MWE4MzBiY2JhN2E5N2I4MWQzNThmZDE4Nzc2OTY2MjA4YzRjNzI5NDBiZmUwN2U5Yzk2MmIxMTMwMTBjYzswMDswMDAwMDAwMDswMDAwMDAwMDswMDAwMDAwMDswMDAwMDAwMDAwMDAwMDIwO2NvbS5hcHBsZS5hcHAtc2FuZGJveC5yZWFkLXdyaXRlOzAxOzAxMDAwMDEyOzAwMDAwMDAwMDI1ZGU1M2Q7NGM7L3VzZXJzL2phbm5pa3BldGVycy9saWJyYXJ5L2Nsb3Vkc3RvcmFnZS9nb29nbGVkcml2ZS1qYW5uaWtwZXRlcnMyNTEyQGdtYWlsLmNvbS8uc2hvcnRjdXQtdGFyZ2V0cy1ieS1pZC8xMmZ3ZF90cjc2bTZiOXlsZGt4NWtldTk0NXVpenNiYmovcmVzZWFyY2gvbGlicmFyeS9heml6IGV0IGFsLiBwcm9wb3J0aW9uYWxseSByZXByZXNlbnRhdGl2ZSBwYXJ0aWNpcGF0b3J5IGJ1ZGdldGluZyB3aXRoIG9yZGluYWwgcHJlZmVyZW5jZXMgKDIwMjEpLnBkZgAAANgAAAD+////AQAAAAAAAAARAAAABBAAAFgBAAAAAAAABRAAACgCAAAAAAAAEBAAAGgCAAAAAAAAQBAAAFgCAAAAAAAAAiAAADQDAAAAAAAABSAAAKQCAAAAAAAAECAAALQCAAAAAAAAESAAAOgCAAAAAAAAEiAAAMgCAAAAAAAAEyAAANgCAAAAAAAAICAAABQDAAAAAAAAMCAAAEADAAAAAAAAAcAAAIgCAAAAAAAAEcAAABQAAAAAAAAAEsAAAJgCAAAAAAAAgPAAAHgDAAAAAAAASAMAgGwDAAAAAAAAAAgADQAaACMAlgAAAAAAAAIBAAAAAAAAAAUAAAAAAAAAAAAAAAAAAAbe}}

@article{AzLe20a,
	author = {H. Aziz and B. E. Lee},
	date-added = {2020-04-01 12:38:49 +0200},
	date-modified = {2022-01-20 22:40:16 +0000},
	journal = {Social Choice and Welfare},
	pages = {1--45},
	title = {The expanding approvals rule: improving proportional representation and monotonicity},
	volume = {54},
	year = {2020},
	bdsk-file-1 = {YnBsaXN0MDDSAQIDBFxyZWxhdGl2ZVBhdGhYYm9va21hcmtfEHUuLi9saWJyYXJ5L0F6aXogZXQgYWwuIFRoZSBleHBhbmRpbmcgYXBwcm92YWxzIHJ1bGUgaW1wcm92aW5nIHByb3BvcnRpb25hbCByZXByZXNlbnRhdGlvbiBhbmQgbW9ub3RvbmljaXR5ICgyMDIwKS5wZGZPEQZMYm9va0wGAAAAAAQQMAAAAAAAAAAAAAAAAAAAAAAAAAAAAAAAAAAAAAAAAAAAAAAAPAUAAAUAAAABAQAAVXNlcnMAAAAMAAAAAQEAAGphbm5pa3BldGVycwcAAAABAQAATGlicmFyeQAMAAAAAQEAAENsb3VkU3RvcmFnZSYAAAABAQAAR29vZ2xlRHJpdmUtamFubmlrcGV0ZXJzMjUxMkBnbWFpbC5jb20AABcAAAABAQAALnNob3J0Y3V0LXRhcmdldHMtYnktaWQAIQAAAAEBAAAxMmZ3ZF9Ucjc2TTZiOVlMRGt4NUtlVTk0NXVJWnNiYmoAAAAIAAAAAQEAAHJlc2VhcmNoBwAAAAEBAABsaWJyYXJ5AGoAAAABAQAAQXppeiBldCBhbC4gVGhlIGV4cGFuZGluZyBhcHByb3ZhbHMgcnVsZSBpbXByb3ZpbmcgcHJvcG9ydGlvbmFsIHJlcHJlc2VudGF0aW9uIGFuZCBtb25vdG9uaWNpdHkgKDIwMjApLnBkZgAAKAAAAAEGAAAEAAAAFAAAACgAAAA4AAAATAAAAHwAAACcAAAAyAAAANgAAADoAAAACAAAAAQDAAAVXQAAAAAAAAgAAAAEAwAA4PEDAAAAAAAIAAAABAMAAOfxAwAAAAAACAAAAAQDAAA3rV0CAAAAAAgAAAAEAwAAPK1dAgAAAAAIAAAABAMAAJ2tXQIAAAAACAAAAAQDAABM4V0CAAAAAAgAAAAEAwAAVuFdAgAAAAAIAAAABAMAAGPhXQIAAAAACAAAAAQDAAD3/XkCAAAAACgAAAABBgAAjAEAAJwBAACsAQAAvAEAAMwBAADcAQAA7AEAAPwBAAAMAgAAHAIAAAgAAAAABAAAQcKfHd1+FHoYAAAAAQIAAAEAAAAAAAAADwAAAAAAAAAAAAAAAAAAAAgAAAAEAwAACAAAAAAAAAAEAAAAAwMAAPUBAAAIAAAAAQkAAGZpbGU6Ly8vDAAAAAEBAABNYWNpbnRvc2ggSEQIAAAABAMAAABQoRtzAAAACAAAAAAEAABBxEAH4AAAACQAAAABAQAAREY4RURGNDMtQTcxQS00QTkzLUJFNjYtQjlGODQyMDAzRTQ3GAAAAAECAACBAAAAAQAAAO8TAAABAAAAAAAAAAAAAAABAAAAAQEAAC8AAAAAAAAAAQUAABoAAAABAQAATlNVUkxEb2N1bWVudElkZW50aWZpZXJLZXkAAAQAAAADAwAAg4oBALcBAAABAgAAOWNlNWQ4ZTNhMWI0ZWIxNzM2OTIwMzhmNGRhY2FjOWZkM2Q0ZmVkZjUxNjM3NGUyNTdkMzRiNTZhMmI1ODMwMDswMDswMDAwMDAwMDswMDAwMDAwMDswMDAwMDAwMDswMDAwMDAwMDAwMDAwMDIwO2NvbS5hcHBsZS5hcHAtc2FuZGJveC5yZWFkLXdyaXRlOzAxOzAxMDAwMDEyOzAwMDAwMDAwMDI3OWZkZjc7NGM7L3VzZXJzL2phbm5pa3BldGVycy9saWJyYXJ5L2Nsb3Vkc3RvcmFnZS9nb29nbGVkcml2ZS1qYW5uaWtwZXRlcnMyNTEyQGdtYWlsLmNvbS8uc2hvcnRjdXQtdGFyZ2V0cy1ieS1pZC8xMmZ3ZF90cjc2bTZiOXlsZGt4NWtldTk0NXVpenNiYmovcmVzZWFyY2gvbGlicmFyeS9heml6IGV0IGFsLiB0aGUgZXhwYW5kaW5nIGFwcHJvdmFscyBydWxlIGltcHJvdmluZyBwcm9wb3J0aW9uYWwgcmVwcmVzZW50YXRpb24gYW5kIG1vbm90b25pY2l0eSAoMjAyMCkucGRmAADYAAAA/v///wEAAAAAAAAAEQAAAAQQAABcAQAAAAAAAAUQAAAsAgAAAAAAABAQAABsAgAAAAAAAEAQAABcAgAAAAAAAAIgAAA4AwAAAAAAAAUgAACoAgAAAAAAABAgAAC4AgAAAAAAABEgAADsAgAAAAAAABIgAADMAgAAAAAAABMgAADcAgAAAAAAACAgAAAYAwAAAAAAADAgAABEAwAAAAAAAAHAAACMAgAAAAAAABHAAAAUAAAAAAAAABLAAACcAgAAAAAAAIDwAAB8AwAAAAAAAEwDAIBwAwAAAAAAAAAIAA0AGgAjAJsAAAAAAAACAQAAAAAAAAAFAAAAAAAAAAAAAAAAAAAG6w==}}

@inproceedings{SaFi17a,
	author = {L. S{\'a}nchez-Fern{\'a}ndez and J. A. Fisteus},
	booktitle = proc # {18th} # aamas,
	date-added = {2018-03-08 22:00:31 +0000},
	date-modified = {2020-01-16 18:22:20 +0100},
	institution = {https://arxiv.org/abs/1710.04246},
	keywords = {approval voting},
	pages = {485--493},
	title = {Monotonicity axioms in approval-based multi-winner voting rules},
	year = {2019},
	bdsk-file-1 = {YnBsaXN0MDDSAQIDBFxyZWxhdGl2ZVBhdGhYYm9va21hcmtvEHAALgAuAC8AbABpAGIAcgBhAHIAeQAvAFMAYQMBAG4AYwBoAGUAegAtAEYAZQByAG4AYQMBAG4AZABlAHoAIABlAHQAIABhAGwALgAgAE0AbwBuAG8AdABvAG4AaQBjAGkAdAB5ACAAYQB4AGkAbwBtAHMAIABpAG4AIABhAHAAcAByAG8AdgBhAGwALQBiAGEAcwBlAGQAIABtAHUAbAB0AGkALQB3AGkAbgBuAGUAcgAgAHYAbwB0AGkAbgBnACAAcgB1AGwAZQBzACAAKAAyADAAMQA3ACkALgBwAGQAZk8RBkRib29rRAYAAAAABBAwAAAAAAAAAAAAAAAAAAAAAAAAAAAAAAAAAAAAAAAAAAAAAAA0BQAABQAAAAEBAABVc2VycwAAAAwAAAABAQAAamFubmlrcGV0ZXJzBwAAAAEBAABMaWJyYXJ5AAwAAAABAQAAQ2xvdWRTdG9yYWdlJgAAAAEBAABHb29nbGVEcml2ZS1qYW5uaWtwZXRlcnMyNTEyQGdtYWlsLmNvbQAAFwAAAAEBAAAuc2hvcnRjdXQtdGFyZ2V0cy1ieS1pZAAhAAAAAQEAADEyZndkX1RyNzZNNmI5WUxEa3g1S2VVOTQ1dUlac2JiagAAAAgAAAABAQAAcmVzZWFyY2gHAAAAAQEAAGxpYnJhcnkAZwAAAAEBAABTYcyBbmNoZXotRmVybmHMgW5kZXogZXQgYWwuIE1vbm90b25pY2l0eSBheGlvbXMgaW4gYXBwcm92YWwtYmFzZWQgbXVsdGktd2lubmVyIHZvdGluZyBydWxlcyAoMjAxNykucGRmACgAAAABBgAABAAAABQAAAAoAAAAOAAAAEwAAAB8AAAAnAAAAMgAAADYAAAA6AAAAAgAAAAEAwAAFV0AAAAAAAAIAAAABAMAAODxAwAAAAAACAAAAAQDAADn8QMAAAAAAAgAAAAEAwAAN61dAgAAAAAIAAAABAMAADytXQIAAAAACAAAAAQDAACdrV0CAAAAAAgAAAAEAwAATOFdAgAAAAAIAAAABAMAAFbhXQIAAAAACAAAAAQDAABj4V0CAAAAAAgAAAAEAwAA7eJdAgAAAAAoAAAAAQYAAIgBAACYAQAAqAEAALgBAADIAQAA2AEAAOgBAAD4AQAACAIAABgCAAAIAAAAAAQAAEHCnxag6RaIGAAAAAECAAABAAAAAAAAAA8AAAAAAAAAAAAAAAAAAAAIAAAABAMAAAgAAAAAAAAABAAAAAMDAAD1AQAACAAAAAEJAABmaWxlOi8vLwwAAAABAQAATWFjaW50b3NoIEhECAAAAAQDAAAAUKEbcwAAAAgAAAAABAAAQcRAB+AAAAAkAAAAAQEAAERGOEVERjQzLUE3MUEtNEE5My1CRTY2LUI5Rjg0MjAwM0U0NxgAAAABAgAAgQAAAAEAAADvEwAAAQAAAAAAAAAAAAAAAQAAAAEBAAAvAAAAAAAAAAEFAAAaAAAAAQEAAE5TVVJMRG9jdW1lbnRJZGVudGlmaWVyS2V5AAAEAAAAAwMAACWIAQC0AQAAAQIAADFhYTE4YjliNjBjZWM3MjFmZTY1YzBjZDY4OWRhNGZkMTUwNDEwYjJjNjE0OTAzNTIzNDY0MTJjMzIzZDQ1MTM7MDA7MDAwMDAwMDA7MDAwMDAwMDA7MDAwMDAwMDA7MDAwMDAwMDAwMDAwMDAyMDtjb20uYXBwbGUuYXBwLXNhbmRib3gucmVhZC13cml0ZTswMTswMTAwMDAxMjswMDAwMDAwMDAyNWRlMmVkOzRjOy91c2Vycy9qYW5uaWtwZXRlcnMvbGlicmFyeS9jbG91ZHN0b3JhZ2UvZ29vZ2xlZHJpdmUtamFubmlrcGV0ZXJzMjUxMkBnbWFpbC5jb20vLnNob3J0Y3V0LXRhcmdldHMtYnktaWQvMTJmd2RfdHI3Nm02Yjl5bGRreDVrZXU5NDV1aXpzYmJqL3Jlc2VhcmNoL2xpYnJhcnkvc2HMgW5jaGV6LWZlcm5hzIFuZGV6IGV0IGFsLiBtb25vdG9uaWNpdHkgYXhpb21zIGluIGFwcHJvdmFsLWJhc2VkIG11bHRpLXdpbm5lciB2b3RpbmcgcnVsZXMgKDIwMTcpLnBkZgDYAAAA/v///wEAAAAAAAAAEQAAAAQQAABYAQAAAAAAAAUQAAAoAgAAAAAAABAQAABoAgAAAAAAAEAQAABYAgAAAAAAAAIgAAA0AwAAAAAAAAUgAACkAgAAAAAAABAgAAC0AgAAAAAAABEgAADoAgAAAAAAABIgAADIAgAAAAAAABMgAADYAgAAAAAAACAgAAAUAwAAAAAAADAgAABAAwAAAAAAAAHAAACIAgAAAAAAABHAAAAUAAAAAAAAABLAAACYAgAAAAAAAIDwAAB4AwAAAAAAAEgDAIBsAwAAAAAAAAAIAA0AGgAjAQYAAAAAAAACAQAAAAAAAAAFAAAAAAAAAAAAAAAAAAAHTg==},
	bdsk-file-2 = {YnBsaXN0MDDSAQIDBFxyZWxhdGl2ZVBhdGhYYm9va21hcmtvEHEALgAuAC8AbABpAGIAcgBhAHIAeQAvAFMAYQMBAG4AYwBoAGUAegAtAEYAZQByAG4AYQMBAG4AZABlAHoAIABlAHQAIABhAGwALgAgAE0AbwBuAG8AdABvAG4AaQBjAGkAdAB5ACAAYQB4AGkAbwBtAHMAIABpAG4AIABhAHAAcAByAG8AdgBhAGwALQBiAGEAcwBlAGQAIABtAHUAbAB0AGkALQB3AGkAbgBuAGUAcgAgAHYAbwB0AGkAbgBnACAAcgB1AGwAZQBzACAAKAAyADAAMQA3ACkAMAAuAHAAZABmTxEGSGJvb2tIBgAAAAAEEDAAAAAAAAAAAAAAAAAAAAAAAAAAAAAAAAAAAAAAAAAAAAAAADgFAAAFAAAAAQEAAFVzZXJzAAAADAAAAAEBAABqYW5uaWtwZXRlcnMHAAAAAQEAAExpYnJhcnkADAAAAAEBAABDbG91ZFN0b3JhZ2UmAAAAAQEAAEdvb2dsZURyaXZlLWphbm5pa3BldGVyczI1MTJAZ21haWwuY29tAAAXAAAAAQEAAC5zaG9ydGN1dC10YXJnZXRzLWJ5LWlkACEAAAABAQAAMTJmd2RfVHI3Nk02YjlZTERreDVLZVU5NDV1SVpzYmJqAAAACAAAAAEBAAByZXNlYXJjaAcAAAABAQAAbGlicmFyeQBoAAAAAQEAAFNhzIFuY2hlei1GZXJuYcyBbmRleiBldCBhbC4gTW9ub3RvbmljaXR5IGF4aW9tcyBpbiBhcHByb3ZhbC1iYXNlZCBtdWx0aS13aW5uZXIgdm90aW5nIHJ1bGVzICgyMDE3KTAucGRmKAAAAAEGAAAEAAAAFAAAACgAAAA4AAAATAAAAHwAAACcAAAAyAAAANgAAADoAAAACAAAAAQDAAAVXQAAAAAAAAgAAAAEAwAA4PEDAAAAAAAIAAAABAMAAOfxAwAAAAAACAAAAAQDAAA3rV0CAAAAAAgAAAAEAwAAPK1dAgAAAAAIAAAABAMAAJ2tXQIAAAAACAAAAAQDAABM4V0CAAAAAAgAAAAEAwAAVuFdAgAAAAAIAAAABAMAAGPhXQIAAAAACAAAAAQDAADT6F0CAAAAACgAAAABBgAAiAEAAJgBAACoAQAAuAEAAMgBAADYAQAA6AEAAPgBAAAIAgAAGAIAAAgAAAAABAAAQcKfFzMHS8YYAAAAAQIAAAEAAAAAAAAADwAAAAAAAAAAAAAAAAAAAAgAAAAEAwAACAAAAAAAAAAEAAAAAwMAAPUBAAAIAAAAAQkAAGZpbGU6Ly8vDAAAAAEBAABNYWNpbnRvc2ggSEQIAAAABAMAAABQoRtzAAAACAAAAAAEAABBxEAH4AAAACQAAAABAQAAREY4RURGNDMtQTcxQS00QTkzLUJFNjYtQjlGODQyMDAzRTQ3GAAAAAECAACBAAAAAQAAAO8TAAABAAAAAAAAAAAAAAABAAAAAQEAAC8AAAAAAAAAAQUAABoAAAABAQAATlNVUkxEb2N1bWVudElkZW50aWZpZXJLZXkAAAQAAAADAwAAwI0BALUBAAABAgAAOTdjNWQ2YmM3MjA0ZGYwMzZjZWRlZWU4OGQ5ZGM3ZGI2OTZiMWY0Nzk3YzhkOGQ2Nzk4Yzc1ZjljNDdlMzM2OTswMDswMDAwMDAwMDswMDAwMDAwMDswMDAwMDAwMDswMDAwMDAwMDAwMDAwMDIwO2NvbS5hcHBsZS5hcHAtc2FuZGJveC5yZWFkLXdyaXRlOzAxOzAxMDAwMDEyOzAwMDAwMDAwMDI1ZGU4ZDM7NGM7L3VzZXJzL2phbm5pa3BldGVycy9saWJyYXJ5L2Nsb3Vkc3RvcmFnZS9nb29nbGVkcml2ZS1qYW5uaWtwZXRlcnMyNTEyQGdtYWlsLmNvbS8uc2hvcnRjdXQtdGFyZ2V0cy1ieS1pZC8xMmZ3ZF90cjc2bTZiOXlsZGt4NWtldTk0NXVpenNiYmovcmVzZWFyY2gvbGlicmFyeS9zYcyBbmNoZXotZmVybmHMgW5kZXogZXQgYWwuIG1vbm90b25pY2l0eSBheGlvbXMgaW4gYXBwcm92YWwtYmFzZWQgbXVsdGktd2lubmVyIHZvdGluZyBydWxlcyAoMjAxNykwLnBkZgAAAADYAAAA/v///wEAAAAAAAAAEQAAAAQQAABYAQAAAAAAAAUQAAAoAgAAAAAAABAQAABoAgAAAAAAAEAQAABYAgAAAAAAAAIgAAA0AwAAAAAAAAUgAACkAgAAAAAAABAgAAC0AgAAAAAAABEgAADoAgAAAAAAABIgAADIAgAAAAAAABMgAADYAgAAAAAAACAgAAAUAwAAAAAAADAgAABAAwAAAAAAAAHAAACIAgAAAAAAABHAAAAUAAAAAAAAABLAAACYAgAAAAAAAIDwAAB4AwAAAAAAAEgDAIBsAwAAAAAAAAAIAA0AGgAjAQgAAAAAAAACAQAAAAAAAAAFAAAAAAAAAAAAAAAAAAAHVA==}}

@article{EFSS17a,
	author = {E. Elkind and P. Faliszewski and P. Skowron and A. Slinko},
	date-added = {2017-03-03 03:07:23 +0000},
	date-modified = {2018-07-17 14:50:45 +0000},
	journal = {Social Choice and Welfare},
	pages = {599--632},
	title = {Properties of Multiwinner Voting Rules},
	volume = {48},
	year = {2017},
	bdsk-file-1 = {YnBsaXN0MDDSAQIDBFxyZWxhdGl2ZVBhdGhYYm9va21hcmtfEEouLi9saWJyYXJ5L0Vsa2luZCBldCBhbC4gUHJvcGVydGllcyBvZiBNdWx0aXdpbm5lciBWb3RpbmcgUnVsZXMgKDIwMTcpLnBkZk8RBfRib29r9AUAAAAABBAwAAAAAAAAAAAAAAAAAAAAAAAAAAAAAAAAAAAAAAAAAAAAAADkBAAABQAAAAEBAABVc2VycwAAAAwAAAABAQAAamFubmlrcGV0ZXJzBwAAAAEBAABMaWJyYXJ5AAwAAAABAQAAQ2xvdWRTdG9yYWdlJgAAAAEBAABHb29nbGVEcml2ZS1qYW5uaWtwZXRlcnMyNTEyQGdtYWlsLmNvbQAAFwAAAAEBAAAuc2hvcnRjdXQtdGFyZ2V0cy1ieS1pZAAhAAAAAQEAADEyZndkX1RyNzZNNmI5WUxEa3g1S2VVOTQ1dUlac2JiagAAAAgAAAABAQAAcmVzZWFyY2gHAAAAAQEAAGxpYnJhcnkAPwAAAAEBAABFbGtpbmQgZXQgYWwuIFByb3BlcnRpZXMgb2YgTXVsdGl3aW5uZXIgVm90aW5nIFJ1bGVzICgyMDE3KS5wZGYAKAAAAAEGAAAEAAAAFAAAACgAAAA4AAAATAAAAHwAAACcAAAAyAAAANgAAADoAAAACAAAAAQDAAAVXQAAAAAAAAgAAAAEAwAA4PEDAAAAAAAIAAAABAMAAOfxAwAAAAAACAAAAAQDAAA3rV0CAAAAAAgAAAAEAwAAPK1dAgAAAAAIAAAABAMAAJ2tXQIAAAAACAAAAAQDAABM4V0CAAAAAAgAAAAEAwAAVuFdAgAAAAAIAAAABAMAAGPhXQIAAAAACAAAAAQDAACp510CAAAAACgAAAABBgAAYAEAAHABAACAAQAAkAEAAKABAACwAQAAwAEAANABAADgAQAA8AEAAAgAAAAABAAAQcKfHUktDlYYAAAAAQIAAAEAAAAAAAAADwAAAAAAAAAAAAAAAAAAAAgAAAAEAwAACAAAAAAAAAAEAAAAAwMAAPUBAAAIAAAAAQkAAGZpbGU6Ly8vDAAAAAEBAABNYWNpbnRvc2ggSEQIAAAABAMAAABQoRtzAAAACAAAAAAEAABBxEAH4AAAACQAAAABAQAAREY4RURGNDMtQTcxQS00QTkzLUJFNjYtQjlGODQyMDAzRTQ3GAAAAAECAACBAAAAAQAAAO8TAAABAAAAAAAAAAAAAAABAAAAAQEAAC8AAAAAAAAAAQUAABoAAAABAQAATlNVUkxEb2N1bWVudElkZW50aWZpZXJLZXkAAAQAAAADAwAAt4wBAIwBAAABAgAAMzcxNDJhOTcwYTc4NTI1NDQ5MzQwOGE2ZTExOGI2Yzk1NDU1ZjkyYThmYzY3Yzg2NTliMzk0NzA3YmU4NzQ4YzswMDswMDAwMDAwMDswMDAwMDAwMDswMDAwMDAwMDswMDAwMDAwMDAwMDAwMDIwO2NvbS5hcHBsZS5hcHAtc2FuZGJveC5yZWFkLXdyaXRlOzAxOzAxMDAwMDEyOzAwMDAwMDAwMDI1ZGU3YTk7NGM7L3VzZXJzL2phbm5pa3BldGVycy9saWJyYXJ5L2Nsb3Vkc3RvcmFnZS9nb29nbGVkcml2ZS1qYW5uaWtwZXRlcnMyNTEyQGdtYWlsLmNvbS8uc2hvcnRjdXQtdGFyZ2V0cy1ieS1pZC8xMmZ3ZF90cjc2bTZiOXlsZGt4NWtldTk0NXVpenNiYmovcmVzZWFyY2gvbGlicmFyeS9lbGtpbmQgZXQgYWwuIHByb3BlcnRpZXMgb2YgbXVsdGl3aW5uZXIgdm90aW5nIHJ1bGVzICgyMDE3KS5wZGYA2AAAAP7///8BAAAAAAAAABEAAAAEEAAAMAEAAAAAAAAFEAAAAAIAAAAAAAAQEAAAQAIAAAAAAABAEAAAMAIAAAAAAAACIAAADAMAAAAAAAAFIAAAfAIAAAAAAAAQIAAAjAIAAAAAAAARIAAAwAIAAAAAAAASIAAAoAIAAAAAAAATIAAAsAIAAAAAAAAgIAAA7AIAAAAAAAAwIAAAGAMAAAAAAAABwAAAYAIAAAAAAAARwAAAFAAAAAAAAAASwAAAcAIAAAAAAACA8AAAUAMAAAAAAAAgAwCARAMAAAAAAAAACAANABoAIwBwAAAAAAAAAgEAAAAAAAAABQAAAAAAAAAAAAAAAAAABmg=}}

@techreport{Jans16a,
	author = {S. Janson},
	date-added = {2016-11-25 19:45:07 +0000},
	date-modified = {2022-09-05 10:58:38 +0200},
	institution = {arXiv:1611.08826 [math.HO]},
	keywords = {ALGO_PR},
	title = {Phragm{\'e}n's and {T}hiele's election methods},
	year = {2016},
	bdsk-file-1 = {YnBsaXN0MDDSAQIDBFxyZWxhdGl2ZVBhdGhYYm9va21hcmtvEEcALgAuAC8AbABpAGIAcgBhAHIAeQAvAEoAYQBuAHMAbwBuAC4AIABQAGgAcgBhAGcAbQBlAwEAbgAnAHMAIABhAG4AZAAgAFQAaABpAGUAbABlACcAcwAgAGUAbABlAGMAdABpAG8AbgAgAG0AZQB0AGgAbwBkAHMAIAAoADIAMAAxADYAKQAuAHAAZABmTxEF9GJvb2v0BQAAAAAEEDAAAAAAAAAAAAAAAAAAAAAAAAAAAAAAAAAAAAAAAAAAAAAAAOQEAAAFAAAAAQEAAFVzZXJzAAAADAAAAAEBAABqYW5uaWtwZXRlcnMHAAAAAQEAAExpYnJhcnkADAAAAAEBAABDbG91ZFN0b3JhZ2UmAAAAAQEAAEdvb2dsZURyaXZlLWphbm5pa3BldGVyczI1MTJAZ21haWwuY29tAAAXAAAAAQEAAC5zaG9ydGN1dC10YXJnZXRzLWJ5LWlkACEAAAABAQAAMTJmd2RfVHI3Nk02YjlZTERreDVLZVU5NDV1SVpzYmJqAAAACAAAAAEBAAByZXNlYXJjaAcAAAABAQAAbGlicmFyeQA9AAAAAQEAAEphbnNvbi4gUGhyYWdtZcyBbidzIGFuZCBUaGllbGUncyBlbGVjdGlvbiBtZXRob2RzICgyMDE2KS5wZGYAAAAoAAAAAQYAAAQAAAAUAAAAKAAAADgAAABMAAAAfAAAAJwAAADIAAAA2AAAAOgAAAAIAAAABAMAABVdAAAAAAAACAAAAAQDAADg8QMAAAAAAAgAAAAEAwAA5/EDAAAAAAAIAAAABAMAADetXQIAAAAACAAAAAQDAAA8rV0CAAAAAAgAAAAEAwAAna1dAgAAAAAIAAAABAMAAEzhXQIAAAAACAAAAAQDAABW4V0CAAAAAAgAAAAEAwAAY+FdAgAAAAAIAAAABAMAAFfqXQIAAAAAKAAAAAEGAABgAQAAcAEAAIABAACQAQAAoAEAALABAADAAQAA0AEAAOABAADwAQAACAAAAAAEAABBwp8bZSwo9hgAAAABAgAAAQAAAAAAAAAPAAAAAAAAAAAAAAAAAAAACAAAAAQDAAAIAAAAAAAAAAQAAAADAwAA9QEAAAgAAAABCQAAZmlsZTovLy8MAAAAAQEAAE1hY2ludG9zaCBIRAgAAAAEAwAAAFChG3MAAAAIAAAAAAQAAEHEQAfgAAAAJAAAAAEBAABERjhFREY0My1BNzFBLTRBOTMtQkU2Ni1COUY4NDIwMDNFNDcYAAAAAQIAAIEAAAABAAAA7xMAAAEAAAAAAAAAAAAAAAEAAAABAQAALwAAAAAAAAABBQAAGgAAAAEBAABOU1VSTERvY3VtZW50SWRlbnRpZmllcktleQAABAAAAAMDAABDjwEAigEAAAECAAA3N2VjNjZjY2M4NjVjY2YzNGY1YTM3ZTE1M2FiYTAyNjIyYTU0NWEzM2Y5MWM2MmQzNTZiZmRiMjI4YzNmY2ZiOzAwOzAwMDAwMDAwOzAwMDAwMDAwOzAwMDAwMDAwOzAwMDAwMDAwMDAwMDAwMjA7Y29tLmFwcGxlLmFwcC1zYW5kYm94LnJlYWQtd3JpdGU7MDE7MDEwMDAwMTI7MDAwMDAwMDAwMjVkZWE1Nzs0YzsvdXNlcnMvamFubmlrcGV0ZXJzL2xpYnJhcnkvY2xvdWRzdG9yYWdlL2dvb2dsZWRyaXZlLWphbm5pa3BldGVyczI1MTJAZ21haWwuY29tLy5zaG9ydGN1dC10YXJnZXRzLWJ5LWlkLzEyZndkX3RyNzZtNmI5eWxka3g1a2V1OTQ1dWl6c2Jiai9yZXNlYXJjaC9saWJyYXJ5L2phbnNvbi4gcGhyYWdtZcyBbidzIGFuZCB0aGllbGUncyBlbGVjdGlvbiBtZXRob2RzICgyMDE2KS5wZGYAAADYAAAA/v///wEAAAAAAAAAEQAAAAQQAAAwAQAAAAAAAAUQAAAAAgAAAAAAABAQAABAAgAAAAAAAEAQAAAwAgAAAAAAAAIgAAAMAwAAAAAAAAUgAAB8AgAAAAAAABAgAACMAgAAAAAAABEgAADAAgAAAAAAABIgAACgAgAAAAAAABMgAACwAgAAAAAAACAgAADsAgAAAAAAADAgAAAYAwAAAAAAAAHAAABgAgAAAAAAABHAAAAUAAAAAAAAABLAAABwAgAAAAAAAIDwAABQAwAAAAAAACADAIBEAwAAAAAAAAAIAA0AGgAjALQAAAAAAAACAQAAAAAAAAAFAAAAAAAAAAAAAAAAAAAGrA==},
	bdsk-file-2 = {YnBsaXN0MDDSAQIDBFxyZWxhdGl2ZVBhdGhYYm9va21hcmtvEEgALgAuAC8AbABpAGIAcgBhAHIAeQAvAEoAYQBuAHMAbwBuAC4AIABQAGgAcgBhAGcAbQBlAwEAbgAnAHMAIABhAG4AZAAgAFQAaABpAGUAbABlACcAcwAgAGUAbABlAGMAdABpAG8AbgAgAG0AZQB0AGgAbwBkAHMAIAAoADIAMAAxADYAKQAxAC4AcABkAGZPEQX0Ym9va/QFAAAAAAQQMAAAAAAAAAAAAAAAAAAAAAAAAAAAAAAAAAAAAAAAAAAAAAAA5AQAAAUAAAABAQAAVXNlcnMAAAAMAAAAAQEAAGphbm5pa3BldGVycwcAAAABAQAATGlicmFyeQAMAAAAAQEAAENsb3VkU3RvcmFnZSYAAAABAQAAR29vZ2xlRHJpdmUtamFubmlrcGV0ZXJzMjUxMkBnbWFpbC5jb20AABcAAAABAQAALnNob3J0Y3V0LXRhcmdldHMtYnktaWQAIQAAAAEBAAAxMmZ3ZF9Ucjc2TTZiOVlMRGt4NUtlVTk0NXVJWnNiYmoAAAAIAAAAAQEAAHJlc2VhcmNoBwAAAAEBAABsaWJyYXJ5AD4AAAABAQAASmFuc29uLiBQaHJhZ21lzIFuJ3MgYW5kIFRoaWVsZSdzIGVsZWN0aW9uIG1ldGhvZHMgKDIwMTYpMS5wZGYAACgAAAABBgAABAAAABQAAAAoAAAAOAAAAEwAAAB8AAAAnAAAAMgAAADYAAAA6AAAAAgAAAAEAwAAFV0AAAAAAAAIAAAABAMAAODxAwAAAAAACAAAAAQDAADn8QMAAAAAAAgAAAAEAwAAN61dAgAAAAAIAAAABAMAADytXQIAAAAACAAAAAQDAACdrV0CAAAAAAgAAAAEAwAATOFdAgAAAAAIAAAABAMAAFbhXQIAAAAACAAAAAQDAABj4V0CAAAAAAgAAAAEAwAAd+RdAgAAAAAoAAAAAQYAAGABAABwAQAAgAEAAJABAACgAQAAsAEAAMABAADQAQAA4AEAAPABAAAIAAAAAAQAAEHDDs0EIGJOGAAAAAECAAABAAAAAAAAAA8AAAAAAAAAAAAAAAAAAAAIAAAABAMAAAgAAAAAAAAABAAAAAMDAAD1AQAACAAAAAEJAABmaWxlOi8vLwwAAAABAQAATWFjaW50b3NoIEhECAAAAAQDAAAAUKEbcwAAAAgAAAAABAAAQcRAB+AAAAAkAAAAAQEAAERGOEVERjQzLUE3MUEtNEE5My1CRTY2LUI5Rjg0MjAwM0U0NxgAAAABAgAAgQAAAAEAAADvEwAAAQAAAAAAAAAAAAAAAQAAAAEBAAAvAAAAAAAAAAEFAAAaAAAAAQEAAE5TVVJMRG9jdW1lbnRJZGVudGlmaWVyS2V5AAAEAAAAAwMAAKmJAQCLAQAAAQIAADcxMDM1YjI1NjJjZDI3YTk3MmNmNTMwNzc4NzJiZGIzNTJlZmZkNGRmYTM2OGM1MjQ0ZDg3YjQ2MDBjZTE2ODU7MDA7MDAwMDAwMDA7MDAwMDAwMDA7MDAwMDAwMDA7MDAwMDAwMDAwMDAwMDAyMDtjb20uYXBwbGUuYXBwLXNhbmRib3gucmVhZC13cml0ZTswMTswMTAwMDAxMjswMDAwMDAwMDAyNWRlNDc3OzRjOy91c2Vycy9qYW5uaWtwZXRlcnMvbGlicmFyeS9jbG91ZHN0b3JhZ2UvZ29vZ2xlZHJpdmUtamFubmlrcGV0ZXJzMjUxMkBnbWFpbC5jb20vLnNob3J0Y3V0LXRhcmdldHMtYnktaWQvMTJmd2RfdHI3Nm02Yjl5bGRreDVrZXU5NDV1aXpzYmJqL3Jlc2VhcmNoL2xpYnJhcnkvamFuc29uLiBwaHJhZ21lzIFuJ3MgYW5kIHRoaWVsZSdzIGVsZWN0aW9uIG1ldGhvZHMgKDIwMTYpMS5wZGYAANgAAAD+////AQAAAAAAAAARAAAABBAAADABAAAAAAAABRAAAAACAAAAAAAAEBAAAEACAAAAAAAAQBAAADACAAAAAAAAAiAAAAwDAAAAAAAABSAAAHwCAAAAAAAAECAAAIwCAAAAAAAAESAAAMACAAAAAAAAEiAAAKACAAAAAAAAEyAAALACAAAAAAAAICAAAOwCAAAAAAAAMCAAABgDAAAAAAAAAcAAAGACAAAAAAAAEcAAABQAAAAAAAAAEsAAAHACAAAAAAAAgPAAAFADAAAAAAAAIAMAgEQDAAAAAAAAAAgADQAaACMAtgAAAAAAAAIBAAAAAAAAAAUAAAAAAAAAAAAAAAAAAAau}}

@book{BaYo01a,
	author = {M. Balinski and H. P. Young},
	date-added = {2016-08-03 16:08:59 +0000},
	date-modified = {2016-08-03 16:13:45 +0000},
	edition = {2nd},
	keywords = {apportionment},
	publisher = {Brookings Institution Press},
	title = {Fair Representation: {M}eeting the Ideal of One Man, One Vote},
	year = {2001}}

@article{Tide95a,
	author = {N. Tideman},
	date-added = {2016-04-20 12:56:29 +0000},
	date-modified = {2016-04-20 12:57:43 +0000},
	journal = {Journal of Economic Perspectives},
	number = {1},
	pages = {27--38},
	title = {The Single Transferable Vote},
	volume = {9},
	year = {1995},
	bdsk-file-1 = {YnBsaXN0MDDSAQIDBFxyZWxhdGl2ZVBhdGhYYm9va21hcmtfEDsuLi9saWJyYXJ5L1RpZGVtYW4uIFRoZSBTaW5nbGUgVHJhbnNmZXJhYmxlIFZvdGUgKDE5OTUpLnBkZk8RBdhib29r2AUAAAAABBAwAAAAAAAAAAAAAAAAAAAAAAAAAAAAAAAAAAAAAAAAAAAAAADIBAAABQAAAAEBAABVc2VycwAAAAwAAAABAQAAamFubmlrcGV0ZXJzBwAAAAEBAABMaWJyYXJ5AAwAAAABAQAAQ2xvdWRTdG9yYWdlJgAAAAEBAABHb29nbGVEcml2ZS1qYW5uaWtwZXRlcnMyNTEyQGdtYWlsLmNvbQAAFwAAAAEBAAAuc2hvcnRjdXQtdGFyZ2V0cy1ieS1pZAAhAAAAAQEAADEyZndkX1RyNzZNNmI5WUxEa3g1S2VVOTQ1dUlac2JiagAAAAgAAAABAQAAcmVzZWFyY2gHAAAAAQEAAGxpYnJhcnkAMAAAAAEBAABUaWRlbWFuLiBUaGUgU2luZ2xlIFRyYW5zZmVyYWJsZSBWb3RlICgxOTk1KS5wZGYoAAAAAQYAAAQAAAAUAAAAKAAAADgAAABMAAAAfAAAAJwAAADIAAAA2AAAAOgAAAAIAAAABAMAABVdAAAAAAAACAAAAAQDAADg8QMAAAAAAAgAAAAEAwAA5/EDAAAAAAAIAAAABAMAADetXQIAAAAACAAAAAQDAAA8rV0CAAAAAAgAAAAEAwAAna1dAgAAAAAIAAAABAMAAEzhXQIAAAAACAAAAAQDAABW4V0CAAAAAAgAAAAEAwAAY+FdAgAAAAAIAAAABAMAAPLjXQIAAAAAKAAAAAEGAABQAQAAYAEAAHABAACAAQAAkAEAAKABAACwAQAAwAEAANABAADgAQAACAAAAAAEAABBwp8hXOyLRBgAAAABAgAAAQAAAAAAAAAPAAAAAAAAAAAAAAAAAAAACAAAAAQDAAAIAAAAAAAAAAQAAAADAwAA9QEAAAgAAAABCQAAZmlsZTovLy8MAAAAAQEAAE1hY2ludG9zaCBIRAgAAAAEAwAAAFChG3MAAAAIAAAAAAQAAEHEQAfgAAAAJAAAAAEBAABERjhFREY0My1BNzFBLTRBOTMtQkU2Ni1COUY4NDIwMDNFNDcYAAAAAQIAAIEAAAABAAAA7xMAAAEAAAAAAAAAAAAAAAEAAAABAQAALwAAAAAAAAABBQAAGgAAAAEBAABOU1VSTERvY3VtZW50SWRlbnRpZmllcktleQAABAAAAAMDAAAkiQEAfQEAAAECAABlODYxNzJmZmM0OWU3Zjc1YmQzOGQyNjMwNGZmNDVjMWZmOGJiMWRhNWI3OTE0OGI0MzA2YjJlNWFlMGMwOTNkOzAwOzAwMDAwMDAwOzAwMDAwMDAwOzAwMDAwMDAwOzAwMDAwMDAwMDAwMDAwMjA7Y29tLmFwcGxlLmFwcC1zYW5kYm94LnJlYWQtd3JpdGU7MDE7MDEwMDAwMTI7MDAwMDAwMDAwMjVkZTNmMjs0YzsvdXNlcnMvamFubmlrcGV0ZXJzL2xpYnJhcnkvY2xvdWRzdG9yYWdlL2dvb2dsZWRyaXZlLWphbm5pa3BldGVyczI1MTJAZ21haWwuY29tLy5zaG9ydGN1dC10YXJnZXRzLWJ5LWlkLzEyZndkX3RyNzZtNmI5eWxka3g1a2V1OTQ1dWl6c2Jiai9yZXNlYXJjaC9saWJyYXJ5L3RpZGVtYW4uIHRoZSBzaW5nbGUgdHJhbnNmZXJhYmxlIHZvdGUgKDE5OTUpLnBkZgAAAADYAAAA/v///wEAAAAAAAAAEQAAAAQQAAAgAQAAAAAAAAUQAADwAQAAAAAAABAQAAAwAgAAAAAAAEAQAAAgAgAAAAAAAAIgAAD8AgAAAAAAAAUgAABsAgAAAAAAABAgAAB8AgAAAAAAABEgAACwAgAAAAAAABIgAACQAgAAAAAAABMgAACgAgAAAAAAACAgAADcAgAAAAAAADAgAAAIAwAAAAAAAAHAAABQAgAAAAAAABHAAAAUAAAAAAAAABLAAABgAgAAAAAAAIDwAABAAwAAAAAAABADAIA0AwAAAAAAAAAIAA0AGgAjAGEAAAAAAAACAQAAAAAAAAAFAAAAAAAAAAAAAAAAAAAGPQ==}}

@article{Grim04a,
	author = {G. Grimmet},
	date-added = {2015-12-21 12:44:20 +0000},
	date-modified = {2015-12-21 12:45:28 +0000},
	journal = {The American Mathematical Monthly},
	number = {4},
	pages = {299---307},
	title = {Stochastic Apportionment},
	volume = {111},
	year = {2004}}

@article{Wood97a,
	author = {D. R. Woodall},
	date-added = {2015-09-21 12:09:36 +0000},
	date-modified = {2015-09-21 12:10:58 +0000},
	journal = {Discrete Applied Mathematics},
	keywords = {participation},
	number = {1},
	pages = {81--98},
	title = {Monotonicity of single-seat preferential election rules},
	volume = {77},
	year = {1997},
	bdsk-file-1 = {YnBsaXN0MDDSAQIDBFxyZWxhdGl2ZVBhdGhYYm9va21hcmtfEFYuLi9saWJyYXJ5L1dvb2RhbGwuIE1vbm90b25pY2l0eSBvZiBzaW5nbGUtc2VhdCBwcmVmZXJlbnRpYWwgZWxlY3Rpb24gcnVsZXMgKDE5OTcpLnBkZk8RBgxib29rDAYAAAAABBAwAAAAAAAAAAAAAAAAAAAAAAAAAAAAAAAAAAAAAAAAAAAAAAD8BAAABQAAAAEBAABVc2VycwAAAAwAAAABAQAAamFubmlrcGV0ZXJzBwAAAAEBAABMaWJyYXJ5AAwAAAABAQAAQ2xvdWRTdG9yYWdlJgAAAAEBAABHb29nbGVEcml2ZS1qYW5uaWtwZXRlcnMyNTEyQGdtYWlsLmNvbQAAFwAAAAEBAAAuc2hvcnRjdXQtdGFyZ2V0cy1ieS1pZAAhAAAAAQEAADEyZndkX1RyNzZNNmI5WUxEa3g1S2VVOTQ1dUlac2JiagAAAAgAAAABAQAAcmVzZWFyY2gHAAAAAQEAAGxpYnJhcnkASwAAAAEBAABXb29kYWxsLiBNb25vdG9uaWNpdHkgb2Ygc2luZ2xlLXNlYXQgcHJlZmVyZW50aWFsIGVsZWN0aW9uIHJ1bGVzICgxOTk3KS5wZGYAKAAAAAEGAAAEAAAAFAAAACgAAAA4AAAATAAAAHwAAACcAAAAyAAAANgAAADoAAAACAAAAAQDAAAVXQAAAAAAAAgAAAAEAwAA4PEDAAAAAAAIAAAABAMAAOfxAwAAAAAACAAAAAQDAAA3rV0CAAAAAAgAAAAEAwAAPK1dAgAAAAAIAAAABAMAAJ2tXQIAAAAACAAAAAQDAABM4V0CAAAAAAgAAAAEAwAAVuFdAgAAAAAIAAAABAMAAGPhXQIAAAAACAAAAAQDAADq5l0CAAAAACgAAAABBgAAbAEAAHwBAACMAQAAnAEAAKwBAAC8AQAAzAEAANwBAADsAQAA/AEAAAgAAAAABAAAQcKfFvOUGJQYAAAAAQIAAAEAAAAAAAAADwAAAAAAAAAAAAAAAAAAAAgAAAAEAwAACAAAAAAAAAAEAAAAAwMAAPUBAAAIAAAAAQkAAGZpbGU6Ly8vDAAAAAEBAABNYWNpbnRvc2ggSEQIAAAABAMAAABQoRtzAAAACAAAAAAEAABBxEAH4AAAACQAAAABAQAAREY4RURGNDMtQTcxQS00QTkzLUJFNjYtQjlGODQyMDAzRTQ3GAAAAAECAACBAAAAAQAAAO8TAAABAAAAAAAAAAAAAAABAAAAAQEAAC8AAAAAAAAAAQUAABoAAAABAQAATlNVUkxEb2N1bWVudElkZW50aWZpZXJLZXkAAAQAAAADAwAA+IsBAJgBAAABAgAANGZiY2RlMmQzOTk2MjU1MmEzNWNjNjM5OTEyMGE0NTVlY2Y4MmMxNmYzZWUxNDAxNzg5YWM5MDNkZTE3NWYzYjswMDswMDAwMDAwMDswMDAwMDAwMDswMDAwMDAwMDswMDAwMDAwMDAwMDAwMDIwO2NvbS5hcHBsZS5hcHAtc2FuZGJveC5yZWFkLXdyaXRlOzAxOzAxMDAwMDEyOzAwMDAwMDAwMDI1ZGU2ZWE7NGM7L3VzZXJzL2phbm5pa3BldGVycy9saWJyYXJ5L2Nsb3Vkc3RvcmFnZS9nb29nbGVkcml2ZS1qYW5uaWtwZXRlcnMyNTEyQGdtYWlsLmNvbS8uc2hvcnRjdXQtdGFyZ2V0cy1ieS1pZC8xMmZ3ZF90cjc2bTZiOXlsZGt4NWtldTk0NXVpenNiYmovcmVzZWFyY2gvbGlicmFyeS93b29kYWxsLiBtb25vdG9uaWNpdHkgb2Ygc2luZ2xlLXNlYXQgcHJlZmVyZW50aWFsIGVsZWN0aW9uIHJ1bGVzICgxOTk3KS5wZGYA2AAAAP7///8BAAAAAAAAABEAAAAEEAAAPAEAAAAAAAAFEAAADAIAAAAAAAAQEAAATAIAAAAAAABAEAAAPAIAAAAAAAACIAAAGAMAAAAAAAAFIAAAiAIAAAAAAAAQIAAAmAIAAAAAAAARIAAAzAIAAAAAAAASIAAArAIAAAAAAAATIAAAvAIAAAAAAAAgIAAA+AIAAAAAAAAwIAAAJAMAAAAAAAABwAAAbAIAAAAAAAARwAAAFAAAAAAAAAASwAAAfAIAAAAAAACA8AAAXAMAAAAAAAAsAwCAUAMAAAAAAAAACAANABoAIwB8AAAAAAAAAgEAAAAAAAAABQAAAAAAAAAAAAAAAAAABow=}}

@book{Dumm84a,
	author = {M. Dummett},
	date-added = {2011-06-17 15:10:48 +0200},
	date-modified = {2019-09-29 17:13:37 +0200},
	publisher = {Oxford University Press},
	title = {Voting Procedures},
	year = {1984},
	bdsk-file-1 = {YnBsaXN0MDDSAQIDBFxyZWxhdGl2ZVBhdGhYYm9va21hcmtfEDAuLi9saWJyYXJ5L0R1bW1ldHQuIFZvdGluZyBQcm9jZWR1cmVzICgxOTg0KS5wZGZPEQXEYm9va8QFAAAAAAQQMAAAAAAAAAAAAAAAAAAAAAAAAAAAAAAAAAAAAAAAAAAAAAAAtAQAAAUAAAABAQAAVXNlcnMAAAAMAAAAAQEAAGphbm5pa3BldGVycwcAAAABAQAATGlicmFyeQAMAAAAAQEAAENsb3VkU3RvcmFnZSYAAAABAQAAR29vZ2xlRHJpdmUtamFubmlrcGV0ZXJzMjUxMkBnbWFpbC5jb20AABcAAAABAQAALnNob3J0Y3V0LXRhcmdldHMtYnktaWQAIQAAAAEBAAAxMmZ3ZF9Ucjc2TTZiOVlMRGt4NUtlVTk0NXVJWnNiYmoAAAAIAAAAAQEAAHJlc2VhcmNoBwAAAAEBAABsaWJyYXJ5ACUAAAABAQAARHVtbWV0dC4gVm90aW5nIFByb2NlZHVyZXMgKDE5ODQpLnBkZgAAACgAAAABBgAABAAAABQAAAAoAAAAOAAAAEwAAAB8AAAAnAAAAMgAAADYAAAA6AAAAAgAAAAEAwAAFV0AAAAAAAAIAAAABAMAAODxAwAAAAAACAAAAAQDAADn8QMAAAAAAAgAAAAEAwAAN61dAgAAAAAIAAAABAMAADytXQIAAAAACAAAAAQDAACdrV0CAAAAAAgAAAAEAwAATOFdAgAAAAAIAAAABAMAAFbhXQIAAAAACAAAAAQDAABj4V0CAAAAAAgAAAAEAwAA7ehdAgAAAAAoAAAAAQYAAEgBAABYAQAAaAEAAHgBAACIAQAAmAEAAKgBAAC4AQAAyAEAANgBAAAIAAAAAAQAAEHFJa2EAAAAGAAAAAECAAABAAAAAAAAAA8AAAAAAAAAAAAAAAAAAAAIAAAABAMAAAgAAAAAAAAABAAAAAMDAAD1AQAACAAAAAEJAABmaWxlOi8vLwwAAAABAQAATWFjaW50b3NoIEhECAAAAAQDAAAAUKEbcwAAAAgAAAAABAAAQcRAB+AAAAAkAAAAAQEAAERGOEVERjQzLUE3MUEtNEE5My1CRTY2LUI5Rjg0MjAwM0U0NxgAAAABAgAAgQAAAAEAAADvEwAAAQAAAAAAAAAAAAAAAQAAAAEBAAAvAAAAAAAAAAEFAAAaAAAAAQEAAE5TVVJMRG9jdW1lbnRJZGVudGlmaWVyS2V5AAAEAAAAAwMAANqNAQByAQAAAQIAADQ2NWM1ZjBiYTI5ZTlmZTEzMWQ5ZTA3M2FjNmQ4Nzk4NjY4YWEwNjJjYTEwZGMyYWEyNmE3NmI4OWQxNzY5ZWI7MDA7MDAwMDAwMDA7MDAwMDAwMDA7MDAwMDAwMDA7MDAwMDAwMDAwMDAwMDAyMDtjb20uYXBwbGUuYXBwLXNhbmRib3gucmVhZC13cml0ZTswMTswMTAwMDAxMjswMDAwMDAwMDAyNWRlOGVkOzRjOy91c2Vycy9qYW5uaWtwZXRlcnMvbGlicmFyeS9jbG91ZHN0b3JhZ2UvZ29vZ2xlZHJpdmUtamFubmlrcGV0ZXJzMjUxMkBnbWFpbC5jb20vLnNob3J0Y3V0LXRhcmdldHMtYnktaWQvMTJmd2RfdHI3Nm02Yjl5bGRreDVrZXU5NDV1aXpzYmJqL3Jlc2VhcmNoL2xpYnJhcnkvZHVtbWV0dC4gdm90aW5nIHByb2NlZHVyZXMgKDE5ODQpLnBkZgAAANgAAAD+////AQAAAAAAAAARAAAABBAAABgBAAAAAAAABRAAAOgBAAAAAAAAEBAAACgCAAAAAAAAQBAAABgCAAAAAAAAAiAAAPQCAAAAAAAABSAAAGQCAAAAAAAAECAAAHQCAAAAAAAAESAAAKgCAAAAAAAAEiAAAIgCAAAAAAAAEyAAAJgCAAAAAAAAICAAANQCAAAAAAAAMCAAAAADAAAAAAAAAcAAAEgCAAAAAAAAEcAAABQAAAAAAAAAEsAAAFgCAAAAAAAAgPAAADgDAAAAAAAACAMAgCwDAAAAAAAAAAgADQAaACMAVgAAAAAAAAIBAAAAAAAAAAUAAAAAAAAAAAAAAAAAAAYe}}

@article{BoMo01a,
	author = {A. Bogomolnaia and H. Moulin},
	date-added = {2011-03-25 17:07:54 +0100},
	date-modified = {2015-08-11 13:42:38 +0000},
	journal = {Journal of Economic Theory},
	keywords = {matching; ec-seminar_rsc; PS},
	number = {2},
	pages = {295--328},
	rating = {4},
	title = {A New Solution to the Random Assignment Problem},
	volume = {100},
	year = {2001},
	bdsk-file-1 = {YnBsaXN0MDDSAQIDBFxyZWxhdGl2ZVBhdGhYYm9va21hcmtfEFguLi9saWJyYXJ5L0JvZ29tb2xuYWlhIGV0IGFsLiBBIE5ldyBTb2x1dGlvbiB0byB0aGUgUmFuZG9tIEFzc2lnbm1lbnQgUHJvYmxlbSAoMjAwMSkucGRmTxEGFGJvb2sUBgAAAAAEEDAAAAAAAAAAAAAAAAAAAAAAAAAAAAAAAAAAAAAAAAAAAAAAAAQFAAAFAAAAAQEAAFVzZXJzAAAADAAAAAEBAABqYW5uaWtwZXRlcnMHAAAAAQEAAExpYnJhcnkADAAAAAEBAABDbG91ZFN0b3JhZ2UmAAAAAQEAAEdvb2dsZURyaXZlLWphbm5pa3BldGVyczI1MTJAZ21haWwuY29tAAAXAAAAAQEAAC5zaG9ydGN1dC10YXJnZXRzLWJ5LWlkACEAAAABAQAAMTJmd2RfVHI3Nk02YjlZTERreDVLZVU5NDV1SVpzYmJqAAAACAAAAAEBAAByZXNlYXJjaAcAAAABAQAAbGlicmFyeQBNAAAAAQEAAEJvZ29tb2xuYWlhIGV0IGFsLiBBIE5ldyBTb2x1dGlvbiB0byB0aGUgUmFuZG9tIEFzc2lnbm1lbnQgUHJvYmxlbSAoMjAwMSkucGRmAAAAKAAAAAEGAAAEAAAAFAAAACgAAAA4AAAATAAAAHwAAACcAAAAyAAAANgAAADoAAAACAAAAAQDAAAVXQAAAAAAAAgAAAAEAwAA4PEDAAAAAAAIAAAABAMAAOfxAwAAAAAACAAAAAQDAAA3rV0CAAAAAAgAAAAEAwAAPK1dAgAAAAAIAAAABAMAAJ2tXQIAAAAACAAAAAQDAABM4V0CAAAAAAgAAAAEAwAAVuFdAgAAAAAIAAAABAMAAGPhXQIAAAAACAAAAAQDAAAx5l0CAAAAACgAAAABBgAAcAEAAIABAACQAQAAoAEAALABAADAAQAA0AEAAOABAADwAQAAAAIAAAgAAAAABAAAQcKfHPa5WBAYAAAAAQIAAAEAAAAAAAAADwAAAAAAAAAAAAAAAAAAAAgAAAAEAwAACAAAAAAAAAAEAAAAAwMAAPUBAAAIAAAAAQkAAGZpbGU6Ly8vDAAAAAEBAABNYWNpbnRvc2ggSEQIAAAABAMAAABQoRtzAAAACAAAAAAEAABBxEAH4AAAACQAAAABAQAAREY4RURGNDMtQTcxQS00QTkzLUJFNjYtQjlGODQyMDAzRTQ3GAAAAAECAACBAAAAAQAAAO8TAAABAAAAAAAAAAAAAAABAAAAAQEAAC8AAAAAAAAAAQUAABoAAAABAQAATlNVUkxEb2N1bWVudElkZW50aWZpZXJLZXkAAAQAAAADAwAAP4sBAJoBAAABAgAAOWZjMTNhMGJlMTg4MDI0ZmY2MmYzMjlhMTFmMWVmNjZmYTNkYWRjNDQ4NDNlYjM4Y2E2MDdjZmEwNmY2ZDg1NDswMDswMDAwMDAwMDswMDAwMDAwMDswMDAwMDAwMDswMDAwMDAwMDAwMDAwMDIwO2NvbS5hcHBsZS5hcHAtc2FuZGJveC5yZWFkLXdyaXRlOzAxOzAxMDAwMDEyOzAwMDAwMDAwMDI1ZGU2MzE7NGM7L3VzZXJzL2phbm5pa3BldGVycy9saWJyYXJ5L2Nsb3Vkc3RvcmFnZS9nb29nbGVkcml2ZS1qYW5uaWtwZXRlcnMyNTEyQGdtYWlsLmNvbS8uc2hvcnRjdXQtdGFyZ2V0cy1ieS1pZC8xMmZ3ZF90cjc2bTZiOXlsZGt4NWtldTk0NXVpenNiYmovcmVzZWFyY2gvbGlicmFyeS9ib2dvbW9sbmFpYSBldCBhbC4gYSBuZXcgc29sdXRpb24gdG8gdGhlIHJhbmRvbSBhc3NpZ25tZW50IHByb2JsZW0gKDIwMDEpLnBkZgAAANgAAAD+////AQAAAAAAAAARAAAABBAAAEABAAAAAAAABRAAABACAAAAAAAAEBAAAFACAAAAAAAAQBAAAEACAAAAAAAAAiAAABwDAAAAAAAABSAAAIwCAAAAAAAAECAAAJwCAAAAAAAAESAAANACAAAAAAAAEiAAALACAAAAAAAAEyAAAMACAAAAAAAAICAAAPwCAAAAAAAAMCAAACgDAAAAAAAAAcAAAHACAAAAAAAAEcAAABQAAAAAAAAAEsAAAIACAAAAAAAAgPAAAGADAAAAAAAAMAMAgFQDAAAAAAAAAAgADQAaACMAfgAAAAAAAAIBAAAAAAAAAAUAAAAAAAAAAAAAAAAAAAaW}}
\end{document}